\newtheorem{theorem}{Theorem}[section]
\newtheorem{corollary}[theorem]{Corollary}
\newtheorem{lemma}[theorem]{Lemma}
\newtheorem{proposition}[theorem]{Proposition}
\newtheorem{problem}[theorem]{Problem}
\newtheorem{observation}[theorem]{Observation}
\newtheorem{rem}[theorem]{Remark}
\newcommand\DELETE[1]{}
\begin{document}


\title{On arc-density of pushably $3$-critical oriented graphs}

\author{Tapas Das\thanks{Indian Institute of Technology Dharwad, India. Email: {tapasdas625@gmail.com}. } ~~ Pavan P D\thanks{University of Turku, FI-20014, Turku, Finland. Supported by Research Council of Finland grants 338797 and 358718. Email: {pavanpdevaraj@gmail.com}} ~~ Sagnik Sen\thanks{Indian Institute of Technology Dharwad, India. Email: {sen007isi@gmail.com}. } ~~
S Taruni\thanks{Centro de Modelamiento Matemático (CNRS IRL2807), Universidad de Chile, Santiago, Chile. Supported by Centro de Modelamiento Matemático (CMM) BASAL fund FB210005 for center of excellence from ANID-Chile. Email: {tsridhar@cmm.uchile.cl}. } ~~ }

\date{}

\maketitle

\begin{abstract}
An oriented graph $\overrightarrow{G}$ is pushably $k$-critical if it is not pushably $k$-colorable, but every proper subgraph of 
$\overrightarrow{G}$ is. The main result of this article is that every pushably $3$-critical oriented graph on $n$ vertices, but for four exceptions, has at least $\frac{15n+2}{13}$ arcs, and that this bound is tight. As an application of this result, we show that the class of oriented graphs with maximum average degree strictly less than $\frac{30}{13}$ and girth at least $5$, which includes all oriented planar and projective planar  graphs with girth at least $15$, have pushable chromatic number at most $3$. Moreover, we provide an exhaustive list of pushably $3$-critical graphs with maximum average degree equal to $\frac{30}{13}$ and a pushably $3$-critical orientation of a $4$-cycle to prove the tightness of our bound with respect to both maximum average degree and girth. 
We also show that these classes of oriented graphs admit a homomorphism to an oriented planar graph on six vertices (an orientation of $K_{2,2,2}$)
which (tightly) improves a result due to Borodin \textit{et al.} [Discrete Mathematics 1998]. 
Furthermore, for these classes of oriented graphs, we prove that the 
$2$-dipath $L(p,q)$ and the oriented $L(p,q)$ spans are upper bounded by $2p+3q$ for all $q \leq p$. All these implications improve previously known results. 
\end{abstract}

\section{Introduction}
An \textit{oriented graph} is a directed graph without any loops or bidirectional arcs. In this article, we assume that all oriented graphs have a simple graph as their underlying graphs, unless otherwise stated. Given an oriented graph $\overrightarrow{G}$, $V(\overrightarrow{G})$ denotes the set of vertices, $A(\overrightarrow{G})$ denotes the set of arcs, and $G$ denotes its underlying simple graph.

A \textit{homomorphism} $f$ of an oriented graph $\overrightarrow{G}$ to another oriented graph $\overrightarrow{H}$ is a vertex mapping $f: V(\overrightarrow{G}) \to V(\overrightarrow{H})$ such that if $uv$ is an arc of $\overrightarrow{G}$, then $f(u)f(v)$ is an arc of $\overrightarrow{H}$. If $\overrightarrow{G}$ admits a homomorphism to $\overrightarrow{H}$, then we say that $\overrightarrow{G}$ is \textit{$\overrightarrow{H}$-colorable}, and denote it by $\overrightarrow{G} \to \overrightarrow{H}$. The vertices of $\overrightarrow{H}$ are called \textit{colors} in this context. 
In particular, if an oriented graph $\overrightarrow{G}$ is 
$\overrightarrow{H}$-colorable for some $\overrightarrow{H}$ on $k$ vertices via a homomorphism $f$ of $\overrightarrow{G}$ to $\overrightarrow{H}$, then
$\overrightarrow{G}$ is said to be \textit{oriented $k$-colorable}, and 
the function $f$ is called its \textit{oriented $k$-coloring}. 
The \textit{oriented chromatic number} of $\overrightarrow{G}$, 
denoted by $\chi_o(\overrightarrow{G})$, is the minimum $k$ such that $\overrightarrow{G}$ is oriented $k$-colorable.

To \textit{push} a vertex $v$ of an oriented graph $\overrightarrow{G}$ is to reverse the direction of the arcs incident to $v$. If instead of a vertex, a set of vertices $S \subseteq V(\overrightarrow{G})$ is pushed, then the so-obtained modified oriented graph is called a \textit{push equivalent} oriented graph $\overrightarrow{G}^S$ of $\overrightarrow{G}$, and the equivalence relation is expressed as $\overrightarrow{G} \equiv_p \overrightarrow{G}^S$. 
Given an oriented graph $\overrightarrow{G}$, if it is possible to modify it 
by pushing a vertex subset $S \subseteq V(\overrightarrow{G})$ in such a way that 
there exists a homomorphism $f$ of $\overrightarrow{G}^S$ to $\overrightarrow{H}$, then we say that $f$ is a pushable homomorphism of $\overrightarrow{G}$ to 
$\overrightarrow{H}$. Moreover, we call $\overrightarrow{G}$ as 
\textit{pushably $\overrightarrow{H}$-colorable}, and denote it by $\overrightarrow{G} \xrightarrow{push} \overrightarrow{H}$.  In particular, if an oriented graph $\overrightarrow{G}$ is 
pushably $\overrightarrow{H}$-colorable for some $\overrightarrow{H}$ on $k$ vertices via a pushable homomorphism $f$ of 
$\overrightarrow{G}$ to $\overrightarrow{H}$, then
$\overrightarrow{G}$ is said to be \textit{pushably $k$-colorable}, and 
the function $f$ is called its \textit{pushable $k$-coloring}. 
The \textit{pushable chromatic number} of $\overrightarrow{G}$, 
denoted by $\chi_p(\overrightarrow{G})$, is the minimum $k$ such that 
$\overrightarrow{G}$ is pushably $k$-colorable. An oriented graph $\overrightarrow{G}$ is \textit{oriented $\overrightarrow{H}$-critical} (resp., \textit{oriented $k$-critical}, 
\textit{pushably $\overrightarrow{H}$-critical},
\textit{pushably $k$-critical})
if $\overrightarrow{G}$ is not 
$\overrightarrow{H}$-colorable 
(resp., oriented $k$-colorable,
pushably $\overrightarrow{H}$-colorable, 
pushably $k$-colorable) 
but every proper subgraph of 
$\overrightarrow{G}$ is.  Our main result is the following:

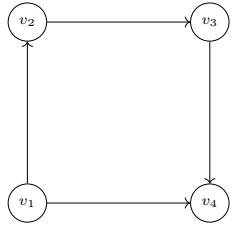
\begin{figure}[]
			
				\centering
                      
				\scalebox{.8}{
					
					\begin{tikzpicture}
						
						\node[draw, circle, minimum size=.05cm] (s1) at (0,0){\scriptsize$v_1$};
						\node[draw, circle, minimum size=.05cm] (s2) at (3,0) {\scriptsize$v_4$};
						\node[draw, circle, minimum size=.05cm] (s3) at  (3,3) {\scriptsize$v_3$};
                        \node[draw, circle, minimum size=.05cm] (s4) at  (0,3) {\scriptsize$v_2$};  
						
						\draw[->] (s1) -- (s2);
						\draw[->] (s3) -- (s2);
						\draw[<-] (s3) -- (s4);
                        \draw[->] (s1) -- (s4);
                   	
				\end{tikzpicture}}
\caption{The oriented graph $\overrightarrow{C}_{-4}$, a particular orientation of the $4$-cycle.}
\label{fig:clique c_-4}
		\end{figure}

        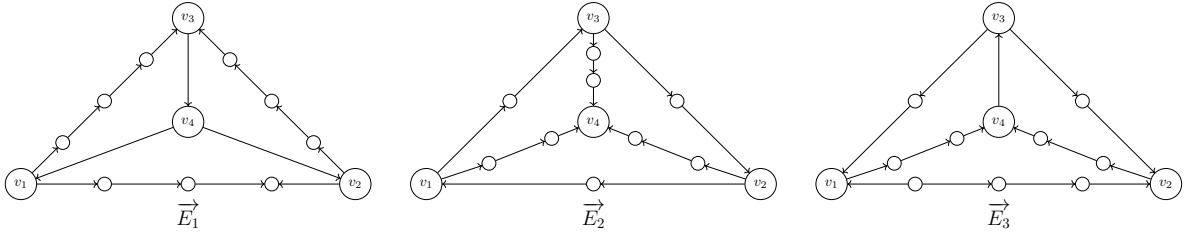
\begin{figure}[t]
    \begin{tabularx}{\textwidth}{XXX}
    
    {\centering
    \scalebox{.55}{
        \begin{tikzpicture}
        \node[draw, circle, minimum size=.75cm] (v1) at (0,0) {$v_1$};
        \node[draw, circle, minimum size=.75cm] (v2) at (8,0) {$v_2$};
        \node[draw, circle, minimum size=.75cm] (v3) at (4,4) {$v_3$};
        \node[draw, circle, minimum size=.75cm] (v4) at (4,1.5) {$v_4$};
        \node[draw, circle, minimum size=.05cm] (v5) at (1,1) {};
        \node[draw, circle, minimum size=.05cm] (v6) at (2,2) {};
        \node[draw, circle, minimum size=.05cm] (v7) at (3,3) {};
        \node[draw, circle, minimum size=.05cm] (v8) at (5,3) {};
        \node[draw, circle, minimum size=.05cm] (v9) at (6,2) {};
        \node[draw, circle, minimum size=.05cm] (v10) at (7,1) {};
        \node[draw, circle, minimum size=.05cm] (v11) at (2,0) {};
        \node[draw, circle, minimum size=.05cm] (v12) at (4,0) {};
        \node[draw, circle, minimum size=.05cm] (v13) at (6,0) {};
        \node at (4,-.75){\Large$\overrightarrow{E_1}$};
        \draw[->] (v1) -- (v5);
        \draw[->] (v5) -- (v6);
        \draw[->] (v6) -- (v7);
        \draw[->] (v7) -- (v3);
        \draw[<-] (v3) -- (v8);
        \draw[<-] (v8) -- (v9);
        \draw[<-] (v9) -- (v10);
        \draw[<-] (v10) -- (v2);
        \draw[<-] (v11) -- (v1);
        \draw[<-] (v12) -- (v11);
        \draw[<-] (v13) -- (v12);
        \draw[->] (v2) -- (v13);
        \draw[->] (v4) -- (v1);
        \draw[->] (v3) -- (v4);
        \draw[->] (v4) -- (v2);
        \end{tikzpicture}
    }}



    &

    {\centering
    \scalebox{.55}{
        \begin{tikzpicture}
        \node[draw, circle, minimum size=.75cm] (v1) at (0,0) {$v_1$};
        \node[draw, circle, minimum size=.75cm] (v2) at (8,0) {$v_2$};
        \node[draw, circle, minimum size=.75cm] (v3) at (4,4) {$v_3$};
        \node[draw, circle, minimum size=.75cm] (v4) at (4,1.5) {$v_4$};
        \node[draw, circle, minimum size=.05cm] (v14) at (4,3.15) {};
        \node[draw, circle, minimum size=.05cm] (v15) at (4,2.5) {};
        \node[draw, circle, minimum size=.05cm] (v7) at (4,0) {};
        \node[draw, circle, minimum size=.05cm] (v5) at (2,2) {};
        \node[draw, circle, minimum size=.05cm] (v6) at (6,2) {};
        \node[draw, circle, minimum size=.05cm] (v8) at (1.5,0.5) {};
        \node[draw, circle, minimum size=.05cm] (v9) at (3,1.1) {};
        \node[draw, circle, minimum size=.05cm] (v10) at (5,1.1) {};
        \node[draw, circle, minimum size=.05cm] (v11) at (6.5,.5) {};
        \node at (4,-.75){\Large$\overrightarrow{E_2}$};
        \draw[->] (v1) -- (v5);
        \draw[->] (v5) -- (v3);
        \draw[->] (v3) -- (v6);
        \draw[->] (v6) -- (v2);
        \draw[->] (v2) -- (v7);
        \draw[->] (v7) -- (v1);
        \draw[->] (v1) -- (v8);
        \draw[->] (v8) -- (v9);
        \draw[->] (v9) -- (v4);
        \draw[->] (v2) -- (v11);
        \draw[<-] (v10) -- (v11);
        \draw[->] (v10) -- (v4);
        \draw[<-] (v4) -- (v15);
        \draw[<-] (v15) -- (v14);
        \draw[<-] (v14) -- (v3);
        \end{tikzpicture}
    }}



    &

    {\centering
    \scalebox{.55}{
        \begin{tikzpicture}
        \node[draw, circle, minimum size=.75cm] (v1) at (0,0) {$v_1$};
        \node[draw, circle, minimum size=.75cm] (v2) at (8,0) {$v_2$};
        \node[draw, circle, minimum size=.75cm] (v3) at (4,4) {$v_3$};
        \node[draw, circle, minimum size=.75cm] (v4) at (4,1.5) {$v_4$};
        \node[draw, circle, minimum size=.05cm] (v5) at (2,2) {};
        \node[draw, circle, minimum size=.05cm] (v6) at (6,2) {};
        \node[draw, circle, minimum size=.05cm] (v7) at (4,0) {};
        \node[draw, circle, minimum size=.05cm] (v12) at (2,0) {};
        \node[draw, circle, minimum size=.05cm] (v13) at (6,0) {};
        \node[draw, circle, minimum size=.05cm] (v8) at (1.5,0.5) {};
        \node[draw, circle, minimum size=.05cm] (v9) at (3,1.1) {};
        \node[draw, circle, minimum size=.05cm] (v10) at (5,1.1) {};
        \node[draw, circle, minimum size=.05cm] (v11) at (6.5,.5) {};
        \node at (4,-.75){\Large$\overrightarrow{E_3}$};
        \draw[->] (v5) -- (v1);
        \draw[->] (v3) -- (v5);
        \draw[->] (v3) -- (v6);
        \draw[->] (v6) -- (v2);
        \draw[->] (v13) -- (v2);
        \draw[->] (v7) -- (v13);
        \draw[->] (v12) -- (v7);
        \draw[->] (v12) -- (v1);
        \draw[->] (v1) -- (v8);
        \draw[->] (v8) -- (v9);
        \draw[->] (v9) -- (v4);
        \draw[->] (v2) -- (v11);
        \draw[<-] (v10) -- (v11);
        \draw[->] (v10) -- (v4);
        \draw[->] (v4) -- (v3);
        \end{tikzpicture}
    }}



    \end{tabularx}
    \caption{An exhaustive (up to push equivalence) list  of pushably $3$-critical oriented graphs with $13$ vertices and $15$ arcs and maximum average degree equal to $\frac{30}{13}$.}
    \label{fig:Ei}
\end{figure}

\begin{theorem}\label{th push 3-critical density}
    Let $\overrightarrow{G}$ be a pushably $3$-critical oriented graph. 
    If $G \not\equiv_p Z$ where $Z \in \{\overrightarrow{C}_{-4},$ $\overrightarrow{E}_1,$ 
    $\overrightarrow{E}_2,$ $\overrightarrow{E}_3 \}$  (see Fig.~\ref{fig:clique c_-4} and Fig.~\ref{fig:Ei}), then 
    $$|A(\overrightarrow{G})| \geq \frac{15|V(\overrightarrow{G})| + 2}{13}.$$
    Moreover, this bound is tight. 
\end{theorem}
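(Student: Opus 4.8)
The overall strategy is the standard discharging method for critical graphs, combined with a structural analysis of reducible configurations. Let me sketch the key steps.

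First, I would establish structural lemmas about pushably $3$-critical oriented graphs via reducibility arguments. The key observation is that pushable $3$-colorability should be characterized by a homomorphism to a specific small target on $3$ vertices (likely an orientation related to $\overrightarrow{C}_3$ or the relevant "pushable clique"), and the push operation can be exploited at low-degree vertices. I would show: (i) $\overrightarrow{G}$ has minimum degree at least $2$; (ii) there are constraints on adjacent degree-$2$ vertices — e.g., no long "threads" (paths of degree-$2$ vertices) beyond a certain length, and threads of a given length must attach to vertices of sufficiently high degree. The crucial reducibility claims would be of the form: if a configuration $C$ appears, then a pushable $3$-coloring of $\overrightarrow{G} - C$ (or a suitable contraction) extends to $\overrightarrow{G}$, contradicting criticality. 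This is where the exceptional graphs $\overrightarrow{C}_{-4}, \overrightarrow{E}_1, \overrightarrow{E}_2, \overrightarrow{E}_3$ enter: they are precisely the small cases where the reducibility arguments fail or where the density bound is not yet met, so they must be excluded explicitly (and checked by hand or computer to be genuinely pushably $3$-critical).

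Second, with the structural lemmas in hand, I would set up a discharging argument. Assign to each vertex $v$ the initial charge $\mu(v) = d(v) - \frac{30}{13}$; the inequality to prove, $|A| \geq \frac{15n+2}{13}$, is equivalent to $\sum_v d(v) = 2|A| \geq \frac{30n+4}{13}$, i.e. $\sum_v \mu(v) \geq \frac{4}{13} > 0$. So it suffices to show the total charge is strictly positive after excluding the four sporadic graphs. Vertices of degree $\geq 3$ have positive charge $\geq \frac{9}{13}$, while degree-$2$ vertices have charge $-\frac{4}{13}$. I would design discharging rules sending charge from high-degree vertices (and possibly from $3^+$-vertices along threads) to the degree-$2$ vertices they "support," using the thread-length bounds from the structural lemmas to guarantee that every degree-$2$ vertex ends with charge $\geq 0$ and that enough surplus remains at branch vertices to account for the extra $\frac{4}{13}$. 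Tracking this surplus carefully — ensuring it is not fully consumed — is what pins down the exact constant $\frac{15n+2}{13}$ rather than merely $\frac{15n}{13}$.

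Third, for tightness, I would exhibit an explicit infinite family of pushably $3$-critical oriented graphs meeting the bound with equality (or asymptotically), presumably built by stringing together copies of gadgets resembling $\overrightarrow{E}_1, \overrightarrow{E}_2, \overrightarrow{E}_3$ (each on $13$ vertices with $15$ arcs) so that the maximum average degree stays at $\frac{30}{13}$; Figure \ref{fig:Ei} already provides the base blocks. I expect the main obstacle to be step one: correctly identifying the complete list of reducible configurations and proving their reducibility. Pushable colorings are subtle because one may push arbitrary vertex subsets, so the case analysis for extending a coloring across a thread or a low-degree vertex — keeping track of which of the (up to) $2^{|S|}$ push-equivalent configurations admits the desired homomorphism — is delicate, and it is exactly here that the four exceptional graphs must be teased out. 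Verifying that $\overrightarrow{C}_{-4}$ and the $\overrightarrow{E}_i$ are themselves genuinely pushably $3$-critical (not just bound-violating) is a finite check but must be done with care.
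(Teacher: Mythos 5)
Your outline captures the surface shape of the paper's argument (reducible configurations, discharging, explicit treatment of the four exceptions, an extremal example), but it is missing the ingredient that actually carries the proof: the potential method. The paper does not argue directly that every critical graph has total charge at least $\frac{4}{13}$ under local rules. Instead it defines $\rho(\overrightarrow{G}) = 15|V(\overrightarrow{G})| - 13|A(\overrightarrow{G})|$, takes a minimal counterexample $\overrightarrow{M}$ with $\rho(\overrightarrow{M}) \geq -1$, and proves a Gap Lemma controlling $\rho(\overrightarrow{H})$ for \emph{every} subgraph $\overrightarrow{H}$ of $\overrightarrow{M}$. This is then used in a series of non-local structural lemmas (Lemmas~\ref{lem 3-3}--\ref{lem 3_222} and beyond) whose proofs identify vertices of $\overrightarrow{M}$, locate a smaller pushably $3$-critical subgraph in the quotient, and derive a contradiction from its potential. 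It is exactly in these identification arguments that the exceptional graphs genuinely enter: $\overrightarrow{C}_{-4}$ and the $\overrightarrow{E}_i$ are the pushably $3$-critical graphs of potential $8$ and $0$ that may appear as the smaller critical subgraph (one such case even requires a computer check, Lemma~\ref{lem Ei+3-2}), not ``small cases where the reducibility arguments fail'' as you describe. Your plan, relying only on locally reducible configurations (thread lengths, minimum degree, chain attachments) plus charge $d(v)-\frac{30}{13}$, has no mechanism to establish these global structural facts, and without them the discharging cannot be closed; this is also why the additive constant $+2$ is obtained in the paper as a contradiction ($\sum ch = -2\rho(\overrightarrow{M}) \leq 2$ versus a forced total of at least $3$) rather than by tracking a surplus of $\frac{4}{13}$ over local rules.

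Two smaller points. First, the discharging in the paper is run on the minimal counterexample to the potential statement, not on an arbitrary critical graph, so the reducibility lemmas themselves are allowed to invoke minimality with respect to $|V|+|A|$ among potential-violating critical graphs; your setup does not make this induction explicit. Second, tightness needs only one example: the paper exhibits a single pushably $3$-critical oriented graph on $12$ vertices and $14$ arcs (so $14 = \frac{15\cdot 12 + 2}{13}$). Your proposal of stringing together copies of the $\overrightarrow{E}_i$ is both more than is needed and problematic, since gluing critical graphs does not in general yield a critical graph, and the $\overrightarrow{E}_i$ themselves violate the bound (which is why they are exceptions).
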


\subsection{Preliminaries}
Given a vertex $u$ of an oriented graph $\overrightarrow{G}$, let $N^-(u) = \{ v \in V(\overrightarrow{G}) : vu \in A(\overrightarrow{G}) \}$ and $N^+(u) = \{ v \in V(\overrightarrow{G}) : uv \in A(\overrightarrow{G}) \}$ denote its \textit{in-neighborhood} and \textit{out-neighborhood}, respectively. The \emph{in-degree} and \emph{out-degree} of $v$ are $deg^{-}(v) = |N^-(v)|$ and $deg^{+}(v) = |N^+(v)|$, respectively. The \emph{degree} $deg(v)$ of $v$ is given by $deg(v) = deg^{-}(v) + deg^{+}(v)$.  For a subset $X$ of vertices in $\overrightarrow{G}$, $\overrightarrow{G}[X]$ denotes the subgraph of $\overrightarrow{G}$ induced by $X$. For any two disjoint vertex subsets $X,Y \subseteq V(\overrightarrow{G})$, $[X,Y]$ and $A[X,Y]$ denotes the set and number of of edges (arcs) between $X$ and $Y$ in $\overrightarrow{G}$, respectively. Given an oriented graph $\overrightarrow{G}$ with the set of vertices $V(\overrightarrow{G}) = \{v_1, v_2, \ldots, v_n\}$,
its \textit{anti-twinned oriented graph} 
$AT(\overrightarrow{G})$ has the set of vertices 
$V(AT(\overrightarrow{G}))=\{v_1, v_2, \ldots, v_n\} \cup \{v'_1, v'_2, \ldots, v'_n\}$
and the set of arcs 
$A(AT(\overrightarrow{G}))=\{v_iv_j, v'_iv'_j, v'_jv_i, v_jv'_i | v_iv_j \in A(\overrightarrow{G})\}.$ 
That is, $AT(\overrightarrow{G})$ is obtained by taking two copies of $\overrightarrow{G}$, then adding arcs in between them in such a way that the vertices corresponding to each other in the two copies become twins (have the same in-neighborhood and out-neighborhood), and then pushing all vertices of the second copy of $\overrightarrow{G}$.  
It is important to note that homomorphisms and pushable homomorphisms are 
closely related through the anti-twinned oriented graph construction. 

\begin{proposition}[\cite{push}]\label{prop: basic KM}
Let $\overrightarrow{G}$ and $\overrightarrow{H}$ be two oriented graphs. Then the following holds.
\begin{enumerate}[(i)]
    \item We have $\overrightarrow{G} \xrightarrow{push} \overrightarrow{H}$ if and only if $\overrightarrow{G} \rightarrow AT(\overrightarrow{H})$. 
    \item If $\overrightarrow{G} \xrightarrow{push} \overrightarrow{H}$, 
    then for any $\overrightarrow{H}' \equiv_{p} \overrightarrow{H}$ there exists a $\overrightarrow{G}' \equiv_p \overrightarrow{G}$ satisfying 
    $\overrightarrow{G}' \rightarrow \overrightarrow{H}'$. 
    \item We have $\chi_p(\overrightarrow{G}) \leq \chi_o(\overrightarrow{G}) \leq 2 \chi_p(\overrightarrow{G})$. 
\end{enumerate} 
\end{proposition}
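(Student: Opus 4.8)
All three statements follow from a single bookkeeping fact about how pushing interacts with homomorphisms, so the plan is to isolate that fact first and then read off (i)--(iii) from it. First I would prove the following transfer lemma: if $g\colon\overrightarrow{A}\to\overrightarrow{B}$ is a homomorphism of oriented graphs and $T\subseteq V(\overrightarrow{B})$, then the same vertex map $g$ is also a homomorphism $\overrightarrow{A}^{\,g^{-1}(T)}\to\overrightarrow{B}^{\,T}$. The proof is a two-case check on an arc $uw$ of $\overrightarrow{A}^{\,g^{-1}(T)}$. If $0$ or $2$ of the endpoints $u,w$ lie in $g^{-1}(T)$ --- equivalently, $0$ or $2$ of $g(u),g(w)$ lie in $T$ --- then $uw$ is also an arc of $\overrightarrow{A}$, hence $g(u)g(w)\in A(\overrightarrow{B})$, and this arc is likewise left unpushed in $\overrightarrow{B}^{\,T}$. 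If exactly one of $u,w$ lies in $g^{-1}(T)$, then $wu\in A(\overrightarrow{A})$, so $g(w)g(u)\in A(\overrightarrow{B})$, and pushing exactly one of $g(u),g(w)$ turns this into $g(u)g(w)\in A(\overrightarrow{B}^{\,T})$. (Pushing never creates loops or digons, so we remain in the class of oriented graphs throughout.)

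Next I would deduce part (i). For the ``if'' direction, given a homomorphism $f\colon\overrightarrow{G}\to AT(\overrightarrow{H})$, let $S=f^{-1}(\{v'_1,\dots,v'_n\})$ be the set of vertices of $\overrightarrow{G}$ sent to the primed copy. Applying the transfer lemma with $T=\{v'_1,\dots,v'_n\}$ produces a homomorphism $\overrightarrow{G}^{\,S}\to AT(\overrightarrow{H})^{\,T}$; inspecting the four arc-families in the definition of $AT$ shows that $AT(\overrightarrow{H})^{\,T}$ is exactly the graph obtained from two disjoint copies of $\overrightarrow{H}$ by making each $v_i$ and $v'_i$ twins, so $v_i,v'_i\mapsto v_i$ is a homomorphism onto $\overrightarrow{H}$, and composing gives $\overrightarrow{G}^{\,S}\to\overrightarrow{H}$, i.e.\ $\overrightarrow{G}\xrightarrow{push}\overrightarrow{H}$. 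For the ``only if'' direction, given $S\subseteq V(\overrightarrow{G})$ and a homomorphism $g\colon\overrightarrow{G}^{\,S}\to\overrightarrow{H}$, I would set $f(u)=v_i$ when $u\notin S$ and $g(u)=v_i$, and $f(u)=v'_i$ when $u\in S$ and $g(u)=v_i$, and then verify that $f\colon\overrightarrow{G}\to AT(\overrightarrow{H})$ is a homomorphism by examining an arc $uw\in A(\overrightarrow{G})$ in the four cases according to which of $u,w$ lie in $S$; in each case exactly one of the four arc-families $v_iv_j,\ v'_iv'_j,\ v'_jv_i,\ v_jv'_i$ supplies the required arc of $AT(\overrightarrow{H})$.

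Finally I would handle parts (ii) and (iii). For (ii), write $\overrightarrow{H}'=\overrightarrow{H}^{\,T}$ and let $g\colon\overrightarrow{G}^{\,S}\to\overrightarrow{H}$ witness $\overrightarrow{G}\xrightarrow{push}\overrightarrow{H}$; the transfer lemma gives a homomorphism $(\overrightarrow{G}^{\,S})^{\,g^{-1}(T)}\to\overrightarrow{H}^{\,T}=\overrightarrow{H}'$, and since $(\overrightarrow{G}^{\,S})^{\,g^{-1}(T)}=\overrightarrow{G}^{\,S\triangle g^{-1}(T)}\equiv_p\overrightarrow{G}$, the graph $\overrightarrow{G}'=\overrightarrow{G}^{\,S\triangle g^{-1}(T)}$ is as required. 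For (iii), $\chi_p(\overrightarrow{G})\le\chi_o(\overrightarrow{G})$ holds because an oriented $k$-coloring is a pushable $k$-coloring with $S=\emptyset$, and $\chi_o(\overrightarrow{G})\le 2\chi_p(\overrightarrow{G})$ holds because if $\overrightarrow{G}\xrightarrow{push}\overrightarrow{H}$ with $|V(\overrightarrow{H})|=\chi_p(\overrightarrow{G})$ then part (i) gives $\overrightarrow{G}\to AT(\overrightarrow{H})$, whose vertex set has size $2\,|V(\overrightarrow{H})|$.

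I do not expect a genuine obstacle here: the entire content is the transfer lemma. The only thing demanding care is the bookkeeping --- pairing ``the pushed set $S$'' with ``the preimage of the primed copy'' and getting the orientation right in each of the four arc-cases in (i), and correctly tracking the symmetric difference $S\triangle g^{-1}(T)$ in (ii).
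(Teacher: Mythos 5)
Your proof is correct. Note that the paper itself gives no proof of this proposition: it is quoted from Klostermeyer and MacGillivray~\cite{push}, so there is no in-paper argument to compare against. Your transfer lemma (a homomorphism $g\colon\overrightarrow{A}\to\overrightarrow{B}$ remains a homomorphism $\overrightarrow{A}^{\,g^{-1}(T)}\to\overrightarrow{B}^{\,T}$, checked by parity of pushed endpoints) is exactly the right mechanism, and it cleanly formalizes the paper's informal description of $AT(\overrightarrow{H})$ as ``two twin copies of $\overrightarrow{H}$ with the second copy pushed'': your computation that $AT(\overrightarrow{H})^{\,T}$ retracts onto $\overrightarrow{H}$ via $v_i,v'_i\mapsto v_i$ is the ``if'' direction of (i), the four-case check with $f(u)=v_{i}$ or $v'_{i}$ according to membership in $S$ is the standard ``only if'' direction, and the identity $(\overrightarrow{G}^{\,S})^{\,U}=\overrightarrow{G}^{\,S\triangle U}$ that you use in (ii) is the correct bookkeeping. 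Part (iii) follows as you say, the upper bound being an immediate consequence of (i) since $|V(AT(\overrightarrow{H}))|=2|V(\overrightarrow{H})|$; no gaps.
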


According to the definition of pushable chromatic number, notice that an oriented graph $\overrightarrow{G}$ 
admits a pushable $k$-coloring if and only if 
$\overrightarrow{G}$ admits a pushable homomorphism to some orientation of 
$K_k$. Since the two distinct orientations of $K_3$ (that is, the directed $3$-cycle and the transitive $3$-cycle) are push equivalent, 
Proposition~\ref{prop: basic KM}(i) and (ii) imply the following observations. 

 \begin{figure}[]
			\begin{tabularx}{\textwidth}{ X X}
				\centering
                      
				\scalebox{.6}{
					
					\begin{tikzpicture}
						
						\node[draw, circle, minimum size=.05cm] (s1) at (-2.75,-1.25){$\bar{2}$};
						\node[draw, circle, minimum size=.05cm] (s2) at (4.75,-1.25) {$\bar{1}$};
						\node[draw, circle, minimum size=.05cm] (s3) at  (1,3.5) {$\bar{0}$};

						\node at (1,-1.75){\Large$\overrightarrow{C_3}$};
						
						\draw[<-] (s1) -- (s2);
						\draw[->] (s3) -- (s2);
						\draw[<-] (s3) -- (s1);

				\end{tikzpicture}}
				
				\label{E1}
				
				&
				
				\scalebox{0.6}{
						
		\begin{tikzpicture}
			
			\node[draw, circle, minimum size=.75cm] (s1) at (2,1) {$\bar{2'}$};
			\node[draw, circle, minimum size=.75cm] (s2) at (0,1){$\bar{1'}$};
			
			\node[draw, circle, minimum size=.75cm] (s3) at  (1,-.25) {$\bar{0'}$};
			
			\node[draw, circle, minimum size=.75cm] (s4) at (-2.75,-1.25) {$\bar{2}$};
			
			\node[draw, circle, minimum size=.75cm] (s5) at (4.75,-1.25){$\bar{1}$};
			
			\node[draw, circle, minimum size=.75cm] (s6) at  (1,3.5) {$\bar{0}$};

            \node at (1,-1.75){\Large$AT(\overrightarrow{C_3})$};
			
			\draw[<-] (s1) -- (s2);
			\draw[<-] (s2) -- (s3);
			\draw[<-] (s3) -- (s1);
			\draw[<-] (s4) -- (s5);
			\draw[<-] (s5) -- (s6);
			\draw[<-] (s6) -- (s4);
			\draw[->] (s6) -- (s1);
			\draw[->] (s3) -- (s4);
			\draw[->] (s1) -- (s5);
			\draw[->] (s4) -- (s2);
			\draw[->] (s2) -- (s6);
			\draw[->] (s5) -- (s3);

	\end{tikzpicture}}
				
				\label{E1}

			\end{tabularx}
\caption{The directed $3$-cycle $\overrightarrow{C}_3$ and its anti-twinned oriented graph $AT(\overrightarrow{C}_3)$.}
\label{anti}
		\end{figure}
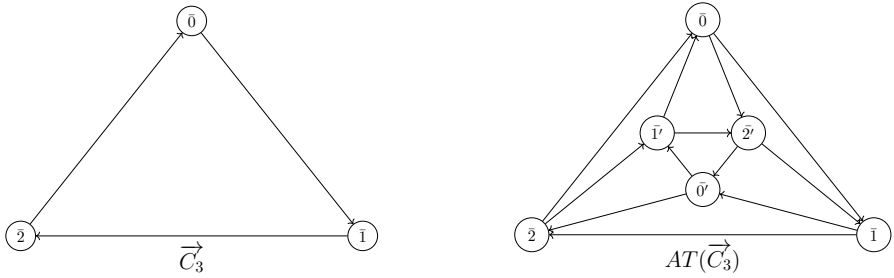

\begin{observation}\label{obs C3=3}
    Let $\overrightarrow{G}$ be an oriented graph, $\overrightarrow{C}_3$ denote the directed $3$-cycle, and $AT(\overrightarrow{C}_3)$ denote the anti-twinned oriented graph of $\overrightarrow{C}_3$ 
    (see Fig.~\ref{anti}). Then the following are equivalent. 
    \begin{enumerate}[(i)]
        \item $\overrightarrow{G}$ is pushably $3$-colorable,
        \item $\overrightarrow{G}$ is pushably $\overrightarrow{C}_3$-colorable,
        \item $\overrightarrow{G}$ is $AT(\overrightarrow{C}_3)$-colorable.
    \end{enumerate}
\end{observation}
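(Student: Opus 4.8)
The plan is to read off all three equivalences from Proposition~\ref{prop: basic KM}, after one elementary observation about orientations of $K_3$. Recall that, up to isomorphism, $K_3$ has exactly two orientations: the directed $3$-cycle $\overrightarrow{C}_3$ and the transitive tournament $\overrightarrow{TT}_3$ (the one having a source, a sink, and a third vertex of in-degree and out-degree both $1$). Moreover these two orientations are push equivalent: pushing in $\overrightarrow{TT}_3$ the vertex of in-degree and out-degree $1$ produces $\overrightarrow{C}_3$. This is the only combinatorial input needed beyond the cited proposition.

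First I would establish (ii)~$\Leftrightarrow$~(iii). This is immediate from Proposition~\ref{prop: basic KM}(i) applied with $\overrightarrow{H}=\overrightarrow{C}_3$: we have $\overrightarrow{G}\xrightarrow{push}\overrightarrow{C}_3$ if and only if $\overrightarrow{G}\to AT(\overrightarrow{C}_3)$, which is exactly the statement that (ii) and (iii) coincide. Next I would prove (i)~$\Leftrightarrow$~(ii). By the definition of the pushable chromatic number recalled just before the statement, $\overrightarrow{G}$ is pushably $3$-colorable if and only if it is pushably $\overrightarrow{T}$-colorable for some orientation $\overrightarrow{T}$ of $K_3$, i.e.\ $\overrightarrow{G}\xrightarrow{push}\overrightarrow{C}_3$ or $\overrightarrow{G}\xrightarrow{push}\overrightarrow{TT}_3$. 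The implication (ii)~$\Rightarrow$~(i) is then trivial, and for (i)~$\Rightarrow$~(ii) it suffices to show that $\overrightarrow{G}\xrightarrow{push}\overrightarrow{TT}_3$ forces $\overrightarrow{G}\xrightarrow{push}\overrightarrow{C}_3$. Since $\overrightarrow{TT}_3\equiv_p\overrightarrow{C}_3$, Proposition~\ref{prop: basic KM}(ii) (with $\overrightarrow{H}=\overrightarrow{TT}_3$, $\overrightarrow{H}'=\overrightarrow{C}_3$) gives a $\overrightarrow{G}'\equiv_p\overrightarrow{G}$ with $\overrightarrow{G}'\to\overrightarrow{C}_3$, and if $\overrightarrow{G}'=\overrightarrow{G}^S$ then pushing $\overrightarrow{G}$ along $S$ realizes $\overrightarrow{G}\xrightarrow{push}\overrightarrow{C}_3$, as wanted. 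Combining the two chains yields (i)~$\Leftrightarrow$~(ii)~$\Leftrightarrow$~(iii).

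I do not expect a genuine obstacle here: the content is entirely packaged in Proposition~\ref{prop: basic KM}. The only points requiring a moment's care are (a) correctly identifying that $K_3$ admits precisely the two orientations $\overrightarrow{C}_3$ and $\overrightarrow{TT}_3$ and checking that pushing a single vertex converts one into the other, so that $\overrightarrow{C}_3$ may be singled out in (ii) and (iii) without loss of generality, and (b) the bookkeeping in the last step, namely that the subset $S$ witnessing $\overrightarrow{G}'\equiv_p\overrightarrow{G}$ supplied by Proposition~\ref{prop: basic KM}(ii) is exactly the set along which one pushes to obtain the pushable homomorphism $\overrightarrow{G}\xrightarrow{push}\overrightarrow{C}_3$.
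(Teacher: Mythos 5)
Your proposal is correct and follows essentially the same route as the paper, which derives the observation in one line from the fact that the two orientations of $K_3$ are push equivalent together with Proposition~\ref{prop: basic KM}(i) and (ii). You merely spell out the same two ingredients in more detail (the $\overrightarrow{TT}_3$ case via Proposition~\ref{prop: basic KM}(ii), and (ii)$\Leftrightarrow$(iii) via Proposition~\ref{prop: basic KM}(i)), which is fine.
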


From the complexity point of view, Klostermeyer and MacGillivray~\cite{push} 
characterized the complete dichotomy of the decision version of the problem of determining the pushable $\overrightarrow{H}$-colorability of an oriented graph. We summarize their result as follows. 

\begin{theorem}[\cite{push}]\label{th KM complexity}
    Let $\overrightarrow{H}$ be an oriented graph and $\overrightarrow{C}_4$ denote the directed $4$-cycle. Then the following holds. 
    \begin{enumerate}[(i)]
        \item Given an input oriented graph $\overrightarrow{G}$, 
        determining whether $\overrightarrow{G}$ is pushably 
        $\overrightarrow{H}$-colorable is NP-complete
        if $\overrightarrow{H}$ does not admit a homomorphism 
        to $\overrightarrow{C}_4$, and 
        is polynomial time solvable otherwise. 
                
        \item Given an input oriented graph $\overrightarrow{G}$, 
        determining whether $\overrightarrow{G}$ is pushably 
        $k$-colorable is NP-complete for $k \geq 3$, and is polynomial time solvable otherwise. 
    \end{enumerate}
\end{theorem}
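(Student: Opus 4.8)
The whole argument runs through Proposition~\ref{prop: basic KM}(i): an oriented graph $\overrightarrow{G}$ is pushably $\overrightarrow{H}$-colorable if and only if $\overrightarrow{G}\to AT(\overrightarrow{H})$. So the plan is to treat each of the two problems as the ordinary digraph-homomorphism problem for the fixed target $AT(\overrightarrow{H})$ (respectively, for the finitely many targets $AT(\overrightarrow{T})$, $\overrightarrow{T}$ an orientation of $K_k$). Membership in NP is then immediate --- a certificate is a pushing set $S$ together with a coloring of $\overrightarrow{G}^S$, equivalently a homomorphism $\overrightarrow{G}\to AT(\overrightarrow{H})$ --- so everything reduces to deciding, for each target, whether the homomorphism problem is polynomial or NP-hard.

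For the polynomial cases of (i), suppose $\overrightarrow{H}\to\overrightarrow{C}_4$. If $\overrightarrow{H}$ has no arc then $\overrightarrow{G}$ is pushably $\overrightarrow{H}$-colorable iff $A(\overrightarrow{G})=\emptyset$; otherwise I would fix an arc $uv$ of $\overrightarrow{H}$ and observe that by the definition of $AT$ the arcs $uv,u'v',v'u,vu'$ form a directed $4$-cycle $u\to v\to u'\to v'\to u$ in $AT(\overrightarrow{H})$, so $\overrightarrow{C}_4\to AT(\overrightarrow{H})$. In the other direction, a homomorphism $f\colon\overrightarrow{H}\to\overrightarrow{C}_4$ induces $AT(\overrightarrow{H})\to AT(\overrightarrow{C}_4)$ via $v_i\mapsto f(v_i)$, $v_i'\mapsto f(v_i)'$, and a direct check gives $AT(\overrightarrow{C}_4)\to\overrightarrow{C}_4$ (assign to the two $4$-cycles of $AT(\overrightarrow{C}_4)$ consecutive elements of $\mathbb{Z}_4$, the two offsets differing by $2$). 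Hence $AT(\overrightarrow{H})$ is homomorphically equivalent to $\overrightarrow{C}_4$, so $\overrightarrow{G}\to AT(\overrightarrow{H})$ if and only if $\overrightarrow{G}\to\overrightarrow{C}_4$; and the latter is decided in linear time, being the existence of $\phi\colon V(\overrightarrow{G})\to\mathbb{Z}_4$ with $\phi(v)-\phi(u)\equiv1\pmod 4$ on every arc $uv$ (propagate over a spanning forest, then test the remaining arcs). The polynomial cases of (ii) are covered by the same remark: pushable $2$-colorability is $\overrightarrow{C}_4$-colorability since $AT$ of a single arc is $\overrightarrow{C}_4$, pushable $1$-colorability is ``$A(\overrightarrow{G})=\emptyset$'', and pushable $0$-colorability is ``$V(\overrightarrow{G})=\emptyset$''.

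For NP-hardness, suppose $\overrightarrow{H}\not\to\overrightarrow{C}_4$. Then $\overrightarrow{H}$ has an arc (an arcless digraph maps to $\overrightarrow{C}_4$), so again $\overrightarrow{C}_4\to AT(\overrightarrow{H})$, while $\overrightarrow{H}$ being an induced subgraph of $AT(\overrightarrow{H})$ forces $AT(\overrightarrow{H})\not\to\overrightarrow{C}_4$; thus the core of $AT(\overrightarrow{H})$ lies strictly above $\overrightarrow{C}_4$ in the homomorphism order. The plan here is a gadget reduction to $AT(\overrightarrow{H})$-colorability from an NP-complete constraint problem such as graph $3$-colorability or NOT-ALL-EQUAL-$3$-SAT: build vertex gadgets with a constant number of terminals realizing a fixed nontrivial relation, and constraint gadgets enforcing the needed relations among terminals, so that the assembled instance maps to $AT(\overrightarrow{H})$ exactly when the source instance is satisfiable. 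The hypothesis $\overrightarrow{H}\not\to\overrightarrow{C}_4$ is exactly what makes the gadgets bite: it provides a closed walk of net length $\not\equiv0\pmod 4$ inside $\overrightarrow{H}$ (hence inside $AT(\overrightarrow{H})$) around which non-``affine'' behaviour can be wired, whereas if $\overrightarrow{H}\to\overrightarrow{C}_4$ then $AT(\overrightarrow{H})\simeq\overrightarrow{C}_4$ carries a $\mathbb{Z}_4$-level function and every such gadget would impose only a linear constraint modulo $4$, collapsing the reduction. Part (ii) at $k=3$ is then the instance $\overrightarrow{H}=\overrightarrow{C}_3$ of (i): by Observation~\ref{obs C3=3} pushable $3$-colorability equals pushable $\overrightarrow{C}_3$-colorability, and $\overrightarrow{C}_3\not\to\overrightarrow{C}_4$ because the directed triangle has net length $3$; for $k\ge4$ a standard padding of this reduction (or a direct reduction from $3$-SAT) extends NP-completeness to every $k\ge3$.

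The step I expect to be the main obstacle is the NP-hardness of (i) for an \emph{arbitrary} $\overrightarrow{H}$ with $\overrightarrow{H}\not\to\overrightarrow{C}_4$: there is no single universal gadget, and one must show \emph{uniformly} --- over all possible shapes of the core of $AT(\overrightarrow{H})$ --- that the mere failure of $\overrightarrow{H}$ to be balanced modulo $4$ can, after passing through the anti-twin and a (sub-)indicator-type construction, always be amplified into enough expressive power to simulate an NP-complete constraint satisfaction problem.
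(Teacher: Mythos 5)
This statement is quoted from Klostermeyer and MacGillivray~\cite{push}; the paper gives no proof of it, so your attempt can only be judged on its own terms. The polynomial-time half of your argument is essentially correct and complete: via Proposition~\ref{prop: basic KM}(i) you reduce to $AT(\overrightarrow{H})$-coloring, and your verification that $\overrightarrow{H}\to\overrightarrow{C}_4$ (with at least one arc) forces $AT(\overrightarrow{H})$ to be homomorphically equivalent to $\overrightarrow{C}_4$ checks out — the directed $4$-cycle $u\to v\to u'\to v'\to u$ sits inside $AT(\overrightarrow{H})$, $AT$ is functorial on homomorphisms, and $i\mapsto i$, $i'\mapsto i+2$ is indeed a homomorphism $AT(\overrightarrow{C}_4)\to\overrightarrow{C}_4$; deciding $\overrightarrow{G}\to\overrightarrow{C}_4$ is then a routine mod-$4$ potential check.

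The hardness half, however, is a genuine gap, and you say so yourself. For (i) you only describe a plan (``build vertex gadgets \dots constraint gadgets \dots''), and the stated justification — that $\overrightarrow{H}\not\to\overrightarrow{C}_4$ yields a closed walk of net length $\not\equiv 0\pmod 4$ around which ``non-affine behaviour can be wired'' — is not a proof: the existence of such a walk does not by itself give the expressive power to simulate an NP-complete problem, and proving hardness uniformly over all targets $AT(\overrightarrow{H})$ with $\overrightarrow{H}\not\to\overrightarrow{C}_4$ is precisely the content of the theorem (for general digraph targets, questions of this type sit at the depth of the CSP dichotomy, so one must exploit the special twin/push symmetry of $AT(\overrightarrow{H})$, which you do not do). A second, smaller gap: (ii) for $k\geq 4$ is not literally an instance of (i), because pushable $k$-colorability asks for a pushable homomorphism to \emph{some} orientation of $K_k$, i.e.\ it is a finite union of target problems (only for $k=3$ are the two orientations of $K_3$ push equivalent, which is why that case does reduce to $\overrightarrow{H}=\overrightarrow{C}_3$); your ``standard padding'' remark does not address how to force the reduction to behave correctly simultaneously for all tournaments on $k$ vertices. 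As it stands, the proposal proves the tractable cases but not the NP-completeness cases, so it does not establish the theorem.
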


\subsection{Motivations and applications}
A \textit{homomorphism} of a graph $G$ to another graph $H$ is an edge preserving vertex mapping, and $G$ is \textit{$k$-colorable} if 
$G$ admits a homomorphism $f$ to $K_k$, where $f$ is called a $k$-coloring of $G$. We say that $G$ is $k$-critical\footnote{In many papers~\cite{kostochka2014ore,kostochkagrotzch}, $G$ is called $(k+1)$-critical instead of $k$-critical.} if $G$ 
is not $k$-colorable, but every proper subgraph of $G$ is.

Theorem~\ref{th KM complexity}(i) is an analogue of the famous 
Hell-Ne\v{s}et\v{r}il Theorem~\cite{hell1990complexity} which states that given an input graph $G$, determining whether $G$ admits a homomorphism to $H$ is NP-complete if 
$H$ is not bipartite, and is polynomial time solvable otherwise. 
Restricting  $H$ to complete graphs, one can note that, 
given an input graph $G$, determining whether $G$ admits a $k$-coloring 
is NP-complete for all $k \geq 3$, and is polynomial time solvable otherwise 
(Theorem~\ref{th KM complexity}(ii) is an analogue of this observation). 
Thus, it makes sense to 
provide easily verifiable necessary conditions for a graph $G$ 
(resp., oriented graph $\overrightarrow{G}$)  being $k$-colorable 
(resp., pushably $k$-colorable) for $k \geq 3$. 

One of the best known general results for simple graphs that 
provides such a necessary condition 
for $k$-critical graphs with respect to its edge density is due to Kostochka and Yancy~\cite{kostochka2014ore} which (almost) solves Ore's conjecture. 
Restricted to $k=3$~\cite{kostochkagrotzch}, their result resolves the very first case of Ore's conjecture and provides an
alternative proof of Gr\"{o}tzsch's Theorem~\cite{kostochkagrotzch}. 
We state this result, 
which is one of the earliest examples of the usage of the ``potential method'' in the theory of coloring, in the following. 

\begin{theorem}[\cite{kostochkagrotzch}]\label{th KY k=4}
    Let $G$ be a $3$-critical graph. Then 
    $|E(G)| \geq \frac{5|V(G)|-2}{3}$. 
\end{theorem}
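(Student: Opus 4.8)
The plan is to prove the equivalent inequality $5|V(G)|-3|E(G)|\le 2$ by the potential method. For a graph $H$ and a set $W\subseteq V(H)$, write $\rho_H(W):=5|W|-3|E(H[W])|$ and $\rho(H):=\rho_H(V(H))$, so that the claim is exactly $\rho(G)\le 2$. Suppose not, and among all $3$-critical graphs (not $3$-colorable, but with every proper subgraph $3$-colorable) with $\rho\ge 3$ let $G$ be one whose order $n:=|V(G)|$ is minimum. Since $\rho(K_4)=20-18=2$ we have $G\neq K_4$, and since every $3$-critical graph has minimum degree at least $3$ this forces $n\ge 5$. Moreover, writing $\rho(G)=\sum_{u\in V(G)}\bigl(5-\tfrac{3}{2}\,deg(u)\bigr)$ and noting that a vertex contributes $+\tfrac{1}{2}$ if its degree is $3$ and at most $-1$ otherwise, the assumption $\rho(G)\ge 3$ forces $G$ to contain (in fact, many) vertices of degree exactly $3$.

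The core of the argument is a \emph{subset potential lemma}: every $W\subseteq V(G)$ with $2\le|W|\le n-1$ satisfies $\rho_G(W)\ge 6$, the value attained by an induced triangle (note also $\rho_G(\{v\})=5$). To prove it, suppose some such $W$ has $\rho_G(W)\le 5$ and take one with $|W|$ minimum; since $\rho_G(W)\ge 10-3=7$ when $|W|=2$ and $\rho_G(W)\ge 15-9=6$ when $|W|=3$, necessarily $|W|\ge 4$. As $G[W]$ is a proper subgraph of $G$ it is $3$-colorable; fix a $3$-coloring of $G[W]$ and identify each of its (independent) color classes to a single vertex, obtaining a graph $G'$ on $n-|W|+3\le n-1$ vertices. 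Any $3$-coloring of $G'$ composes with the coloring of $G[W]$ to give a $3$-coloring of $G$, so $G'$ is not $3$-colorable and contains a $3$-critical subgraph $G''$ with $|V(G'')|<n$; hence $\rho(G'')\le 2$ by minimality. Since $G''\subseteq G'$ we get $\rho_{G'}(V(G''))\le\rho_{G''}(V(G''))=\rho(G'')\le 2$, so $G'$ has a nonempty vertex set of potential at most $2$; transferring it back to $V(G)$ by un-contracting the color classes yields a nonempty $T\subseteq V(G)$ with $\rho_G(T)\le 2<6$, which is either all of $V(G)$ (contradicting $\rho(G)\ge 3$) or a proper set (contradicting the minimal choice of $W$). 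Combining the lemma with $\rho_G(\{v\})=5$ shows that $\rho^{\ast}(G):=\min_{\emptyset\neq W\subseteq V(G)}\rho_G(W)=\min\{5,\rho(G)\}\ge 3$.

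To finish, pick a vertex $v$ of degree $3$ with $N(v)=\{x,y,z\}$. The set $\{v,x,y,z\}$ does not induce a $K_4$: its potential would then be $2<6$, contradicting the subset lemma since $n\ge 5$. So some two neighbors of $v$, say $x$ and $y$, are non-adjacent. Let $G^{\ast}$ be obtained from $G-v$ by identifying $x$ and $y$. A $3$-coloring of $G^{\ast}$ would give a $3$-coloring of $G-v$ in which $x$ and $y$ get the same color, so $N(v)$ would use at most two colors and $v$ could be colored, which is impossible; hence $G^{\ast}$ is not $3$-colorable and contains a $3$-critical subgraph $H$ with $|V(H)|\le n-2$, so $\rho(H)\le 2$ by minimality and therefore $\rho^{\ast}(G^{\ast})\le 2$. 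Transferring a minimum-potential set of $G^{\ast}$ back to $V(G)$---undoing the identification of $x$ and $y$ and, where advantageous, adjoining $v$---produces a nonempty subset of $V(G)$ of potential at most $2$, contradicting $\rho^{\ast}(G)\ge 3$. This contradiction proves the inequality; equality holds for $G=K_4$, where $|E(G)|=6=\tfrac{5\cdot 4-2}{3}$.

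The step I expect to be the genuine obstacle is the potential bookkeeping in the two transfer steps. When a set of vertices is identified one must carefully track the edges that become parallel and are discarded, and whether a contracted super-vertex lies inside the set being pulled back; making the resulting inequalities point the right way is precisely what calibrates the constant $6$ in the subset lemma and, very likely, requires a finer case analysis according to the common neighborhoods of the identified vertices and how close the offending set is to all of $V(G)$.
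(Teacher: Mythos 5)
This statement is quoted by the paper from Kostochka--Yancey \cite{kostochkagrotzch}; the paper itself gives no proof of it, so your proposal can only be judged on its own merits (its first half is, in spirit, the same contraction/pull-back machinery that the paper's Gap Lemma~\ref{lem gap-lemma} adapts to the oriented setting). The set-up and the subset potential lemma are essentially sound, apart from one repairable slip: you pick the violating set $W$ of \emph{minimum size}, but the set $T$ you pull back satisfies $T\supsetneq W$, so ``contradicting the minimal choice of $W$'' does not follow. The bookkeeping actually gives $\rho_G(T)\le \rho(G'')+\rho_G(W)-5|X|+3e_{G''}(X)\le \rho_G(W)-3$, where $X$ is the set of contracted vertices lying in $G''$, so you must choose $W$ of \emph{minimum potential} (and note $X\neq\emptyset$ and $V(G'')\setminus X\neq\emptyset$); with that choice the lemma $\rho_G(W)\ge 6$ for all $W$ with $2\le |W|\le n-1$ does go through.

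The genuine gap is the final step. After identifying the nonadjacent neighbours $x,y$ of a degree-$3$ vertex $v$ and finding a $3$-critical $H\subseteq G^{\ast}$ with $\rho(H)\le 2$, undoing the identification only yields $T=(V(H)\setminus\{w\})\cup\{x,y\}$ with $|T|=|V(H)|+1$ and $e_G(T)\ge e(H)$, hence $\rho_G(T)\le \rho(H)+5\le 7$; adjoining $v$ gives $\rho_G(T\cup\{v\})\le \rho(H)+4\le 6$, and even in the luckiest sub-case ($z\in V(H)$) only $\rho(H)+1\le 3$, possibly with $T\cup\{v\}=V(G)$ so that $\rho(G)\le 3$ does not contradict $\rho(G)\ge 3$. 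None of these beats the threshold $6$ for proper sets (let alone the ``$\le 2$'' you assert), so no contradiction is produced. This is exactly where the real Kostochka--Yancey argument has to work much harder: they must classify the near-tight sets (potential $5,6,7$), analyse the neighbourhoods of degree-$3$ vertices and their clusters, and finish with a counting/discharging step. As written, your last paragraph asserts the conclusion of that missing analysis rather than proving it, so the proof does not close.
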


The analogous result of the above theorem in the context of signed (circular) chromatic number is presented in~\cite{beaudou2024density}. Our main result, Theorem~\ref{th push 3-critical density}, is the analogue for oriented graphs with respect to pushable homomorphism.

\medskip

Given a family $\mathcal{F}$ of simple graphs, its oriented 
and pushable chromatic numbers are defined as 
$$\chi_o(\mathcal{F}) = \max\{\chi_o(\overrightarrow{G}) : G \in \mathcal{F}\} \text{ and }  \chi_p(\mathcal{F}) = \max\{\chi_p(\overrightarrow{G}) : G \in \mathcal{F}\}.$$

The \textit{maximum average degree} 
of a graph $G$ (resp., an oriented graph $\overrightarrow{G}$) is given by 
$$mad(\overrightarrow{G})=mad(G) = \max \left\{\frac{2|E(H)|}{|V(H)|}: H \text{ is a subgraph of } G\right\}.$$

The initial studies on oriented chromatic number, following its introduction in 1994 due to Courcelle~\cite{courcelle-monadic}, revolved around
finding $\chi_o(\mathcal{P}_g)$ for $g \geq 3$, where $\mathcal{P}_g$ denotes the family 
of planar graphs having girth at least $g$. 
This line of study is motivated by finding analogous versions of 
the Four-Color 
Theorem~\cite{appel1989every} and Gr\"{o}tzsch's Theorem~\cite{kostochkagrotzch} for oriented graphs. 
Even though lower and upper bounds of $\chi_o(\mathcal{P}_g)$ exist for all values of $g \geq 3$, the exact values of $\chi_o(\mathcal{P}_g)$ 
are known only for $g \geq 12$. Specifically, 
$\chi_o(\mathcal{P}_{g}) = 5$~\cite{borodin2007oriented} for all $g \geq 12$. Finding the 
exact values of $\chi_o(\mathcal{P}_g)$ for all $g$, where $3 \leq g \leq 11$,  are
long standing open problems (the last known improvement was in 2012, see survey~\cite{sopena_survey} for details).

 Later in 2004, 
Klostermeyer and MacGillivray~\cite{push} introduced a modified version of the oriented 
coloring and chromatic number, namely, the pushable coloring and chromatic number, and, 
not surprisingly, finding the values of $\chi_p(\mathcal{P}_g)$ was given a special focus. 
To date, exact values of  $\chi_p(\mathcal{P}_g)$ are known only for $g = 8, 9$  and $g \geq 16$. Specifically, it is known that $\chi_o(\mathcal{P}_{8})=\chi_o(\mathcal{P}_{9}) = 4$~\cite{pascalpush, sen-push} and 
$\chi_o(\mathcal{P}_{g}) = 3$~\cite{borodin1998universal}, for $g \geq 16$. 
Finding the 
exact values of 
$\chi_p(\mathcal{P}_g)$ for $g \in \{3, 4, \cdots, 7, 10, 11, \cdots, 15\}$ are also
long standing open problems (the last known improvement was in 2017~\cite{sen-push}).

It is 
well-known~\cite{borodin1999maximum} that a planar graph (resp., projective planar graph) $G$ with girth $g$ 
satisfies $mad(G) < \frac{2g}{g-2}$. In practice, several upper bounds of 
$\chi_o(\mathcal{P}_g)$ and $\chi_p(\mathcal{P}_g)$ has been derived using this result,
that is, via establishing upper bounds for 
$\chi_o(\mathcal{M}_{\frac{2g}{g-2}})$ 
and $\chi_p(\mathcal{M}_{\frac{2g}{g-2}})$, where $\mathcal{M}_t$ denotes the family of graphs having maximum average degree strictly less than $t$~\cite{bensmail2021pushable,borodin1998universal,borodin1999maximum, das2023pushable,sen-push}.

In 1998, Borodin, Kostochka, Ne\v{s}et\v{r}il, Raspaud, and Sopena~\cite{borodin1998universal} 
made an interesting study of homomorphically embedding sparse (with respect to bounded maximum average degree, girth restrictions and planarity) oriented graphs into planar graphs. One of their main results proved that 
any oriented graph having maximum average degree strictly less than 
$\frac{16}{7}$ and girth at least $11$ is $AT(\overrightarrow{C}_3)$-colorable, equivalently, pushably $3$-colorable (by Observation~\ref{obs C3=3}). However, the tightness of this bound, both in terms of maximum average degree and girth, was left open. 

\begin{theorem}[\cite{borodin1998universal}]~\label{th borodin mad push}
    Let $\overrightarrow{G}$ be an oriented graph having maximum average degree strictly less than
    $\frac{16}{7}$ and girth at least $11$. Then we have
    \begin{enumerate}[(i)]
        \item $\chi_p(\overrightarrow{G}) \leq 3$,

        \item $\overrightarrow{G} \rightarrow AT(\overrightarrow{C}_3)$. 
    \end{enumerate}
\end{theorem}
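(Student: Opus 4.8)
The plan is to obtain Theorem~\ref{th borodin mad push} as a direct consequence of the main result, Theorem~\ref{th push 3-critical density}. First I would note that, by Observation~\ref{obs C3=3}, statements (i) and (ii) are equivalent (being pushably $3$-colorable is the same as admitting a homomorphism to $AT(\overrightarrow{C}_3)$), so it suffices to establish $\chi_p(\overrightarrow{G}) \le 3$. I would argue by contradiction: if $\chi_p(\overrightarrow{G}) > 3$, then some subgraph of $\overrightarrow{G}$ is not pushably $3$-colorable, and I would pick an inclusion-minimal such subgraph $\overrightarrow{G}'$. By minimality every proper subgraph of $\overrightarrow{G}'$ is pushably $3$-colorable while $\overrightarrow{G}'$ itself is not, so $\overrightarrow{G}'$ is pushably $3$-critical.

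Next I would observe that $\overrightarrow{G}'$, being a subgraph of $\overrightarrow{G}$, still has girth at least $11$ and satisfies $mad(\overrightarrow{G}') \le mad(\overrightarrow{G}) < \tfrac{16}{7}$. Since pushing does not change the underlying graph, girth and maximum average degree are push-invariant, so $\overrightarrow{G}'$ cannot be push-equivalent to any of $\overrightarrow{C}_{-4}, \overrightarrow{E}_1, \overrightarrow{E}_2, \overrightarrow{E}_3$: the cycle $\overrightarrow{C}_{-4}$ has girth $4 < 11$, and each $\overrightarrow{E}_i$ has $13$ vertices and $15$ arcs, hence maximum average degree at least $\tfrac{30}{13} > \tfrac{16}{7}$ (see Fig.~\ref{fig:clique c_-4} and Fig.~\ref{fig:Ei}). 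Thus $\overrightarrow{G}'$ satisfies the hypotheses of Theorem~\ref{th push 3-critical density}, and writing $n' = |V(\overrightarrow{G}')|$ I would conclude $|A(\overrightarrow{G}')| \ge \tfrac{15n'+2}{13}$, whence
$$mad(\overrightarrow{G}') \ \ge\ \frac{2\,|A(\overrightarrow{G}')|}{n'} \ \ge\ \frac{30n'+4}{13\,n'} \ >\ \frac{30}{13} \ >\ \frac{16}{7},$$
contradicting $mad(\overrightarrow{G}') < \tfrac{16}{7}$. Hence $\chi_p(\overrightarrow{G}) \le 3$, and by Observation~\ref{obs C3=3} also $\overrightarrow{G} \rightarrow AT(\overrightarrow{C}_3)$, proving both (i) and (ii).

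The only work beyond citing Theorem~\ref{th push 3-critical density} is the bookkeeping that the four exceptional oriented graphs really lie outside the hypothesis class (too short-girthed, or too dense), which is where the modest ``main obstacle'' sits; it is dispatched by the elementary vertex/arc counts attached to Fig.~\ref{fig:clique c_-4} and Fig.~\ref{fig:Ei}. As an alternative, one could reprove Theorem~\ref{th borodin mad push} from scratch by discharging: take a minimal counterexample, necessarily pushably $3$-critical; establish reducible configurations (minimum degree at least $2$, and no long threads through the degree-$2$ vertices); assign each vertex $v$ the initial charge $deg(v) - \tfrac{16}{7}$, whose total is negative by the $mad$ hypothesis; send charge from vertices of degree at least $3$ along their incident threads; and use girth $\ge 11$ to rule out conflicting configurations, so that every vertex ends with non-negative charge, a contradiction. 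On that route the genuine difficulty is pinning down and verifying the reducible configurations, which requires analysing how pushable $3$-colourings extend along induced paths, i.e.\ the walk structure of $AT(\overrightarrow{C}_3)$.
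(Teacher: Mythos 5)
Your derivation is correct: passing to an inclusion-minimal non-pushably-$3$-colorable subgraph $\overrightarrow{G}'$, checking that it cannot be push-equivalent to $\overrightarrow{C}_{-4}$ (girth) or to any $\overrightarrow{E}_i$ (since $\tfrac{30}{13}>\tfrac{16}{7}$), and then applying Theorem~\ref{th push 3-critical density} to contradict $mad(\overrightarrow{G}')<\tfrac{16}{7}$ is sound, and there is no circularity since the proof of Theorem~\ref{th push 3-critical density} nowhere uses the Borodin et al.\ result. Note, however, that the paper itself does not prove Theorem~\ref{th borodin mad push}: it is quoted from \cite{borodin1998universal}, where it was obtained directly by a discharging argument on sparse graphs mapping to $AT(\overrightarrow{C}_3)$ (essentially the ``from scratch'' alternative you sketch at the end). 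What you have written is instead exactly the argument the paper uses for its \emph{improved} Theorem~\ref{th mad push} (with $mad<\tfrac{30}{13}$ and girth at least $5$), specialized to the weaker hypotheses $mad<\tfrac{16}{7}$ and girth at least $11$; in fact you are slightly more careful than the paper's proof of Theorem~\ref{th mad push}, since you make explicit the passage to a pushably $3$-critical subgraph and the push-invariance of girth and $mad$ used to exclude the four exceptional graphs. The trade-off is the obvious one: your route buys a two-line proof at the cost of the full strength of Theorem~\ref{th push 3-critical density}, whereas the original citation's discharging proof is self-contained but much longer and gives only the weaker $\tfrac{16}{7}$/girth-$11$ bound.
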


Since all planar and projective planar graphs having girth at least $16$ have maximum average degree strictly less than $\frac{16}{7}$, an immediate corollary follows. 

\begin{corollary}[\cite{borodin1998universal}]~\label{cor borodin planar 16}
   Let $\overrightarrow{G}$ be an oriented planar or projective planar graph
    having girth at least $16$. Then we have 
    \begin{enumerate}[(i)]
        \item $\chi_p(\overrightarrow{G}) \leq 3$,

        \item $\overrightarrow{G} \rightarrow AT(\overrightarrow{C}_3)$. 
    \end{enumerate} 
\end{corollary}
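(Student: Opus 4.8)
The plan is to obtain Corollary~\ref{cor borodin planar 16} as an immediate specialization of Theorem~\ref{th borodin mad push}, the only work being to translate the girth hypothesis into the maximum average degree hypothesis.

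First I would invoke the standard Euler-formula estimate (the one quoted above from~\cite{borodin1999maximum}): every planar or projective planar graph $H$ of girth $g \ge 3$ satisfies $mad(H) < \frac{2g}{g-2}$, and if $H$ is acyclic then trivially $mad(H) < 2$. Since $mad$ is taken over all subgraphs and a subgraph of a planar (resp.\ projective planar) graph of girth at least $16$ is again planar (resp.\ projective planar) and either acyclic or of girth at least $16$, this estimate applies uniformly.

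Next I would use that the map $t \mapsto \frac{2t}{t-2}$ is strictly decreasing on $(2,\infty)$, so that for a planar or projective planar graph $\overrightarrow{G}$ of girth $g \ge 16$ we get
$$ mad(\overrightarrow{G}) < \frac{2g}{g-2} \le \frac{2\cdot 16}{16-2} = \frac{32}{14} = \frac{16}{7}, $$
while $mad(\overrightarrow{G}) < 2 < \frac{16}{7}$ in the degenerate acyclic case. Hence $\overrightarrow{G}$ has maximum average degree strictly less than $\frac{16}{7}$. Finally, since $16 \ge 11$, the oriented graph $\overrightarrow{G}$ satisfies both hypotheses of Theorem~\ref{th borodin mad push}; applying that theorem yields $\chi_p(\overrightarrow{G}) \le 3$ and $\overrightarrow{G} \to AT(\overrightarrow{C}_3)$, which are precisely parts (i) and (ii).

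There is essentially no obstacle in this argument; the only point that needs a moment's care is the monotonicity/threshold computation showing that girth $16$ (and not some larger value) is exactly what brings $\frac{2g}{g-2}$ down to $\frac{16}{7}$, and that requiring girth \emph{at least} $16$ only strengthens the resulting density bound. Everything else is a direct citation of Theorem~\ref{th borodin mad push}.
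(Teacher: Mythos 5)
Your proposal is correct and follows essentially the same route as the paper: the paper derives Corollary~\ref{cor borodin planar 16} immediately from Theorem~\ref{th borodin mad push} via the same cited fact that planar and projective planar graphs of girth at least $16$ have maximum average degree strictly less than $\frac{2\cdot 16}{16-2}=\frac{16}{7}$. Your extra care about monotonicity of $t\mapsto \frac{2t}{t-2}$ and the acyclic case is fine but not needed beyond what the paper already invokes.
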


We improve Theorem~\ref{th borodin mad push} by proving the following result, using Theorem~\ref{th push 3-critical density}. Both the maximum average degree and the girth condition in our result are tight. 

\begin{theorem}\label{th mad push}
    Let $\overrightarrow{G}$ be an oriented graph having maximum average degree strictly less than
    $\frac{30}{13}$ and girth at least $5$. Then we have
    \begin{enumerate}[(i)]
        \item $\chi_p(\overrightarrow{G}) \leq 3$,

        \item $\overrightarrow{G} \rightarrow AT(\overrightarrow{C}_3)$. 
    \end{enumerate}
    Moreover, it is not possible to relax either the maximum average degree or the girth conditions. 
\end{theorem}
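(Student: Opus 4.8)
The plan is to derive this theorem from Theorem~\ref{th push 3-critical density} together with Observation~\ref{obs C3=3}. First note that parts (i) and (ii) are equivalent: by Observation~\ref{obs C3=3}, $\overrightarrow{G}$ is pushably $3$-colorable if and only if $\overrightarrow{G} \to AT(\overrightarrow{C}_3)$, so it suffices to establish (i).

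Suppose towards a contradiction that $\overrightarrow{G}$ is not pushably $3$-colorable. Since pushable $3$-colorability passes to subgraphs (a homomorphism to $AT(\overrightarrow{C}_3)$ restricts to any subgraph) and $\overrightarrow{G}$ is finite, $\overrightarrow{G}$ contains a subgraph $\overrightarrow{H}$ that is minimal with respect to being non-pushably-$3$-colorable; by definition $\overrightarrow{H}$ is pushably $3$-critical. As $\overrightarrow{H}$ is a subgraph of $\overrightarrow{G}$, we have $mad(\overrightarrow{H}) = mad(H) \le mad(G) < \tfrac{30}{13}$ and the girth of $\overrightarrow{H}$ is at least $5$. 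Now apply Theorem~\ref{th push 3-critical density} to $\overrightarrow{H}$. If $\overrightarrow{H} \equiv_p Z$ for some $Z \in \{\overrightarrow{C}_{-4},\overrightarrow{E}_1,\overrightarrow{E}_2,\overrightarrow{E}_3\}$, then $\overrightarrow{H}$ and $Z$ have the same underlying graph (pushing only reverses arcs), hence the same girth and the same maximum average degree; but $\overrightarrow{C}_{-4}$ has girth $4$ and each $\overrightarrow{E}_i$ has $mad = \tfrac{30}{13}$ (see Figures~\ref{fig:clique c_-4} and~\ref{fig:Ei}), contradicting respectively the girth hypothesis and the strict $mad$ hypothesis on $\overrightarrow{H}$. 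Otherwise $|A(\overrightarrow{H})| \ge \tfrac{15|V(\overrightarrow{H})|+2}{13}$, whence
$$mad(\overrightarrow{H}) \ge \frac{2|A(\overrightarrow{H})|}{|V(\overrightarrow{H})|} \ge \frac{30}{13} + \frac{4}{13|V(\overrightarrow{H})|} > \frac{30}{13},$$
a contradiction. Hence $\overrightarrow{G}$ is pushably $3$-colorable, proving (i), and (ii) follows.

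For tightness I would exhibit explicit witnesses. To see that the $mad$ bound cannot be relaxed, the oriented graphs $\overrightarrow{E}_1,\overrightarrow{E}_2,\overrightarrow{E}_3$ of Figure~\ref{fig:Ei} are pushably $3$-critical (hence not pushably $3$-colorable), have girth at least $5$, and satisfy $mad = \tfrac{30}{13}$; so ``strictly less than $\tfrac{30}{13}$'' cannot be weakened to ``at most $\tfrac{30}{13}$'' even under the girth-$5$ restriction. To see that the girth bound cannot be relaxed, the oriented $4$-cycle $\overrightarrow{C}_{-4}$ of Figure~\ref{fig:clique c_-4} is pushably $3$-critical with $mad(\overrightarrow{C}_{-4}) = 2 < \tfrac{30}{13}$ and girth $4$; so ``girth at least $5$'' cannot be weakened to ``girth at least $4$'' even while keeping $mad < \tfrac{30}{13}$.

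The real content here lies entirely in Theorem~\ref{th push 3-critical density}; the deduction above is essentially bookkeeping. The only points needing care are the small-graph facts: that $\overrightarrow{C}_{-4}$ has girth $4$ and average degree $2$, and that each $\overrightarrow{E}_i$ has $13$ vertices, $15$ arcs, girth at least $5$, and $mad$ \emph{equal} to (not merely at most) $\tfrac{30}{13}$. The last of these is the one place where a short case check is required, namely verifying that no proper subgraph of $\overrightarrow{E}_i$ is denser than $\overrightarrow{E}_i$ itself; I also use implicitly that push equivalence preserves the underlying graph, so girth and $mad$ transfer from $Z$ to $\overrightarrow{H}$ in the case analysis.
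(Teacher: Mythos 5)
Your proposal is correct and follows essentially the same route as the paper: pass to a pushably $3$-critical subgraph, rule out $\overrightarrow{C}_{-4}$ by the girth hypothesis and the $\overrightarrow{E}_i$ by the strict $mad$ hypothesis, then contradict $mad < \tfrac{30}{13}$ via the density bound of Theorem~\ref{th push 3-critical density}, with the same tightness witnesses ($\overrightarrow{E}_i$ and $\overrightarrow{C}_{-4}$). Your version is in fact slightly more carefully written than the paper's, which starts directly from a critical graph rather than spelling out the passage to a minimal non-colorable subgraph.
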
    

\begin{proof}
 (i) Suppose the contrary, and let $\overrightarrow{G}$ be a pushably $3$-critical oriented graph having maximum average degree strictly less than $\frac{30}{13}$ and girth at least $5$. 
 Notice that 
 $\overrightarrow{G} \neq \overrightarrow{E}_i$ since
 $mad(\overrightarrow{E}_i) = \frac{30}{13}$ for all $i \in \{1, 2, 3\}$. 
 Moreover, $\overrightarrow{G} \neq \overrightarrow{C}_{-4}$ as $\overrightarrow{G}$ has girth at least $5$. Therefore, according to 
 Theorem~\ref{th push 3-critical density}, $\overrightarrow{G}$ must have at least $\frac{15|V(\overrightarrow{G})|+2}{13}$ arcs, implying 
 $mad(\overrightarrow{G}) \geq \frac{30}{13}$, a contradiction.  \\

 \noindent (ii) The proof follows from Observation~\ref{obs C3=3} and Theorem~\ref{th mad push}(i).
\end{proof}

   Notice that the only pushably $3$-critical oriented graphs having maximum average degree equal to $\frac{30}{13}$ are 
        $\overrightarrow{E}_1,$ 
        $\overrightarrow{E}_2,$ and
        $\overrightarrow{E}_3,$ 
        (see Fig.~\ref{fig:Ei}). The following corollary of Theorem~\ref{th mad push} is an improvement of Corollary~\ref{cor borodin planar 16}.

\begin{corollary}~\label{cor planar 15}
    Let $\overrightarrow{G}$ be an oriented planar or projective planar graph
    having girth at least $15$. Then we have 
    \begin{enumerate}[(i)]
        \item $\chi_p(\overrightarrow{G}) \leq 3$,

        \item $\overrightarrow{G} \rightarrow AT(\overrightarrow{C}_3)$. 
    \end{enumerate} 
\end{corollary}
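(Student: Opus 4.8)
The plan is to deduce Corollary~\ref{cor planar 15} directly from Theorem~\ref{th mad push} by invoking the standard Euler-formula bound relating girth and maximum average degree for (projective) planar graphs. First I would recall the well-known fact, cited above from~\cite{borodin1999maximum}, that every planar or projective planar graph $G$ of girth $g$ satisfies $mad(G) < \frac{2g}{g-2}$. Since $\overrightarrow{G}$ is planar or projective planar with girth some $g_0 \geq 15$, and since the function $g \mapsto \frac{2g}{g-2}$ is decreasing for $g > 2$, we obtain
$$mad(\overrightarrow{G}) = mad(G) < \frac{2g_0}{g_0-2} \leq \frac{2\cdot 15}{15-2} = \frac{30}{13}.$$

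Second, I would observe that girth at least $15$ trivially gives girth at least $5$, so both hypotheses of Theorem~\ref{th mad push} are satisfied by $\overrightarrow{G}$. Applying Theorem~\ref{th mad push}(i) then yields $\chi_p(\overrightarrow{G}) \leq 3$, and applying Theorem~\ref{th mad push}(ii) (equivalently, combining part (i) with Observation~\ref{obs C3=3}) yields $\overrightarrow{G} \rightarrow AT(\overrightarrow{C}_3)$. This completes the argument.

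As this is a corollary, there is essentially no real obstacle in its own proof: all the substance has already been absorbed into Theorem~\ref{th mad push}, which in turn rests on the density bound for pushably $3$-critical oriented graphs (Theorem~\ref{th push 3-critical density}). The only point deserving a moment's care is the routine arithmetic verification that $\frac{2g}{g-2}$ drops to exactly $\frac{30}{13}$ at $g=15$ (and stays below it for $g > 15$), which is precisely what allows the girth threshold to be lowered to $15$, improving on the threshold $16$ of Corollary~\ref{cor borodin planar 16}; this sharpening is exactly what is gained by replacing the old hypothesis $mad < \frac{16}{7}$ of Theorem~\ref{th borodin mad push} with the tight hypothesis $mad < \frac{30}{13}$.
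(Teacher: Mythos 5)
Your proposal is correct and follows essentially the same route as the paper: both deduce the corollary from Theorem~\ref{th mad push} together with the cited fact that (projective) planar graphs of girth $g$ satisfy $mad(G) < \frac{2g}{g-2}$, which at $g = 15$ gives exactly the strict bound $\frac{30}{13}$. Your extra remarks (monotonicity of $\frac{2g}{g-2}$ and girth $15 \geq 5$) just make explicit what the paper leaves implicit.
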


\begin{proof}
    This follows directly from Theorem~\ref{th mad push} and the known fact~\cite{borodin1999maximum} that planar and projective planar graphs having girth at least $15$ have maximum average degree strictly less than 
    $\frac{30}{13}$. 
\end{proof}

\begin{rem}
    Corollary~3.4 of~\cite{borodin2004homomorphisms}  claims that if $\overrightarrow{G}$ is a planar graph  having girth at least $13$, then $\overrightarrow{G} \rightarrow AT(\overrightarrow{C}_3)$. This claim, if correct, is clearly stronger than the above corollary (for planar graphs). However, unfortunately, the proof of Corollary~3.4 has an error and it has been acknowledged, and confirmed by the authors of~\cite{borodin2004homomorphisms} via an email conversation. Thus, our result (Corollary~\ref{cor planar 15}) is the best known in this line of work as of now. That said, the claim in Corollary~3.4~\cite{borodin2004homomorphisms} has not been disproven and may still hold, although proving it seems challenging. This remark can be considered an \textit{erratum} to Corollary~3.4 of~\cite{borodin2004homomorphisms}. 
\end{rem}

Corollary~\ref{cor planar 15} finds the exact value $\chi_p(\mathcal{P}_{15}) = 3$, 
and thus closes one of the open cases mentioned in this section. On the other hand, it is known~\cite{pascalpush,sen-push} that 
$\chi_p(\mathcal{P}_8) = \chi_p(\mathcal{P}_9) = 4$. Thus, answering the following problem will solve all the open cases of finding the exact value of $\chi_p(\mathcal{P}_{g})$ for $g \geq 8$. 

\begin{problem}
Find the minimum $g \in \{10, 11, 12, 13, 14\}$ such that  
$\chi_p(\mathcal{P}_{g}) = 3$ and $\chi_p(\mathcal{P}_{g-1}) = 4$.   
\end{problem}

\medskip

The $L(p,q)$-labeling~\cite{gallian2022dynamic} is a popular graph-theoretic model for the Channel Assignment Problem in wireless networks.  
A few articles have studied its oriented analogue, which may better approximate the real-life scenario since several transmissions are one-way~\cite{calamoneri20132,chang2007distance,gonccalves2006oriented,sen20142}. The two variants of the oriented analogue of $L(p,q)$-labeling are as follows. 

A \textit{$2$-dipath $\ell$-$L(p,q)$-labeling}~\cite{calamoneri20132} of an oriented graph $\overrightarrow{G}$ is a function 
$g : V(\overrightarrow{G}) \rightarrow \{0,1,\ldots,\ell\}$ 
satisfying
\begin{enumerate}[(i)]
    \item If $u,v$ are adjacent, then $|\ell(u) - \ell(v)| \geq p$.

    \item If $u,v$ are endpoints of a directed $2$-path (a directed path with two arcs), then 
    $|\ell(u) - \ell(v)| \geq q$.
\end{enumerate}
The \textit{$2$-dipath $L(p,q)$-labeling span} of $\overrightarrow{G}$, denoted by $\overrightarrow{\lambda}_{p,q}(\overrightarrow{G})$, 
is the minimum $\ell$ such that $\overrightarrow{G}$ admits a $2$-dipath $\ell$-$L(p,q)$-labeling.

An 
\textit{oriented $\ell$-$L(p,q)$-labeling}~\cite{gonccalves2006oriented} of an oriented graph $\overrightarrow{G}$ is a 
$2$-dipath $\ell$-$L(p,q)$-labeling
which is also an oriented coloring. 
The \textit{oriented $L(p,q)$-labeling span} of $\overrightarrow{G}$, denoted by 
$\lambda^o_{p,q}(\overrightarrow{G})$, 
is the minimum $\ell$ such that $\overrightarrow{G}$ admits an oriented $\ell$-$L(p,q)$-labeling.

Given a family $\mathcal{F}$ of simple graphs, its $2$-dipath  and oriented $L(p,q)$-labeling spans are defined as 
$$\overrightarrow{\lambda}_{p,q}(\mathcal{F}) = \max\{\overrightarrow{\lambda}_{p,q}(\overrightarrow{G}) : G \in \mathcal{F}\} \text{ and }  \lambda^o_{p,q}(\mathcal{F}) = \max\{\lambda^o_{p,q}(\overrightarrow{G}) : G \in \mathcal{F}\}.$$

\begin{theorem}\label{thm: lpq}
   Let $\overrightarrow{G}$ be an oriented graph. 
   If $\overrightarrow{G}$ has maximum average degree 
    $\frac{30}{13}$ and girth at least $5$, or 
    if $G$ is a planar or a projective planar graph
    having girth at least $15$, then we have
    $\overrightarrow{\lambda}_{p,q}(\overrightarrow{G}) \leq \lambda^o_{p,q}(\overrightarrow{G}) \leq 2p+3q$, where $p \geq q$. 
    \end{theorem}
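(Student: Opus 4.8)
The plan is to reduce the $L(p,q)$-labeling bound to the homomorphism result already established in Theorem~\ref{th mad push}. The key observation is that $\lambda^o_{p,q}$ of an oriented graph $\overrightarrow{G}$ is controlled by the structure of any oriented graph $\overrightarrow{H}$ to which $\overrightarrow{G}$ admits an oriented homomorphism: if $\overrightarrow{G}\to\overrightarrow{H}$, then an $L(p,q)$-labeling of $\overrightarrow{H}$ (with the directed-$2$-path condition) pulls back to one of $\overrightarrow{G}$, because homomorphisms preserve arcs and hence preserve directed $2$-paths (if $xy,yz\in A(\overrightarrow{G})$ then $f(x)f(y),f(y)f(z)\in A(\overrightarrow{H})$, and since $\overrightarrow{H}$ has no digons, $f(x)\neq f(z)$ when $x\neq z$ provided $\overrightarrow{H}$ has no directed $2$-cycle returning to the start — one must check that the target has no such short closed walks, or handle it directly). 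So $\lambda^o_{p,q}(\overrightarrow{G})\leq \lambda^o_{p,q}(\overrightarrow{H})$ for a suitable $\overrightarrow{H}$.

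First I would invoke Theorem~\ref{th mad push}(ii) (and Corollary~\ref{cor planar 15}(ii)) to get $\overrightarrow{G}\to AT(\overrightarrow{C}_3)$ under either hypothesis; note the statement says $mad = \tfrac{30}{13}$, but presumably this should be read as "strictly less than" to match Theorem~\ref{th mad push}, or the three exceptional graphs $\overrightarrow{E}_i$ must be checked separately — I would handle that edge case by a direct inspection of $\overrightarrow{E}_1,\overrightarrow{E}_2,\overrightarrow{E}_3$, which are each pushably $3$-colorable after deleting one arc and can be labeled by hand. Next I would compute $\lambda^o_{p,q}(AT(\overrightarrow{C}_3))$ explicitly. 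The graph $AT(\overrightarrow{C}_3)$ has six vertices $\bar 0,\bar 1,\bar 2,\bar 0',\bar 1',\bar 2'$ (Fig.~\ref{anti}); I would exhibit an explicit labeling $g:\{\bar 0,\dots,\bar 2'\}\to\{0,1,\dots,2p+3q\}$ and verify the two $L(p,q)$ conditions by checking all adjacent pairs and all directed-$2$-path endpoint pairs in this small fixed graph. The natural attempt is something like spreading the six labels as $0, q, 2q, 2q+p$-ish increments — one needs the adjacent-pair gaps to be $\geq p$ and the $2$-dipath gaps $\geq q$, and since every pair of the six vertices is either adjacent, joined by a directed $2$-path, or both (one should check which), essentially all $\binom{6}{2}=15$ pairs need a gap $\geq q$ and a large subset needs $\geq p$; with $p\geq q$ the value $2p+3q$ should be exactly enough.

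So the main steps are: (1) a general monotonicity lemma $\lambda^o_{p,q}(\overrightarrow{G})\leq\lambda^o_{p,q}(\overrightarrow{H})$ whenever $\overrightarrow{G}\to\overrightarrow{H}$, being careful about the directed-$2$-path condition and about whether the pulled-back labeling remains an oriented coloring (it does, since a homomorphism composed with an oriented coloring of $\overrightarrow H$ is an oriented coloring of $\overrightarrow G$); (2) apply it with $\overrightarrow{H}=AT(\overrightarrow{C}_3)$ via Theorem~\ref{th mad push}(ii)/Corollary~\ref{cor planar 15}(ii); (3) an explicit optimal labeling of the six-vertex graph $AT(\overrightarrow{C}_3)$ witnessing $\lambda^o_{p,q}(AT(\overrightarrow{C}_3))\leq 2p+3q$; and (4) dispatch the exceptional graphs $\overrightarrow{E}_1,\overrightarrow{E}_2,\overrightarrow{E}_3$ (and, if the girth-$\geq 5$ class is meant to exclude nothing else, $\overrightarrow{C}_{-4}$ is excluded by girth) by hand. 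The first chain of inequalities $\overrightarrow{\lambda}_{p,q}\leq\lambda^o_{p,q}$ is immediate from the definitions since an oriented $L(p,q)$-labeling is by definition a $2$-dipath $L(p,q)$-labeling.

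The hard part will be step (3): finding the right assignment of $\{0,\dots,2p+3q\}$ to the six vertices so that simultaneously every arc sees a gap $\geq p$ and every directed $2$-path sees a gap $\geq q$, and then being sure the bound $2p+3q$ is actually achievable rather than, say, $2p+4q$. One subtlety is that in $AT(\overrightarrow{C}_3)$ a pair of vertices may be connected by a directed $2$-path \emph{in both directions} or a vertex may lie on a directed $2$-path from itself; I would need to enumerate directed $2$-paths carefully (there are a modest number) to know exactly which constraints are live. A secondary subtlety is the precise meaning of the hypothesis "maximum average degree $\tfrac{30}{13}$" versus "strictly less than $\tfrac{30}{13}$"; I would state the theorem with the strict inequality (matching Theorem~\ref{th mad push}) or else explicitly verify the three borderline graphs, since otherwise the cited homomorphism result does not apply to them.
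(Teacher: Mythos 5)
Your proposal is correct and follows essentially the same route as the paper: the paper cites the monotonicity of $\lambda^o_{p,q}$ under homomorphisms (from~\cite{sen20142}) together with Theorem~\ref{th mad push}(ii) and Corollary~\ref{cor planar 15}(ii), and then completes your step (3) by exhibiting the explicit labeling $f(\bar 0)=0$, $f(\bar 0')=q$, $f(\bar 1)=p+q$, $f(\bar 1')=p+2q$, $f(\bar 2)=2p+2q$, $f(\bar 2')=2p+3q$ of $AT(\overrightarrow{C}_3)$, which indeed satisfies both conditions for all $p\geq q$ since the only non-adjacent pairs are the three anti-twin pairs (each joined by a directed $2$-path, at distance exactly $q$). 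Your reading of the maximum-average-degree hypothesis as strict inequality is also how the paper's proof operates, since it invokes Theorem~\ref{th mad push} directly.
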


\begin{proof}

  It is known~\cite{sen20142} that 
  $\overrightarrow{\lambda}_{p,q}(\overrightarrow{G}) \leq \lambda^o_{p,q}(\overrightarrow{G})$ and that 
    $\lambda^o_{p,q}(\overrightarrow{G}) \leq 
    \lambda^o_{p,q}(\overrightarrow{H})$
    if $\overrightarrow{G} \rightarrow \overrightarrow{H}$. Let us assume that the set of vertices of the 
directed $3$-cycle $\overrightarrow{C}_3$ is the  additive cyclic group 
$\mathbb{Z}/3\mathbb{Z} = \{\overline{0},\overline{1},\overline{2}\}$ 
while the arcs are of the form $\overline{i}~\overline{(i+1)}$ where 
$\overline{i} \in \mathbb{Z}/3\mathbb{Z}$. Now, consider the function 
    $$f: V(AT(\overrightarrow{C}_3)) \to \{0, 1, 2, \ldots, (2p+3q)\}$$
    given by 
    $f(\bar{0}) = 0$, 
    $f(\bar{0}') = q$,
    $f(\bar{1}) = p+q$,
    $f(\bar{1}') = p+2q$,
    $f(\bar{2}) = 2p+2q$,
    $f(\bar{2}') = 2p+3q$.
    Observe that $f$ is an oriented $(2p+3q)$-$L(p,q)$-labeling
    of $AT(\overrightarrow{C}_3)$ for all $p \geq q$.
    Thus, the proof follows from Theorem~\ref{th mad push}(ii)
    and Corollary~\ref{cor planar 15}(ii).
\end{proof}

Sen~\cite{sen20142} showed that 
$\overrightarrow{\lambda}_{2,1}(\mathcal{P}_{16}) \leq \lambda^o_{2,1}(\mathcal{P}_{16}) \leq 7$. Note that Theorem~\ref{thm: lpq}, for $(p,q)=(2,1)$, improves these results.

\medskip

In the next section, we will present the proof of 
Theorem~\ref{th push 3-critical density}. The proof uses potential method, and to the best of our knowledge, this is the first instance where potential method is applied to prove results in the context of oriented 
(resp., pushable) coloring. The application of the potential method 
for oriented graphs is also another important feature of this article.

\section{Proof of Theorem~\ref{th push 3-critical density}}\label{sec proof of the main theorem}

\begin{figure}
    \centering

                \begin{tikzpicture}[scale=0.6]
    \node[draw, circle, minimum size=.75cm] (v1) at (0,0) {$v_1$};
    \node[draw, circle, minimum size=.75cm] (v2) at (8,0) {$v_2$};
    \node[draw, circle, minimum size=.75cm] (v3) at (4,6) {$v_3$};
    \node[draw, circle, minimum size=.75cm] (v4) at (4,2.5) {$v_4$};

    \node[draw, circle, minimum size=.002cm] (v14) at (4,4.2) {};
    \node[draw, circle, minimum size=.002cm] (v7) at (4,0) {};
    \node[draw, circle, minimum size=.002cm] (v12) at (2,0) {};
    \node[draw, circle, minimum size=.002cm] (v13) at (6,0) {};

    \node[draw, circle, minimum size=.002cm] (v8) at (1.5,0.9) {};
    \node[draw, circle, minimum size=.002cm] (v9) at (2.7,1.68) {};
    \node[draw, circle, minimum size=.002cm] (v10) at (5.3,1.68) {};
    \node[draw, circle, minimum size=.002cm] (v11) at (6.5,0.9) {};
    

    \draw[->] (v3) -- (v1);
    \draw[->] (v3) -- (v2);
    \draw[->] (v4) -- (v14);
    \draw[->] (v14) -- (v3);
    
    \draw[->] (v13) -- (v2);
    \draw[->] (v7) -- (v13);
    \draw[<-] (v12) -- (v7);
    \draw[<-] (v12) -- (v1);
    
    \draw[->] (v1) -- (v8);
    \draw[->] (v8) -- (v9);
    \draw[->] (v9) -- (v4);
    
    \draw[->] (v2) -- (v11);
    \draw[<-] (v10) -- (v11);
    \draw[->] (v10) -- (v4);


\end{tikzpicture}

    \caption{A pushably $3$-critical oriented graph $\overrightarrow{F}$ on $12$ vertices and $14$ arcs. The potential of this oriented graph is $-2$.}
    \label{fig:tightness}
\end{figure}
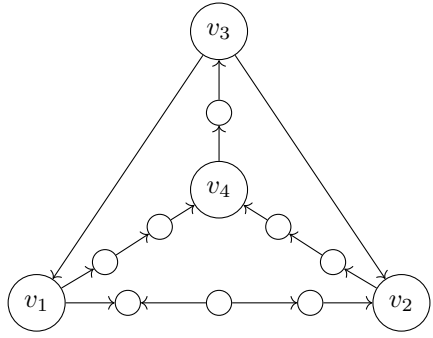

We prove Theorem~\ref{th push 3-critical density} by the method of contradiction using  the discharging and the potential methods. Before we begin, we note that the tightness of Theorem~\ref{th push 3-critical density} follows from the example of the pushably $3$-critical graph presented in Fig.~\ref{fig:tightness}.  Given an oriented graph $\overrightarrow{G}$, we define a potential function 
$$\rho(\overrightarrow{G}) = 15|V(\overrightarrow{G})| - 13|A(\overrightarrow{G})|.$$ 
Observe that to prove Theorem~\ref{th push 3-critical density} we can equivalently show that if $\overrightarrow{G}$ is a pushably $3$-critical oriented graph other than $\overrightarrow{C}_{-4}, \overrightarrow{E}_1, \overrightarrow{E}_2$, and $\overrightarrow{E}_3$, then $\rho(\overrightarrow{G}) \leq -2$. Notice that, $\overrightarrow{C}_{-4}$, $\overrightarrow{E}_1, \overrightarrow{E}_2,$ and $\overrightarrow{E}_3$, are all pushably $3$-critical oriented graphs and none of them have potential less than or equal to $-2$.
Since our proof is by contradiction, we will assume that 
there exist pushably $3$-critical oriented graphs, other than $\overrightarrow{C}_{-4}$, $\overrightarrow{E}_1, \overrightarrow{E}_2,$ and $\overrightarrow{E}_3$, having potential strictly greater than $-2$, that is, greater than or equal to $-1$, such that among all such examples,  $\overrightarrow{M}$ is minimal with respect to 
$|V(\overrightarrow{M})|+|A(\overrightarrow{M})|$. 
  For the rest of the proof, we build structural properties of $\overrightarrow{M}$ and show that it can not exist.

That means $\overrightarrow{M} \neq \overrightarrow{C}_{-4}$,
$\overrightarrow{M} \neq \overrightarrow{E}_i$ for all $i \in \{1, 2, 3\}$, $\overrightarrow{M}$ does not admit a pushable homomorphism to 
$\overrightarrow{C}_3$ while all its proper subgraphs do, and 
$\rho(\overrightarrow{M}) \geq -1$.
Moreover, if $\rho(\overrightarrow{G})\geq -1$ and $|V(\overrightarrow{G})|+|A(\overrightarrow{G})| < |V(\overrightarrow{M})|+|A(\overrightarrow{M})|$, then either $\overrightarrow{G}$ is $\overrightarrow{C}_{-4}$ or $\overrightarrow{E}_i$ for some $i \in \{1,2,3\}$, or it is not pushably $3$-critical.

\begin{observation}\label{potential value}
    Let $\overrightarrow{K}_n$ denote an orientation of the complete graph $K_n$ on $n$ vertices
    and $\overrightarrow{K_n - e}$ denote an orientation of the graph $K_n - e$ obtained by deleting 
    one edge from $K_n$. Then we have: 
    $\rho(\overrightarrow{K}_1)=15$, 
    $\rho(\overrightarrow{K}_2)=17$,
    $\rho(\overrightarrow{K}_3)=6$,
    $\rho(\overrightarrow{K_3 - e})=19$,
    $\rho(\overrightarrow{C}_{-4})=8$, and
    $\rho(\overrightarrow{E_i})=0$, for all $i \in \{1, 2, 3\}$. 
\end{observation}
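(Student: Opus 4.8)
The plan is to observe that the potential $\rho(\overrightarrow{G}) = 15|V(\overrightarrow{G})| - 13|A(\overrightarrow{G})|$ depends only on the numbers of vertices and arcs of $\overrightarrow{G}$; in particular it is unaffected by reversing any arc or by the particular orientation chosen. Hence for each graph listed in the statement it suffices to read off $|V|$ and $|A|$ and substitute into the formula.

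For the orientations of complete and near-complete graphs the counts are purely combinatorial: $\overrightarrow{K}_n$ has $n$ vertices and $\binom{n}{2}$ arcs, and $\overrightarrow{K_n - e}$ has $n$ vertices and $\binom{n}{2}-1$ arcs, independently of the orientation. This gives $\rho(\overrightarrow{K}_1) = 15 - 0 = 15$, $\rho(\overrightarrow{K}_2) = 30 - 13 = 17$, $\rho(\overrightarrow{K}_3) = 45 - 39 = 6$, and $\rho(\overrightarrow{K_3 - e}) = 45 - 26 = 19$. For $\overrightarrow{C}_{-4}$, being an orientation of $C_4$ (Fig.~\ref{fig:clique c_-4}), we have $|V(\overrightarrow{C}_{-4})| = |A(\overrightarrow{C}_{-4})| = 4$, so $\rho(\overrightarrow{C}_{-4}) = 60 - 52 = 8$. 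Finally, each of $\overrightarrow{E}_1, \overrightarrow{E}_2, \overrightarrow{E}_3$ has exactly $13$ vertices and $15$ arcs (this is visible in Fig.~\ref{fig:Ei} and recorded in its caption), so $\rho(\overrightarrow{E}_i) = 195 - 195 = 0$ for each $i \in \{1,2,3\}$.

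There is no genuine obstacle in this argument; the only point that warrants attention is confirming the vertex and arc counts of $\overrightarrow{E}_1, \overrightarrow{E}_2, \overrightarrow{E}_3$ directly from the figure, which is immediate. The role of this observation in what follows is bookkeeping: it certifies that none of the four exceptional graphs of Theorem~\ref{th push 3-critical density} satisfies $\rho \le -2$, which is precisely why they are excluded from the density bound, and it records the potentials of the small cliques $\overrightarrow{K}_3$ and $\overrightarrow{K_3 - e}$ that will appear as forbidden configurations in the reducibility analysis of $\overrightarrow{M}$.
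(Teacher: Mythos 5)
Your proposal is correct and matches the paper's (implicit) argument: the observation is stated without proof precisely because it is a direct substitution of the vertex and arc counts into $\rho(\overrightarrow{G}) = 15|V(\overrightarrow{G})| - 13|A(\overrightarrow{G})|$, which is exactly what you carry out, and all your counts (including $13$ vertices and $15$ arcs for each $\overrightarrow{E}_i$) and arithmetic are right.
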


Let $\overrightarrow{G}$ be an oriented graph.  Let $\overrightarrow{P}_n$ be an oriented path with $n$ arcs having endpoints $x$ and $y$. The oriented graph $P_n(\overrightarrow{G})$ is 
obtained by adding $\overrightarrow{P}_n$ to $\overrightarrow{G}$ and identifying $x$ and $y$ with some (not necessarily distinct) vertices  of 
$\overrightarrow{G}$. 

\begin{lemma}[Gap Lemma]\label{lem gap-lemma}
    Let $\overrightarrow{H}$ be a subgraph of $\overrightarrow{M}$. Then we have
    \begin{enumerate}[(i)]
        \item $\rho(\overrightarrow{H}) \geq -1$, if $\overrightarrow{H}=\overrightarrow{M}$,

        \item $\rho(\overrightarrow{H}) \geq 6$, if $\overrightarrow{H} \equiv_p \overrightarrow{C}_3$ or if
        $\overrightarrow{M} = P_4(\overrightarrow{H})$,

        \item $\rho(\overrightarrow{H}) \geq 7$, otherwise. 
    \end{enumerate}
\end{lemma}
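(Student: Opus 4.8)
The plan is to prove all three parts by exploiting the minimality of $\overrightarrow{M}$ with respect to $|V|+|A|$, together with the fact (established just before the lemma) that any oriented graph $\overrightarrow{G}$ with $\rho(\overrightarrow{G}) \geq -1$ and $|V(\overrightarrow{G})|+|A(\overrightarrow{G})| < |V(\overrightarrow{M})|+|A(\overrightarrow{M})|$ must be one of $\overrightarrow{C}_{-4}$, $\overrightarrow{E}_1$, $\overrightarrow{E}_2$, $\overrightarrow{E}_3$, or else fail to be pushably $3$-critical. Part (i) is immediate: if $\overrightarrow{H} = \overrightarrow{M}$ then $\rho(\overrightarrow{H}) = \rho(\overrightarrow{M}) \geq -1$ by the standing assumption on $\overrightarrow{M}$.

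For the remaining parts, let $\overrightarrow{H}$ be a \emph{proper} subgraph of $\overrightarrow{M}$, so $\overrightarrow{H}$ is pushably $3$-colorable. The main device is the following dichotomy on $\overrightarrow{H}$: either $\rho(\overrightarrow{H})$ is already large, or $\overrightarrow{H}$ has small potential and we derive a contradiction with the minimality of $\overrightarrow{M}$. Concretely, I would argue by contradiction: suppose $\overrightarrow{H}$ is a proper subgraph with $\rho(\overrightarrow{H}) \leq 6$ (for part (iii), $\rho(\overrightarrow{H}) \leq 6$; for part (ii) we are in the exceptional situation where equality $\rho(\overrightarrow{H}) \in \{6\}$-range is allowed). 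The idea is to build a smaller pushably $3$-critical graph by a replacement/contraction operation on $\overrightarrow{H}$ inside $\overrightarrow{M}$: since $\overrightarrow{H}$ is pushably $3$-colorable, one can (after suitable pushing, using Proposition~\ref{prop: basic KM}(ii)) identify the color classes of $\overrightarrow{H}$ with the three vertices of $\overrightarrow{C}_3$ and replace $\overrightarrow{H}$ by a small gadget — in the cleanest case, by a single oriented triangle $\overrightarrow{C}_3$ or even a single vertex — obtaining $\overrightarrow{M}'$ with $|V(\overrightarrow{M}')|+|A(\overrightarrow{M}')| < |V(\overrightarrow{M})|+|A(\overrightarrow{M})|$. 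One checks that $\overrightarrow{M}'$ is still not pushably $3$-colorable (any pushable $3$-coloring of $\overrightarrow{M}'$ would extend, via the coloring of $\overrightarrow{H}$, to one of $\overrightarrow{M}$), hence contains a pushably $3$-critical subgraph $\overrightarrow{N}$; using $\rho(\overrightarrow{C}_3) = 6$ and $\rho(\overrightarrow{H}) \leq 6$ one shows $\rho(\overrightarrow{N}) \geq -1$ as well. Then $\overrightarrow{N}$, being strictly smaller than $\overrightarrow{M}$, must be one of $\overrightarrow{C}_{-4}, \overrightarrow{E}_1, \overrightarrow{E}_2, \overrightarrow{E}_3$, and one traces back through the gadget replacement to conclude that $\overrightarrow{M}$ itself is one of the forbidden graphs — or, in the borderline cases, to conclude the weaker bound $\rho(\overrightarrow{H}) \geq 6$ rather than $\geq 7$, which is exactly the content of part (ii). The special role of $P_4(\overrightarrow{H})$ in part (ii) reflects that attaching an oriented path of length $4$ adds $\rho$-cost $15\cdot 3 - 13\cdot 4 = -7$, so a subgraph $\overrightarrow{H}$ with $\rho(\overrightarrow{H}) = 6$ gives $\rho(P_4(\overrightarrow{H})) = -1$, precisely on the boundary; this case has to be carved out because the replacement argument cannot push the bound up to $7$ there.

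The key steps, in order, are: (1) dispose of part (i) trivially; (2) set up the gadget-replacement operation that turns a low-potential proper subgraph $\overrightarrow{H}$ into a strictly smaller graph $\overrightarrow{M}'$ while preserving non-pushably-$3$-colorability, carefully tracking $\rho$ using $\rho(\overrightarrow{C}_3)=6$ and $\rho(\overrightarrow{K}_2)=17$, $\rho(\overrightarrow{K}_1)=15$ from Observation~\ref{potential value}; (3) apply minimality of $\overrightarrow{M}$ to the critical subgraph of $\overrightarrow{M}'$ to force it into the finite exceptional list; (4) reverse the replacement and the push operations to identify $\overrightarrow{M}$ itself, contradicting $\overrightarrow{M} \notin \{\overrightarrow{C}_{-4}, \overrightarrow{E}_1, \overrightarrow{E}_2, \overrightarrow{E}_3\}$, except in the two borderline configurations ($\overrightarrow{H} \equiv_p \overrightarrow{C}_3$ and $\overrightarrow{M} = P_4(\overrightarrow{H})$) which yield only $\rho(\overrightarrow{H}) \geq 6$; (5) assemble the three cases into the stated trichotomy.

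The main obstacle I anticipate is step (2): making the gadget replacement genuinely reduce $|V|+|A|$ while \emph{provably} preserving both non-$3$-colorability and the property that the resulting graph still has a critical subgraph of potential $\geq -1$. The subtlety is that $\overrightarrow{H}$ may attach to the rest of $\overrightarrow{M}$ at several vertices with prescribed colors, so the replacement gadget must simulate \emph{all} pushable $3$-colorings of $\overrightarrow{H}$ as seen from its boundary — not just one — and one needs to know that the colorings of $\overrightarrow{H}$, restricted to the boundary, realize exactly the homomorphisms of the boundary into $\overrightarrow{C}_3$ (equivalently into $AT(\overrightarrow{C}_3)$ by Observation~\ref{obs C3=3}). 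Handling the case where $\overrightarrow{H}$ has $\rho(\overrightarrow{H})$ strictly between $-1$ and $6$ versus exactly $6$ (the $P_4$-case) cleanly, and ensuring the push operations from Proposition~\ref{prop: basic KM}(ii) compose correctly across the replacement, will require the most care.
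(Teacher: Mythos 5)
Your overall skeleton is the same as the paper's: color the low-potential proper subgraph $\overrightarrow{H}$ by $\overrightarrow{C}_3$, contract its color classes onto the three vertices of $\overrightarrow{C}_3$ to get a strictly smaller $\overrightarrow{M}'$, note that a pushable $3$-coloring of $\overrightarrow{M}'$ would lift to $\overrightarrow{M}$ (only one coloring of $\overrightarrow{H}$ is needed, so your closing worry about simulating \emph{all} boundary colorings is unnecessary), extract a pushably $3$-critical subgraph of $\overrightarrow{M}'$, and invoke the minimality of $\overrightarrow{M}$. But there is a genuine gap in how you close the argument, and it sits exactly where you are vaguest. First, the paper does not take an arbitrary counterexample $\overrightarrow{H}$: it takes one that is \emph{maximal} with respect to $|V|+|A|$. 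This extremal choice is used twice and is not cosmetic: (a) it rules out $P_2(\overrightarrow{H})$ and $P_3(\overrightarrow{H})$ as subgraphs of $\overrightarrow{M}$ (their potentials would be at most $-3$, making them larger counterexamples), and this is precisely what guarantees that the contracted graph $\overrightarrow{M}'$ is a legitimate oriented graph containing no $\overrightarrow{C}_{-4}$, and also what gives the arc-count inequality $A[V(\overrightarrow{M}'')\setminus X, X] \leq A[V(\overrightarrow{M}'')\setminus X, V(\overrightarrow{H})]$ needed when pulling back; your proposal never addresses why the contraction cannot create digons, parallel arcs, or a $\overrightarrow{C}_{-4}$; (b) it supplies the final contradiction: the pull-back of the critical subgraph $\overrightarrow{M}''$ is only a proper subgraph $\overrightarrow{H}'$ of $\overrightarrow{M}$ strictly containing $\overrightarrow{H}$, with $\rho(\overrightarrow{H}') \leq \rho(\overrightarrow{M}'') - \rho(\overrightarrow{M}'[X]) + \rho(\overrightarrow{H}) \leq -2$, i.e.\ a \emph{bigger} counterexample. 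One does not, and cannot, ``trace back through the gadget replacement to conclude that $\overrightarrow{M}$ itself is one of the forbidden graphs'': without an extremal choice you have merely produced another low-potential subgraph and would have to iterate.

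Second, your intermediate claim that ``using $\rho(\overrightarrow{C}_3)=6$ and $\rho(\overrightarrow{H}) \leq 6$ one shows $\rho(\overrightarrow{N}) \geq -1$'' is both unjustified and the wrong direction. The potential $\rho = 15|V| - 13|A|$ is not monotone under passing to subgraphs, so a lower bound on $\rho(\overrightarrow{M}')$ says nothing about the critical subgraph $\overrightarrow{N} \subseteq \overrightarrow{M}'$. The correct use of the minimality of $\overrightarrow{M}$ is as an \emph{upper} bound: since $\overrightarrow{N}$ is strictly smaller than $\overrightarrow{M}$ and $\overrightarrow{C}_{-4}$ has been excluded, either $\overrightarrow{N} \equiv_p \overrightarrow{E}_i$ (so $\rho(\overrightarrow{N}) = 0$) or $\rho(\overrightarrow{N}) \leq -2$; combining this with $\rho(\overrightarrow{M}'[X]) \geq 6$ when $X$ induces a triangle (and $\geq 15$ otherwise, the triangle-freeness of the $\overrightarrow{E}_i$ handling the mixed case) is what drives $\rho(\overrightarrow{H}') \leq -2$. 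Your treatment of part (ii) is fine in spirit (the $P_4$ computation $\rho(P_4(\overrightarrow{H})) = \rho(\overrightarrow{H}) - 7$ is correct, and $\overrightarrow{H} \equiv_p \overrightarrow{C}_3$ has potential exactly $6$ by Observation~\ref{potential value}), but as written the core of parts (ii)--(iii) does not close without the maximal-counterexample framework and with the potential inequality pointing the wrong way.
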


\begin{proof}
    Let $\overrightarrow{H}$ be a maximal 
    (with respect to $|V(H)| + |A(H)|$) counterexample to Lemma~\ref{lem gap-lemma}. Firstly, if $\overrightarrow{H} = \overrightarrow{M}$, 
    then $\rho(\overrightarrow{H}) \geq -1$ by our assumption. 
    Moreover, if 
    $\overrightarrow{H} \equiv_p \overrightarrow{C}_3 = \overrightarrow{K}_3$, then 
    $\rho(\overrightarrow{H}) =6$  by Observation~\ref{potential value}. 
    Furthermore, since $P_k(\overrightarrow{H})$ 
    has $k-1$ vertices and $k$ arcs more than $\overrightarrow{H}$, we have
    \begin{equation}\label{eq Pk-H}
      \rho(P_k(\overrightarrow{H})) = \rho(\overrightarrow{H}) + (15 \times (k-1)) - (13 \times k) = \rho(\overrightarrow{H})  +2(k-1) - 13.  
    \end{equation}
Thus, in particular, if $P_k(\overrightarrow{H}) = \overrightarrow{M}$ for $k \in \{2,3,4\}$, then using Equation~(\ref{eq Pk-H}) we have $\rho(\overrightarrow{H}) \geq 6$, with the inequality being strict for $k \in \{2,3\}$.

 So, if $\overrightarrow{H}$ requires to be a maximal counter example, then $\overrightarrow{H} \neq \overrightarrow{M}$, 
 $\overrightarrow{H} \not\equiv_p \overrightarrow{C}_3$, and  
    $P_4(\overrightarrow{H}) \neq \overrightarrow{M}$. 
    Moreover, $\overrightarrow{H}$ is a proper subgraph of $\overrightarrow{M}$ satisfying $\rho(\overrightarrow{H}) \leq 6$.

Using Observation~\ref{obs C3=3} we can say that any oriented 
graph  having $3$ or less vertices, other than 
$\overrightarrow{C}_3$ (up to  push equivalence)
have potential at least $7$. Hence, 
we may assume $\overrightarrow{H}$ to have at least 
$4$ vertices. Since $\overrightarrow{H}$ is a proper 
subgraph of $\overrightarrow{M}$, it must admit a 
pushable homomorphism to $\overrightarrow{C}_3$ by 
Observation~\ref{obs C3=3} as $\overrightarrow{M}$ is 
pushably $3$-critical. 

 This implies the existence of a homomorphism $f: \overrightarrow{H}^S \rightarrow \overrightarrow{C}_3$  for some 
 $S \subseteq V(\overrightarrow{H})$. 
 Without loss of generality we may replace $\overrightarrow{M}^S$ (resp., $\overrightarrow{H}^S$) 
 with $\overrightarrow{M}$ (resp., $\overrightarrow{H}$) to simplify the notation. Thus, we may assume that 
 $f$ is a homomorphism of $\overrightarrow{H}$ to 
 $\overrightarrow{C}_3$.

 Next, replace the vertices of 
 $\overrightarrow{H}$ with the vertices of $\overrightarrow{C}_3$ in $\overrightarrow{M}$ to obtain a graph $\overrightarrow{M}'$. 
 If a vertex $u \in V(\overrightarrow{M}) \setminus V(\overrightarrow{H})$ is adjacent to a vertex $v \in V(\overrightarrow{H})$ in $\overrightarrow{M}$, then in $\overrightarrow{M}'$ we make $u$ and $f(v)$ adjacent. Moreover, we match the direction of the arc between $u$ and $f(v)$ with the direction of the arc between $u$ and $v$. For convenience, the set of vertices of the above mentioned $\overrightarrow{C}_3$ is denoted by 
 $\{\overline{0},\overline{1},\overline{2}\}$. 

By Equation~(\ref{eq Pk-H}) and from our assumption, we know that 
$\rho(P_k(\overrightarrow{H})) \leq -3$ for $k \in \{2, 3\}$. Thus,  $\overrightarrow{M}$ cannot contain a $P_k(\overrightarrow{H})$ as a subgraph for $k \in \{2,3\}$ as it contradicts the maximality of $\overrightarrow{H}$.
This implies that $\overrightarrow{M}'$ must be an oriented graph  and does not contain $\overrightarrow{C}_{-4}$ as a subgraph.

Let $$f_{ext}(x) = 
\begin{cases}
f(x) & \text{ if } x \in V(\overrightarrow{H}),\\
   x & \text{ if } x \in V(\overrightarrow{M}) \setminus V(\overrightarrow{H}).
\end{cases}
$$
Note that $f_{ext}$ is a homomorphism of $\overrightarrow{M}$ to $\overrightarrow{M}'$. Therefore, 
if there exists a pushable homomorphism 
$g: \overrightarrow{M}' \xrightarrow{push} \overrightarrow{C}_3$, then 
$g \circ f_{ext}$ is a pushable homomorphism of 
$\overrightarrow{M}$ to $\overrightarrow{C}_3$ by composition. However, since $\overrightarrow{M}$ is pushably $3$-critical, it can not admit a pushable homomorphism to $\overrightarrow{C}_3$. Hence, there does not exist any pushable homomorphism of $\overrightarrow{M}'$ to $\overrightarrow{C}_3$. Thus, $\overrightarrow{M}'$ must contain a pushably $\overrightarrow{C}_3$-critical 
(equivalently, pushably $3$-critical by Observation~\ref{obs C3=3}) oriented subgraph $\overrightarrow{M}''$.

Let $X = \{\overline{0},\overline{1},\overline{2}\} \cap V(\overrightarrow{M}'')$.  On the one hand,  if $X = \emptyset$, then $\overrightarrow{M}''$ is a proper subgraph of $\overrightarrow{M}$, which is impossible since both are pushably $3$-critical. On the other hand, 
if $V(\overrightarrow{M}'') \subseteq X$, then $\overrightarrow{M}''$ is a subgraph of $\overrightarrow{C}_3$, and thus cannot be $\overrightarrow{C}_3$-critical. Therefore, $\overrightarrow{M}''$ has some vertices inside $X$ and some outside, that is, $V(\overrightarrow{M}'') \setminus X \neq \emptyset$ and $V(\overrightarrow{M}'') \cap X \neq \emptyset.$

Now, we consider the set 
$$Y = (V(\overrightarrow{M}'') \setminus X) \cup V(\overrightarrow{H)}.$$ Let $\overrightarrow{H}'$ be the oriented subgraph of $\overrightarrow{M}$ induced by $Y$. Since $V(\overrightarrow{M}'') \setminus X$ is non-empty, $\overrightarrow{H}'$ must 
have more vertices and arcs than $\overrightarrow{H}$. 
In particular, the number of vertices and arcs in $\overrightarrow{H}'$ are
$$|V(\overrightarrow{H}')| = |V(\overrightarrow{M}'')| - |X| + |V(\overrightarrow{H})|$$
and  
$$|A(\overrightarrow{H}')| = |A(\overrightarrow{M}'')| - |A(\overrightarrow{M}'[X])| - A[V(\overrightarrow{M}'') \setminus X, X] +  
A[V(\overrightarrow{M}'') \setminus X, V(\overrightarrow{H})] + |A(\overrightarrow{H})|.$$
Since $\overrightarrow{M}$ does not contain any $P_2(\overrightarrow{H})$ as a subgraph, we have 
$$A[V(\overrightarrow{M}'') \setminus X, X] \leq  
A[V(\overrightarrow{M}'') \setminus X, V(\overrightarrow{H})],$$
and thus we have 
$$|A(\overrightarrow{H}')| \geq |A(\overrightarrow{M}'')| - |A(\overrightarrow{M}'[X])| + |A(\overrightarrow{H})|.$$

Hence, 
\begin{equation}\label{eq gap calculation}
\rho(\overrightarrow{H}') \leq \rho(\overrightarrow{M}'')
- \rho(\overrightarrow{M}'[X]) + \rho(\overrightarrow{H})
\end{equation}
Note that, $\rho(\overrightarrow{M}'') \leq 0$, and unless $\overrightarrow{M}'' \equiv_p \overrightarrow{E}_i$ for some $i \in \{1,2,3\}$, $\rho(\overrightarrow{M}'') \leq -2$ as otherwise it will contradict the minimality of $\overrightarrow{M}$. 
 On the other hand, $\rho(\overrightarrow{M}'[X]) = 6$ if $\overrightarrow{M}'[X] \equiv_p \overrightarrow{C}_3$ and $\rho(\overrightarrow{M}'[X]) \geq 15$ otherwise. 
Moreover, $\rho(\overrightarrow{H}) \leq 6$ according to our assumption.  Observe that all $E_i$'s are triangle-free. Thus, when $\rho(\overrightarrow{M}'[X]) = 6$, we have 
$\rho(\overrightarrow{M}'') \leq -2$. 
Therefore, \\
\[ \rho(\overrightarrow{H}') \leq
\begin{cases}
     -2 - 6 + 6 = -2,  ~\text{if}~ \overrightarrow{M}'[X] \equiv_p \overrightarrow{C}_3, \\
     0 - 15 + 6 = -9, ~\text{otherwise.}~
\end{cases}\]

Thus, no matter what, $\rho(\overrightarrow{H}') \leq -2$, and hence 
$\overrightarrow{H}'$ contradicts the maximality of $\overrightarrow{H}$ and this completes the proof of the lemma.
\end{proof}

\subsection{List of reducible configurations of $\overrightarrow{M}$}

Given a graph $G$, a vertex of degree exactly 
(resp., at least, at most) $k$ is called a 
\emph{$k$-vertex (resp., $k^+$-vertex, $k^-$-vertex)}. 
A \textit{$k$-chain} of $G$ is a path having $k+1$ vertices with  $3^+$-vertices as endpoints and 
$2$-vertices as internal vertices. 
The endpoints of a $k$-chain are \emph{chain-adjacent} to each other and 
the internal vertices and arcs of a $k$-chain are \textit{chain-incident} to the endpoints. 
If a $k$-vertex (for some $k \geq 3$) 
is chain-adjacent to exactly (resp., at least, at most) $t$  $2$-vertices, 
then we call it a \textit{$k^{t}$-vertex  (resp., $k^{\geq t}$-vertex, $k^{\leq t}$-vertex}). 
Furthermore, suppose $t=(t_1+t_2+\cdots+t_k)$ and a 
$k^t$-vertex $v$ has exactly $t_i$ many $2$-vertices in its $i^{th}$ chain (with respect to some pre-defined indexing of the $k$ chains incident with $v$). In this case, we say that $v$ is a $k_{t_1, t_2, \ldots, t_k}$-vertex. 
If we use such notation to  describe any vertex $v$ of an oriented graph 
$\overrightarrow{G}$,
we are actually meaning the description of $v$ in the underlying graph $G$ of $\overrightarrow{G}$. 
In some of the figures, where we present the list of ``reducible configurations'' (formally defined later), we use such notation inside a ``shaded circle'' to denote a vertex having such a property.

Let $\overrightarrow{G}$ be an oriented graph with a subset $X \subseteq V(\overrightarrow{G})$. Let $f: V(\overrightarrow{G}) \setminus X \to V(\overrightarrow{C}_3)$ be a function. 
We call the vertices of $V(\overrightarrow{G}) \setminus X$ as \textit{colored vertices}, and the vertices of $X$ as \textit{uncolored vertices}.  
Suppose there exists a push equivalent orientation $\overrightarrow{G}^S$ of $\overrightarrow{G}$
such that for any  arc $uv$ of 
$\overrightarrow{G}^S[V(\overrightarrow{G}) \setminus X]$, there is an arc from $f(u)$ to $f(v)$ in 
$\overrightarrow{C}_3$. In such a scenario, we say that $f$ is a 
\textit{partial pushable homomorphism} of 
$\overrightarrow{G}$ to $\overrightarrow{C}_3$. 
Equivalently, we say that $f$ is a 
\textit{partial homomorphism} of $\overrightarrow{G}^S$ to $\overrightarrow{C}_3$.
In practice, whenever we start with a partial pushable homomorphism 
$f$ of $\overrightarrow{G}$ to $\overrightarrow{C}_3$, 
we assume without loss of generality that $\overrightarrow{G}$ has an orientation  
such that $f$ is a partial homomorphism unless otherwise stated.

Given a partial (pushable) homomorphism 
$f: V(\overrightarrow{G}) \setminus X \to \overrightarrow{C}_3$, 
if it is possible to push a vertex subset $X' \subseteq X$ so that 
$f$ can be extended to a (pushable) homomorphism of 
$\overrightarrow{G}^{X'}$ to $\overrightarrow{C}_3$, then we say that $f$ is \textit{extendable}.

Let $\overrightarrow{P}$ be an \textit{oriented $(x,y)$-path}, that is, an oriented 
path starting at vertex $x$ and ending at vertex $y$.
The oriented path $\overrightarrow{P}$ is 
odd (resp., even) if it has odd (resp., even) number of forward arcs, and the particular orientation of $P$ is called an odd (resp., even) orientation. Here forward (resp., backward) arcs are determined with respect to traversal from $x$ to $y$. 
For an oriented $(x,y)$-path 
$\overrightarrow{P}$, 
let $f$ be a partial (pushable) homomorphism of $\overrightarrow{P}$ to 
$\overrightarrow{C}_3$ with $x$ and $y$ being the only colored vertices.  If $f$ is extendable (resp., not extendable), 
then we say that the color $f(x)$ at $x$ \textit{allows} (resp., \textit{forbids}) 
the color $f(y)$ at $y$. Since $\overrightarrow{C}_3$ is vertex transitive, it makes sense to not mention the color $f(x)$ at $x$ while noting the number of allowed or forbidden colors at $y$. 

\begin{table}[ht]
   \centering
    \begin{tabular}{|c|c|c|c|}
    \hline
        $k=$ & Path type & Allowed colors at $y$ & Forbidden colors at $y$\\
        \hline
        $1$ & even & $\overline{i+2}$ & $\overline{i}, \overline{i+1}$\\
        $2$ & even & $\overline{i+1}, \overline{i+2}$ & $\overline{i}$\\
        $3$ & even & $\overline{i}, \overline{i+1}$ & $\overline{i+2}$\\
        $4$ & even & $\overline{i}, \overline{i+1}, \overline{i+2}$ & $\phi$\\
        $5$ & even & $\overline{i}, \overline{i+1}, \overline{i+2}$ & $\phi$\\
        \hline
        $1$ & odd & $\overline{i+1}$ & $\overline{i}, \overline{i+2}$\\
        $2$ & odd & $\overline{i}$ & $ \overline{i+1}, \overline{i+2}$\\
        $3$ & odd & $\overline{i},  \overline{i+2}$ & $\overline{i+1}$\\
        $4$ & odd & $ \overline{i+1}, \overline{i+2}$ & $\overline{i}$\\
        $5$ & odd & $\overline{i}, \overline{i+1}, \overline{i+2}$ & $\phi$\\
        \hline
    \end{tabular}
    \caption{The presentation of Observation~\ref{obs path analysis}.}
    \label{table path analysis}
    \end{table}

\begin{observation}
Given any two odd (resp., even) orientations 
$\overrightarrow{P}$ and $\overrightarrow{P}'$, it is possible to obtain $\overrightarrow{P'}$ from $\overrightarrow{P}$ by pushing some internal vertices. 
\end{observation}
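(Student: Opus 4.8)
The statement to prove is: given two odd (resp.\ even) orientations $\overrightarrow{P}$ and $\overrightarrow{P}'$ of the same underlying path $P$, one can obtain $\overrightarrow{P}'$ from $\overrightarrow{P}$ by pushing some subset of the internal vertices. The plan is to track how a push affects the \emph{parity} of the number of forward arcs. Fix the traversal direction from $x$ to $y$ and label the edges $e_1, \ldots, e_n$ in order, so each edge is either ``forward'' or ``backward'' in a given orientation. Write the orientation as a binary string in $\{F,B\}^n$; the parity of the count of $F$'s is the quantity that distinguishes odd from even orientations.

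First I would analyze the effect of pushing a single vertex $v$. If $v$ is an internal vertex, it is incident to exactly two consecutive edges, say $e_i$ and $e_{i+1}$; pushing $v$ reverses both of them, hence flips both corresponding letters in the string. One checks case by case that flipping both letters of an adjacent pair $(e_i,e_{i+1})$ changes the number of $F$'s by $-2$, $0$, or $+2$ — in all cases an even amount — so the parity is preserved. (Pushing an endpoint $x$ or $y$ reverses only one edge and flips parity; this is why we are restricted to internal vertices.) So pushing internal vertices is an operation that acts within a fixed parity class, which is consistent with the claim.

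Next I would show this action is \emph{transitive} on each parity class. Two natural approaches: (a) an explicit normal-form argument — show that by a sequence of internal pushes any orientation can be brought to one of two canonical orientations (the all-forward path and, say, the path with a single backward arc $e_1$), and these two canonical orientations already represent the two parity classes; or (b) induction on $n$, peeling off the last edge. For approach (b): given $\overrightarrow{P}$ and $\overrightarrow{P}'$ of the same parity, look at the edge $e_n$ incident to $y$ and the internal vertex $v_{n-1}$. If $e_n$ has the same direction in both, restrict to the sub-path on $e_1,\dots,e_{n-1}$, apply induction (the sub-orientations have the same parity), and the internal pushes used there do not touch $e_n$ (pushing $v_{n-1}$ would, but one can arrange to treat $v_{n-1}$ as the ``new endpoint'' — this needs a little care). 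If $e_n$ differs, first push $v_{n-1}$ once to flip $e_n$ (this also flips $e_{n-1}$, which is harmless and will be corrected in the inductive step), then proceed as before. Concretely it is cleanest to prove the equivalent statement: \emph{for any orientation $\overrightarrow{P}$, the set of orientations reachable by internal pushes is exactly the parity class of $\overrightarrow{P}$}, by showing reachability of a canonical representative, which makes transitivity immediate.

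The main obstacle is the bookkeeping in the inductive/normal-form step, specifically ensuring that the internal pushes chosen for the sub-path do not disturb edges already fixed — i.e.\ being careful about the shared vertex $v_{n-1}$ that is internal to $P$ but an endpoint of the sub-path. The parity-invariance of a single internal push (the ``only if'' direction, establishing that parity is the \emph{only} obstruction was already half done) is routine once the $\{F,B\}$-string encoding is set up; the real content is the constructive reachability argument, which I would present as: reduce to all-forward by repeatedly pushing, from right to left, each internal vertex whose right-incident edge is backward, observing this strictly decreases (a suitable potential such as) the index of the last backward arc, except that the parity class containing the all-forward orientation is closed under this and the other class collapses to a single marked backward edge.
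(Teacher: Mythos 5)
Your argument is correct; note that the paper itself states this observation without proof, treating it as immediate, so there is no paper argument to compare against. Your two ingredients are exactly what is needed: pushing an internal vertex flips two consecutive arcs and hence preserves the parity of the number of forward arcs, and within a parity class every orientation is reachable by internal pushes. For the second part you could streamline the bookkeeping you worry about: rather than reducing to a canonical form or inducting with care at $v_{n-1}$, simply sweep once from $v_1$ to $v_{n-1}$, pushing $v_i$ exactly when the current orientation of $e_i$ disagrees with $\overrightarrow{P}'$; after the sweep $e_1,\dots,e_{n-1}$ agree with $\overrightarrow{P}'$, and $e_n$ must agree as well because both orientations have the same forward-arc parity and each push preserved it. (Equivalently, pushing a set $S$ of internal vertices flips precisely the arcs with exactly one endpoint in $S$, so the achievable flip patterns are exactly those flipping an even number of arcs, and two orientations of the same parity differ in an even number of arcs.) Either way, your proposal is a complete and correct justification of the observation.
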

 Recall that the set of vertices of the 
directed $3$-cycle $\overrightarrow{C}_3$ is the elements of additive cyclic group 
$\mathbb{Z}/3\mathbb{Z} = \{\overline{0},\overline{1},\overline{2}\}$ 
while the arcs are of the form $\overline{i}~\overline{(i+1)}$ where 
$\overline{i} \in \mathbb{Z}/3\mathbb{Z}$.

\begin{observation}\label{obs path analysis}
    Let $\overrightarrow{P}_k$ be an oriented (x,y)-path on $k+1$ vertices. Then the color $\overline{i} \in \mathbb{Z}/3\mathbb{Z}$ at $x$ allows and forbids the set of colors at $y$ according to Table no~\ref{table path analysis}. 
 \end{observation}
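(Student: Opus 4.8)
The plan is to reduce the observation to an elementary count in $\mathbb{Z}/3\mathbb{Z}$. Fix the given orientation $\overrightarrow{P}_k$ with vertex sequence $x = u_0, u_1, \dots, u_k = y$, and let $a$ denote the number of its \emph{forward} arcs, i.e.\ those of the form $u_tu_{t+1}$; the remaining $k-a$ arcs are \emph{backward}, and by definition the path type is even or odd according to the parity of $a$. The first point I would record is that a homomorphism of a \emph{fixed} oriented path to $\overrightarrow{C}_3$ is rigid: once the image of $u_0$ is chosen, the constraint at each successive arc forces $f(u_{t+1}) = f(u_t) + \overline{1}$ across a forward arc and $f(u_{t+1}) = f(u_t) - \overline{1}$ across a backward arc, and conversely every such propagation gives a valid homomorphism. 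Hence, with $f(x) = \overline{i}$, any homomorphism of this orientation sends $y$ to $\overline{\,i + a - (k-a)\,} = \overline{\,i + 2a - k\,}$.

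Next I would bring in the push operation together with the preceding observation. Pushing an internal vertex of the path flips exactly its two incident arcs, so it changes $a$ by $0$ or $\pm 2$ and preserves the parity of $a$; moreover, by the preceding observation, every orientation of $P_k$ having the same parity as $\overrightarrow{P}_k$ arises from $\overrightarrow{P}_k$ by pushing some set of internal vertices. Since ``extendable'' permits pushing exactly the (uncolored) internal vertices, it follows that the color $\overline{i}$ at $x$ allows the color $\overline{j}$ at $y$ along a path of a prescribed type (parity $\varepsilon$ of $a$) if and only if there is an integer $a$ with $0 \le a \le k$, $a \equiv \varepsilon \pmod 2$, and $j - i \equiv 2a - k \pmod 3$. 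The forbidden colors are precisely the remaining residues of $\mathbb{Z}/3\mathbb{Z}$, so it suffices to determine the allowed set.

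Finally, I would run this finite check for $k \in \{1,2,3,4,5\}$ and each parity, listing the admissible values of $a$ and the residues $2a - k \bmod 3$ they produce. For instance, for $k = 1$ of even type only $a = 0$ is admissible, giving $j - i \equiv -1 \equiv 2 \pmod 3$, so $\overline{i+2}$ is allowed and $\overline{i}, \overline{i+1}$ are forbidden; for $k = 3$ of odd type the admissible values are $a \in \{1,3\}$, giving $j - i \equiv -1 \equiv 2$ or $j - i \equiv 3 \equiv 0 \pmod 3$, so $\overline{i}, \overline{i+2}$ are allowed and $\overline{i+1}$ is forbidden; and whenever $a$ can take three values of the prescribed parity inside $[0,k]$ --- which happens for $k \in \{4,5\}$ in the even case and for $k = 5$ in the odd case --- the quantity $2a$ runs over all of $\mathbb{Z}/3\mathbb{Z}$ and every color at $y$ is allowed. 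Repeating this for the remaining rows reproduces Table~\ref{table path analysis} entry by entry. The only step carrying any content is the equivalence in the second paragraph, which rests solely on the rigidity of homomorphisms of a fixed oriented path and on the cited inter-reachability of same-parity orientations; everything after that is routine bookkeeping, so I do not anticipate a real obstacle.
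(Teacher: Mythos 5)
Your argument is correct and matches the intended justification: the paper states this observation without proof, relying implicitly on exactly the two facts you use, namely the preceding unnumbered observation that same-parity orientations of a path are inter-reachable by pushing internal vertices, and the rigidity of path homomorphisms into $\overrightarrow{C}_3$ forcing $f(y)=\overline{i+2a-k}$. Your finite check then reproduces Table~\ref{table path analysis} row by row, so the proposal is a faithful filling-in of the omitted routine verification rather than a different route.
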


Let $X \subseteq V(\overrightarrow{G})$ be a vertex subset of $\overrightarrow{G}$. 
If a vertex $v \in V(\overrightarrow{G}) \setminus X$ is adjacent to some vertex of $X$, then it is called a 
\textit{boundary point} of $X$. 
The \textit{closure} of $X$, denoted by $X^*$ and often referred to as a \textit{configuration}, is the oriented graph induced by $X$ and its boundary points. 
The set of all boundary points of $X$ (or $X^*$) is denoted by 
$\partial X^*$, and is called 
the \textit{boundary} of $X$ (or $X^*$). 
Finally, the $X^*$ is a \textit{reducible configuration} if any partial 
(pushable) homomorphism $f$ of $\overrightarrow{G}$, with the set of uncolored vertices $X$, is extendable.

\subsubsection{Using criticality}

\begin{lemma}\label{lem reducible-critical}
    The configurations listed below are reducible. See Fig.~\ref{fig:Ci} for a pictorial reference. 
\begin{enumerate}[(C1)]
    \item A $k$-chain for $k \geq 5$. 

    \item A $3^{\geq 7}$ vertex.
    
    \item A $4^{\geq 11}$ vertex. 


    \item A $3^5$-vertex adjacent to a 
    $3^5$-vertex. 

    \item A $3^6$-vertex adjacent to a 
    $4^{9}$-vertex. 

\item A $3^6$-vertex $1$-chain-adjacent to a 
$4^{\geq 9}$-vertex.

\item A $3^{\geq 5}$-vertex $1$-chain-adjacent to a $4^{10}$-vertex.




\item  A $3^4$-vertex adjacent to a $3^5$-vertex and $1$-chain-adjacent to a $4^{10}$-vertex.

\item  A $4^7$-vertex adjacent to a $3^6$-vertex and $1$-chain-adjacent to another $3^{6}$-vertex.




\item  A $3^{4}$-vertex $1$-chain-adjacent to a $3_{3,1,1}$-vertex and $1$-chain-adjacent to a $4^{10}$-vertex.

\item  A $4^{\geq7}$-vertex $1$-chain-adjacent to two $3^6$-vertices.




\item  A $4^{6}$-vertex $1$-chain-adjacent to three $3^6$-vertices.

\item A $4^{4}$-vertex $1$-chain-adjacent to four
$3^6$-vertices.

\item A directed $6$-cycle $\overrightarrow{C}$ of the form $xyu_1zu_2u_3x$ 
where $u_1, u_2, u_3$ are $2$-vertices, $x$ is a $3^5$-vertex, $y$ is a $3^{4}$-vertex, and $z$ is a $3^{3}$-vertex. 

\item A directed $6$-cycle $\overrightarrow{C}$ of the form $xyu_1zu_2u_3x$ 
where $u_1, u_2, u_3$ are $2$-vertices, $x$ is a $3^5$-vertex, $y$ is a $3^{\geq 2}$-vertex, and $z$ is a $3^{\geq 4}$-vertex.

\item A directed $6$-cycle $\overrightarrow{C}$ of the form $xu_1yu_2zu_3x$ 
where $u_1, u_2, u_3$ are $2$-vertices, $x$ is a $3^5$-vertex, 
$y$ is a $3^{\geq 4}$-vertex, and $z$ is a $3^{\geq 2}$-vertex. 

\end{enumerate} 
\end{lemma}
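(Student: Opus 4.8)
The plan is to prove Lemma~\ref{lem reducible-critical} configuration by configuration, in each case arguing by contradiction: assume $\overrightarrow{M}$ contains configuration $(Ci)$ as a (closure of a) vertex set $X$, take an arbitrary partial pushable homomorphism $f$ of $\overrightarrow{M}$ with uncolored set $X$ (this exists because $\overrightarrow{M}-X$, being a proper subgraph, is pushably $3$-colorable), and show $f$ is extendable; this contradicts pushable $3$-criticality. The central tool is Observation~\ref{obs path analysis} together with Table~\ref{table path analysis}: once the boundary points receive colors from $\overrightarrow{C}_3$, each $2$-vertex chain of length $k$ connecting a boundary point $y$ to an uncolored high-degree vertex $x$ contributes a set of forbidden colors at $x$ whose size depends only on the parity of the chain and on $k\bmod$ small numbers. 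The whole argument is a counting game: a $3$-vertex or $4$-vertex $x$ gets forbidden at most one or two colors per short chain, so if $x$ is incident to enough long chains (length $\geq 5$, which forbid nothing by the last rows of the table) or enough $2$-vertices, then fewer than three colors are forbidden and $x$ can be colored; we then propagate outward along chains using the ``allows'' columns, reversing arcs on internal $2$-vertices as needed (legitimate since pushing internal vertices of a path realizes any orientation of the prescribed parity).

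I would organize the proof as follows. First dispose of $(C1)$: a $5^+$-chain has at least four internal $2$-vertices; color the two endpoints arbitrarily (they are already colored, being boundary), then Table~\ref{table path analysis} shows a path with $k\geq 4$ forbids $\phi$ colors in the even case and a $k=5$ path forbids $\phi$ in both parities, so after possibly splitting the chain we can always complete it — more carefully, any $(x,y)$-subpath of length $\geq 4$ can be completed for any prescribed end colors by choosing orientations of internal vertices, so a $5$-chain is reducible. Next handle $(C2),(C3),(C13)$: an uncolored $3^{\geq 7}$ or $4^{\geq 11}$ vertex $v$ — wait, these vertices are uncolored only via their incident $2$-vertices; here $v \in X$ together with all its chain-incident $2$-vertices, and $\partial X^*$ consists of the far endpoints. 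The point is that $v$ has high degree but each incident chain, having many $2$-vertices, is ``long'' (length $\geq$ the relevant threshold) and hence forbids nothing at $v$; so $v$ can be colored and each chain completed. For the remaining configurations $(C4)$–$(C17)$, which pair or cluster low-surplus high-degree vertices joined by short ($1$- or length-$2$) chains, I would set up $X$ to be the union of the named high-degree vertices and all $2$-vertices on the chains among them; enumerate the colors of the (few) boundary points; for each boundary coloring, count forbidden colors at each uncolored vertex using the table; and exhibit an explicit extension, handling the directed-$6$-cycle cases $(C14)$–$(C17)$ by noting the cycle is even/odd and treating the wrap-around constraint as one more forbidden-color computation. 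Many of these are genuinely casework-heavy, so I would present one representative case in full and state that the rest are analogous, or defer them to figures.

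The main obstacle is the volume and bookkeeping of the case analysis for $(C4)$ through $(C17)$: each configuration requires choosing the uncolored set correctly, correctly reading off parities of the short chains (which are not fully determined — one gets to choose them, since pushing is allowed), and verifying that across all boundary colorings a valid extension exists. The subtle technical point throughout is the interplay between pushing and the parity of chains: when I say a chain ``allows'' or ``forbids'' a color, the orientation of the chain is not fixed, so I must be careful that the relevant rows of Table~\ref{table path analysis} are read for the parity I am actually free to pick, and that pushing a $2$-vertex to change a chain's parity does not disturb a previously completed chain (it does not, since internal $2$-vertices of distinct chains are disjoint). A secondary subtlety is in the $6$-cycle configurations, where one boundary vertex may lie on two of the three chains simultaneously, so its single color must be consistent with two forbidden-color computations; this is where the precise degree hypotheses ($3^5$, $3^{\geq 4}$, $3^{\geq 2}$) are tuned to make the count work, and I would double-check those numbers against the table before claiming reducibility.
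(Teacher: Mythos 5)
Your overall strategy is the same as the paper's: color the boundary of each configuration arbitrarily, then count allowed/forbidden colors along chains via Observation~\ref{obs path analysis} and Table~\ref{table path analysis}, using pushes to select favorable parities. However, there is a genuine gap in how you manage parities, and it is load-bearing. Pushing an internal $2$-vertex of a chain reverses its two incident arcs and therefore \emph{preserves} the chain's parity; your remark that ``pushing a $2$-vertex to change a chain's parity does not disturb a previously completed chain'' is false on its premise. The parity of a chain joining a colored boundary vertex to an uncolored central vertex $v$ can only be flipped by pushing $v$ itself, and doing so flips the parity of \emph{every} chain incident to $v$ simultaneously. This coupling is exactly what makes configurations such as \textbf{C6}, \textbf{C7}, \textbf{C9} and \textbf{C11} nontrivial: in the paper one first chooses a push of the central vertices making certain chains even, and if the resulting forbidden-color count can still reach three, one performs a second, global re-push of $v$ (trading even $3$-chains for an even $1$-chain, or vice versa) and re-counts. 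A chain-by-chain independent parity choice, as your counting game implicitly assumes, would claim reducibility of configurations that are not reducible under a single consistent push, so the count must always be carried out for one coherent choice of pushed set.

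Relatedly, your plan to ``present one representative case in full and state that the rest are analogous'' does not suffice here, because the cases are not analogous in the relevant sense. The directed $6$-cycle configurations \textbf{C14}--\textbf{C16} require pushing carefully chosen \emph{sets} of vertices (including $2$-vertices on the cycle) so that the cycle remains directed while the pendant chains acquire the desired parity, and in \textbf{C15}/\textbf{C16} a further coordinated re-push of the whole cycle or of $\{y,z,u_3\}$ is needed when the first attempted extension is blocked; none of this follows from the generic forbidden-color count. Similarly, \textbf{C7} needs the special subcase where the $3_{3,1,1}$ structure forces all three colors to be forbidden and a re-push resolves it. So while your framework (boundary coloring, Table~\ref{table path analysis}, push-and-count) is the right one and matches the paper, the proof as proposed is incomplete: you must (i) correct the parity bookkeeping so that all parity choices come from pushing whole vertices of the uncolored set, with the induced simultaneous flips tracked, and (ii) actually carry out the per-configuration case analyses, since several of them hinge on configuration-specific second-round pushes rather than a uniform count.
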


\begin{proof}
  For any of the above listed configurations, say \textbf{C}, there are some vertices for which all their neighbors are known and are part of the configuration. 
  For the other vertices, not all of the neighbors may
  be part of the configuration. Note that the latter vertices are the boundary vertices of the configuration 
  \textbf{C} and their set is denoted by $\partial \mathbf{C}$ .

  Our goal, for each configuration \textbf{C} listed above, is
  to assume an arbitrary partial pushable homomorphism 
  $f: V(\partial \mathbf{C}) \to V(\overrightarrow{C}_3)$ 
  and show that $f$ is extendable. Usually, the vertices of $\partial \mathbf{C}$ are independent, and thus, any function $f$ from $V(\partial \mathbf{C})$ to $\{\overline{0},\overline{1},\overline{2}\}$ is a partial pushable homomorphism. We will show such an arbitrary $f$ is extendable, primarily using 
  Observation~\ref{obs path analysis}, repeatedly. We list our proof of reducibility below enumerating with the name of the configurations.

 \begin{figure}[h!]
    \centering
    \includegraphics[scale=.7]{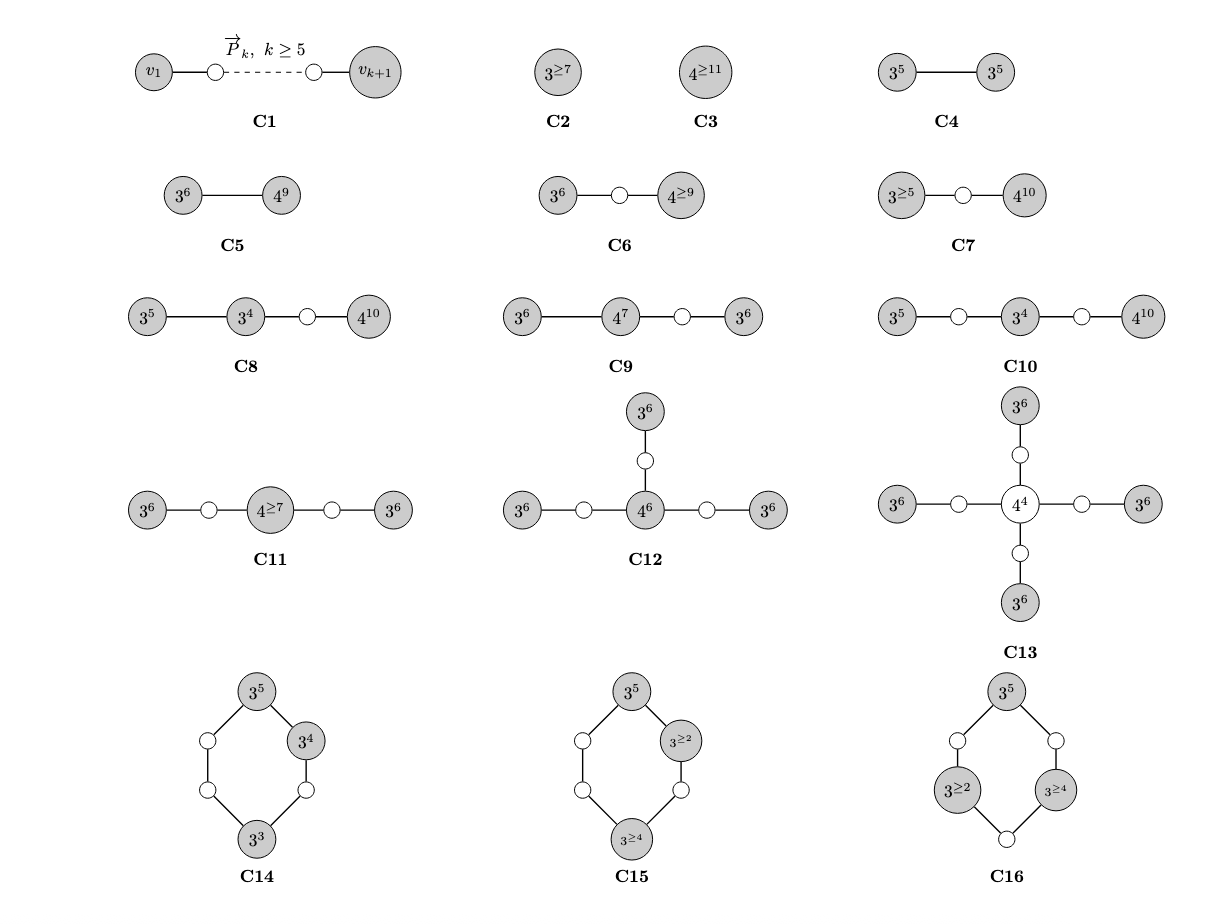}
       \caption{The list of reducible configurations (using criticality) for Lemma~\ref{lem reducible-critical}.}\label{fig:Ci} 
\end{figure}
  \noindent \textbf{(C1)}  According to 
  Observation~\ref{obs path analysis}, given any oriented $(x,y)$-path of length $5$ or more, $x$ forbids zero colors at $y$. Thus, \textbf{C1} is reducible. 
As \textbf{C1} is reducible, the longest chains we can have are 
of length $4$.~$\blacklozenge$ This inference will be used in reducing the subsequent configurations.

\medskip

\noindent \textbf{(C2)}  It is enough to show that a $3^7$-vertex $v$ is reducible. Let $v_1$ be a $3$-chain-adjacent vertex of $v$. Let $v_2,$ and $ v_3$ be the other chain-adjacent vertices of $v$. 
Now push $v$, if required, 
to make the $3$-chain connecting $v$ and $v_1$ even. According to Observation~\ref{obs path analysis}, $v_1$ allows all colors at $v$, while $v_2$ and $v_3$ forbid at most two colors at $v$. Thus, \textbf{C2} is reducible.~$\blacklozenge$

  \medskip

\noindent \textbf{(C3)}  It is enough to show that a $4^{11}$-vertex $v$ is reducible. Note that $v$ must be a $4_{3,3,3,2}$-vertex. Now push $v$, if required, 
to make two of the three incident $3$-chains even. 
According to Observation~\ref{obs path analysis}, the vertices of $\partial \mathbf{C3}$ forbid at most two colors at $v$. Thus, \textbf{C3} is reducible.~$\blacklozenge$

  \medskip

\noindent \textbf{(C4)} If required, push the two adjacent $3^5$-vertices, $u$ and $v$ (say), in such a way that one of their incident $3$-chains become even. Then the vertices of $\partial \mathbf{C4}$ allow
two colors at $u$ and two colors at $v$. Therefore, it is possible to extend $f$. Thus, \textbf{C4} is reducible.~$\blacklozenge$

  \medskip

\noindent \textbf{(C5)} Let $u$ and $v$ be the adjacent $3^6$-vertex and
$4^9$-vertex, respectively. If required, push $u,v$ in such a way that each of them have at most one incident odd $3$-chain. 
Then, the vertices of $\partial \mathbf{C5}$ allow 
two colors at $u$ and two colors at $v$. Therefore, it is possible to extend $f$. Thus, \textbf{C5} is reducible.~$\blacklozenge$

  \medskip

  \noindent \textbf{(C6)} It is enough to show that a $3^{6}$-vertex $u$ $1$-chain-adjacent to a $4^9$-vertex $v$ is reducible. Note that, $u$ must be a $3_{3,2,1}$-vertex and $v$ must be a $4_{3,3,2,1}$-vertex. Let $u$ be $3$-chain-adjacent to $u_1$, $2$-chain-adjacent to $u_2$ and $v$ be $3$-chain-adjacent to $v_1$ and $v_2$, and $2$-chain-adjacent to $v_3$. It is possible to make the $3$-chain and the $1$-chain incident to $u$ even, by pushing (if required) $u$ and $v$. We know that the $3$-chain (resp., $2$-chain) incident to $u$ forbids no (resp., at most one) color at $u$ by Observation~\ref{obs path analysis}.
  Hence, $u_1$ and $u_2$ allow at least two colors at $u$ and $u$ allows three colors at $v$. Therefore, it is possible to extend $f$ unless $v_1, v_2$ and $v_3$ forbid all three colors at $v$. If $v_1, v_2$ and $v_3$ forbid all colors at $v$ it means that the $3$-chains incident to $v$ are odd. In this case, we push the vertex $v$ making the $1$-chain incident to $v$ odd and the $3$-chains incident to $v$ even. Now $v_1, v_2$ and $v_3$ allow at least two colors at $v$ whereas $u$ forbids at most one color at $v$. Therefore, it is possible to extend $f$. Thus, \textbf{C6} is reducible. ~$\blacklozenge$

  \medskip

\noindent \textbf{(C7)} It is enough to show that a $3^{5}$-vertex $u$ $1$-chain-adjacent to a $4^{10}$-vertex $v$ is reducible. Note that, $u$ is either a $3_{3,1,1}$-vertex or a $3_{2,2,1}$-vertex, and $v$ must be a $4_{3,3,3,1}$-vertex. 
Let $v$ be $3$-chain-adjacent to $v_1, v_2$ and $v_3$ and $u$ be chain-adjacent to $u_1$ and $u_2$ apart from $v$. 
 It is possible to make two $3$-chains incident to $v$ even, by pushing $v$ (if required). Hence, $v_1, v_2$ and $v_3$ allow at least two colors at $v$. We push the vertex $u$ (if required) to make the $1$-chain connecting $u$ and $v$ even. 
Therefore, it is possible to extend $f$ unless $u_1$ and $u_2$ forbid all three colors at $u$. That can only happen when $u$ is a $3_{3,1,1}$-vertex and  the $3$-chain and the $1$-chain connecting $u$ to $u_1$ and $u_2$, respectively, are both odd.  In this case, we push the vertex $u$ making the $1$-chain connecting it to $v$ odd, and the other two incident chains even. 
Now $u_1$ and $u_2$ allow at least two colors at $u$ whereas $u$ forbids at most one color at $v$. That means, $u, v_1, v_2, v_3$ together forbid at most two colors at $v$. Therefore, it is possible to extend $f$.  Thus, \textbf{C7} is reducible.~$\blacklozenge$

  \medskip

\noindent \textbf{(C8)} Let $u$ be a $3^4$-vertex, $1$-chain-adjacent to a $4^{10}$-vertex $v$, and adjacent to a $3^5$-vertex $w$. 
Note that, $u$ must be a $3_{3,1,0}$-vertex, $v$ must be a $4_{3,3,3,1}$-vertex and $w$ be a $3_{3,2,0}$-vertex. 
Let $u$ be $3$-chain-adjacent to $u_1$, $v$ be $3$-chain-adjacent to $v_1, v_2$ and $v_3$, and $w$ be $3$-chain-adjacent to $w_1$ and $2$-chain-adjacent to $w_2$. 
It is possible to make two out of the three $3$-chains incident to $v$ even, by pushing  $v$ (if required). 
Similarly, it is possible to make the $3$-chains incident to 
$u$ and $w$ even. 
Notice that, $v_1, v_2$ and $v_3$ allow two colors at $v$, and 
$w_1$ and $w_2$ allow two colors at $w$. That means $v$ and $w$ forbid at most two colors at $u$ while $u_1$ does not forbid any color at $u$.  Therefore, it is possible to extend $f$. Thus, \textbf{C8} is reducible.~$\blacklozenge$

  \medskip

\noindent \textbf{(C9)} Let $u$ be a $4^7$-vertex, $1$-chain-adjacent to a $3^6$-vertex $v$, and  adjacent to a $3^6$-vertex $w$. 
Note that, $u$ must be a $4_{3,3,1,0}$-vertex, 
$v$ must be a $3_{3,2,1}$-vertex,
and 
$w$ must be a $3_{3,3,0}$-vertex. 
Let $u$ be $3$-chain-adjacent to $u_1$ and $u_2$; 
$v$ be $3$-chain-adjacent to $v_1$ and $2$-chain-adjacent to $v_2$; and 
$w$ be $3$-chain-adjacent to $w_1$ and $w_2$. 
It is possible to make the $3$-chain incident to $v$
and one of the two $3$-chains incident to $w$ even, by 
pushing $v,w$ (if required). 
Similarly, it is possible to make two out of the three chains (two $3$-chains and one $1$-chain) incident to $u$ even by pushing $u$ (if required). 
Notice that, 
$v_1$ and $v_2$ allow two colors at $v$, 
and $w_1$ and $w_2$ allow two colors at $w$. 
Therefore, $v$ forbids at most one color at $u$ and 
$w$ forbids no (resp., at most one) colors at $u$ if the $1$-chain connecting $v$ to $u$ is even (resp., odd). Moreover, if the $1$-chain is even (resp., odd), then at least one (resp., two) of the 
chains connecting $u$ with $u_1$ and $u_2$ is even. 
Notice that, 
$u_1$ and $u_2$ forbid at most one (resp., no) color at $u$ if at least one (resp., both) of the chains connecting $u$ with $u_1$ and $u_2$ is even. That means, a total of two colors are forbidden at $u$. Therefore, it is possible to extend $f$. Thus, \textbf{C9} is reducible.~$\blacklozenge$

  \medskip

\noindent \textbf{(C10)} Let $u$ be a $3^4$-vertex, 
$1$-chain-adjacent to a $4^{10}$-vertex $v$ and 
a $3_{3,1,1}$-vertex $w$. 
Note that, $u$ must be a $3_{2,1,1}$-vertex and  $v$ must be a $4_{3,3,3,1}$-vertex. 
It is possible to make two out of the three 
$3$-chains incident to $v$ even, 
the $1$-chain connecting $u$ and $v$ even, and two out of the chains incident to $w$ even by pushing $u,v$ and $w$ (if required). 
Notice that, $v$ does not forbid any color at 
$u$ and $w$ forbids at most one color at $u$. 
Moreover, the vertex $2$-chain-adjacent to $u$ forbids at most one color at $u$. That means, at most two colors are forbidden at $u$. 
 Therefore, it is possible to extend $f$. Thus, \textbf{C10} is reducible.~$\blacklozenge$
  \medskip

\noindent \textbf{(C11)} Let $u$ be a $4^7$-vertex, $1$-chain-adjacent to two $3^6$-vertices $v, w$. 
Note that, $u$ must be a $4_{3,2,1,1}$-vertex while $v, w$ must be  $3_{3,2,1}$-vertices. 
Let $u$ be $3$-chain-adjacent to $u_1$ and $2$-chain-adjacent to $u_2$; $v$ be $3$-chain-adjacent to $v_1$ and $2$-chain-adjacent to $v_2$; and $w$ be $3$-chain-adjacent to $w_1$ and $2$-chain-adjacent to $w_2$. It is possible to make the $3$-chains incident to $u, v$ and $w$ even, by pushing $u,v,w$ (if required). 
In this scenario, $v_1$ and $v_2$ will allow at least two colors at $v$ and $w_1$ and $w_2$ will allow at least two colors at $w$. 
Thus, $v$ and $w$ will forbid at most one color each at $u$ while 
$u_1$ and $u_2$ may forbid one color. Hence there is a chance that all three colors are forbidden at $u$, and $f$ is not readily extendable. 
However, this may happen only  when 
both the $1$-chains connecting $u$ to $v$ and $w$ are odd. In that case, push $u$ to make both of them even. However, this will result in also making the $3$-chain incident to $u$ odd. Thus. $v$ and $w$ will forbid no color at $u$ while $u_1$ and $u_2$ will forbid at most two colors.  Therefore, it is possible to extend $f$. Thus, \textbf{C11} is reducible.~$\blacklozenge$

  \medskip

\noindent \textbf{(C12)}  Let $u$ be a $4^6$-vertex, $1$-chain-adjacent to three $3^6$-vertices $v_1, v_2$ and $v_3$. 
Note that, $u$ must be a $4_{3,1,1,1}$-vertex and $v_1, v_2$ and $v_3$ must be $3_{3,2,1}$-vertices.
It is possible to make the $3$-chains incident to $v_1, v_2, v_3$ even, and two out of the four chains incident to $u$ even. 
Notice that, now at most two colors are forbidden at $u$.
 Therefore, it is possible to extend $f$. 
 Thus, \textbf{C12} is reducible.~$\blacklozenge$

  \medskip

\noindent \textbf{(C13)} 
Let $u$ be a $4^4$-vertex, $1$-chain-adjacent to four $3^6$-vertices $v_1, v_2, v_3$ and $v_4$. 
Note that, $u$ must be a $4_{1,1,1,1}$-vertex and $v_1, v_2, v_3$ and $v_4$ must be $3_{3,2,1}$-vertices.
It is possible to make the $3$-chains incident to $v_1, v_2, v_3$ and $v_4$ even, and two out of the four $2$-chains incident to $v$ even. 
Observe that, now at most two colors are forbidden at $v$.
 Therefore, it is possible to extend $f$. Thus, \textbf{C13} is reducible.~$\blacklozenge$

  \medskip

\noindent \textbf{(C14)}  Notice that it is possible to make the $3$-chain, incident to $x$ even, by pushing (if required) 
the set $\{x,u_1, u_2\}$. Similarly, it is possible to make the $3$-chain, incident to $y$ even, by pushing (if required) the set $\{y,z, u_3\}$. Observe that, $\overrightarrow{C}$ still remains a directed cycle. Now, the chains incident to $\overrightarrow{C}$ do not forbid any color at $x$ and $y$, and allow at least one color at $z$.
Therefore, $f$ is extendable. 
 Thus, \textbf{C14} is reducible.~$\blacklozenge$

\medskip

\noindent \textbf{(C15)} It is enough to prove for the case when $y = 3^2$-vertex and $z=3^4$-vertex. Notice that it is possible to 
make the $3$-chain incident to $x$ even, by pushing (if required) 
the set $\{x,u_1, u_2\}$. Also,
it is possible to make the $1$-chain incident to $y$ (which is not part of $\overrightarrow{C}$) even, by pushing (if required) the set $\{y,z, u_3\}$. Now, $\overrightarrow{C}$ still remains a directed cycle
and the chains incident to $\overrightarrow{C}$ allow all colors at $x$, forbid at most one color at $y$, and forbid at most two colors at $z$, respectively. 
Without loss of generality, assume that the colors $\overline{1},\overline{2}$ are allowed at $y$ and  $\overrightarrow{C}$ is oriented in such a way that $yu_1, u_1z$ are arcs.  In such a scenario, $f$ can be readily extended unless the only color allowed at $z$ is $\overline{2}$. If that happens, then push the set 
 $\{y, z, u_3\}$. Notice that $\overrightarrow{C}$ is still a directed cycle. However, its orientations have changed in such a way that 
 $zu_1, u_1y$ are arcs. 
 Also, now the color $\overline{0}$ is allowed at $y$ and the colors $\overline{0},\overline{1}$ are allowed at $z$. Therefore, it is possible to extend $f$ by assigning $f(y) = \overline{0}$ and $f(z) = \overline{1}$. 
 Thus, \textbf{C15} is reducible.~$\blacklozenge$

\medskip

\noindent \textbf{(C16)}  
We prove for the case when $y = 3^4$-vertex and $z=3^2$-vertex. It is possible to make the $3$-chain, incident to $x$ 
even, by pushing (if required) 
the set $\{x,y,z\}$ keeping $\overrightarrow{C}$ directed. 
We know that the $2$-chain incident to $y$ can forbid at most one color at $y$, and the $3^+$-vertex adjacent to $z$ 
allows at least one color at $z$. 
Choose two distinct colors $i,j$ allowed at $y$ and $z$ respectively, and assign $f(y) = i$ and $f(z) = j$. Without loss of generality, assume that $j = i+2$. Therefore, if $\overrightarrow{C}$ is oriented in such a way that $yu_2, u_2z$ are arcs, then $f$ can be readily extended. If not, push the set 
$\{u_1, u_2, u_3\}$ and  orientation of every arc in 
 $\overrightarrow{C}$ will get reversed. 
 Now it will be possible to extend $f$. 
 Thus, \textbf{C16} is reducible.~$\blacklozenge$

 \medskip

 This completes the proof of the lemma. 
\end{proof}

\subsubsection{Using potential}
\begin{lemma}\label{lem 3-3}
 If $v$ is a $3$-vertex in $\overrightarrow{M}$, adjacent to $x$ and $y$ and  
 $3$-chain-adjacent to $z$, 
 then one of the following is true:
 \begin{enumerate}[(i)]
     \item $x$ and $y$ are adjacent,

     \item there is an oriented $6$-cycle of the form $xw_1w_2w_3yvx$ in $\overrightarrow{M}$ 
 with even number of forward and backward arcs.  
 \end{enumerate} 
\end{lemma}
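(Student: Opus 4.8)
The plan is to delete the pendant gadget hanging at $v$, colour what remains using criticality, and show that the colouring extends to all of $\overrightarrow{M}$ --- a contradiction, since $\overrightarrow{M}$ is not pushably $3$-colourable --- unless the configuration in (ii) is present. Write the $3$-chain joining $v$ and $z$ as $v\,u_1\,u_2\,z$, with $u_1,u_2$ its internal $2$-vertices, and suppose $x\not\sim y$, so that (i) fails and we must establish (ii). Set $\overrightarrow{M}_1=\overrightarrow{M}-\{v,u_1,u_2\}$, a proper subgraph, hence pushably $3$-colourable; after replacing $\overrightarrow{M}$ by a push-equivalent oriented graph --- which preserves the hypotheses and the conclusion, the latter because whether a fixed $6$-cycle has an even number of forward arcs is push-invariant --- we may fix a homomorphism $f\colon\overrightarrow{M}_1\to\overrightarrow{C}_3$.

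First I would analyse when $f$ extends. To extend $f$ we may push any subset of $\{v,u_1,u_2\}$ before colouring them; pushing $u_1$ or $u_2$ preserves the parity of the chain $v u_1 u_2 z$, while pushing $v$ flips that parity and simultaneously reverses the arcs $vx$ and $vy$. By the path-analysis of Table~\ref{table path analysis}, for each parity the chain forbids exactly one colour at $v$ (given $f(z)$) and allows completion otherwise; so $f$ fails to extend only if, in both push-states of $v$, every colour permitted at $v$ by $vx,vy$ is the one killed by the chain. A short computation in $\mathbb{Z}/3\mathbb{Z}$ then splits according to whether $vx$ and $vy$ point the same way at $v$ or oppositely --- a property unaffected by pushing $v$. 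In the ``opposite'' case the colour of $v$ is forced by $\{f(x),f(y)\}$ and the only genuine obstruction is $f(x)=f(y)$; the residual degeneracy, in which the forced colour is chain-forbidden for both parities, forces $f(z)\in\{f(x),f(y)\}$ together with a fixed alignment of $\overrightarrow{M}$, which one excludes using the reducibility of the appropriate configuration of Lemma~\ref{lem reducible-critical} about $v$, $z$ and the chain. In the ``same'' case a symmetric analysis (if needed, deleting instead a single arc at $v$) shows that a non-extension would force either $x\sim y$ or a configuration forbidden by Lemma~\ref{lem reducible-critical}, so this case does not arise. Since $\overrightarrow{M}$ is not pushably $3$-colourable, $f$ does not extend; and as $f$ was an arbitrary pushable $3$-colouring of $\overrightarrow{M}_1$, \emph{every} pushable $3$-colouring of $\overrightarrow{M}_1$ must satisfy $f(x)=f(y)$. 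Making this case analysis airtight --- in particular disposing of the alignment degeneracies --- is the main obstacle.

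It remains to turn this coincidence into the $6$-cycle. Because a homomorphism to $\overrightarrow{C}_3$ is, on each component, determined by signed path-lengths modulo $3$ (an arc $uv$ forces $f(v)=f(u)+1$ in $\mathbb{Z}/3\mathbb{Z}$), the statement ``$f(x)=f(y)$ for every pushable $3$-colouring of $\overrightarrow{M}_1$'' means that adjoining to $\overrightarrow{M}_1$ a suitable new gadget between $x$ and $y$ (an even $2$-chain) yields an oriented graph that is not pushably $3$-colourable; it therefore contains a pushably $3$-critical subgraph, which --- not being a subgraph of $\overrightarrow{M}$ --- must use the new gadget, hence is the union of the gadget with an $(x,y)$-path $P\subseteq\overrightarrow{M}_1$. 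Using that chains of length $\ge 5$ are reducible (configuration \textbf{C1}), the reducible configurations of Lemma~\ref{lem reducible-critical}, and the minimality of $\overrightarrow{M}$ via the Gap Lemma (Lemma~\ref{lem gap-lemma}) --- which forces every pushably $3$-critical oriented graph other than the four exceptions to have potential at most $-2$ --- one pins $P$ down to have length exactly $4$, say $P=x\,w_1\,w_2\,w_3\,y$, with exactly two forward and two backward arcs, i.e.\ signed length $0$. Closing $P$ through $v$ by the arcs $vy$ and $vx$ gives the oriented $6$-cycle $C=x\,w_1\,w_2\,w_3\,y\,v\,x$; since $vx,vy$ are oppositely oriented at $v$ in the surviving branch, the two arcs at $v$ change the signed length of $C$ by $\pm 2$, so $C$ has signed length $\pm2$, hence $2$ or $4$ forward arcs, an even number of forward --- equivalently backward --- arcs. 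This is precisely configuration (ii), contradicting its failure and completing the proof. Besides the case analysis above, the step that genuinely uses the potential method is the argument that $P$ has length exactly $4$ rather than merely bounded.
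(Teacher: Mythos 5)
Your overall plan (colour a proper subgraph by criticality, analyse extension at $v$, and in the obstructed case extract a critical subgraph and kill it with the Gap Lemma) is in the right spirit, but both halves have genuine gaps. First, a small misreading: the $3$-chain here has three internal $2$-vertices, i.e.\ $v\,v_1\,v_2\,v_3\,z$ with four arcs, so by Table~\ref{table path analysis} it forbids no colour at $v$ when even and one colour when odd. More seriously, your extension analysis is not correct. If (in the realization at hand) the arcs joining $v$ to $x$ and $y$ both leave $v$, then extendability is \emph{equivalent} to $f(x)=f(y)$ (one of the two push-states of $v$ makes the chain even, so the chain never obstructs); criticality then forces $f(x)\neq f(y)$ in such realizations, the opposite of what you want, and nothing in Lemma~\ref{lem reducible-critical} excludes this branch. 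If the arcs are opposite at $v$, a colouring with $f(x)\neq f(y)$ can still fail to extend, namely when the forced push-state of $v$ makes the chain odd and its unique forbidden colour (determined by $f(z)$) is exactly the forced colour at $v$; this obstruction is a property of the colouring, not a configuration of $\overrightarrow{M}$, so it cannot be ``excluded using the reducibility of the appropriate configuration.'' Worse, the same/opposite dichotomy is not even determined by $\overrightarrow{M}$: a pushable colouring of $\overrightarrow{M}_1$ is realised after pushing a set that may contain $x$ or $y$, which flips $vx$ and $vy$ individually. The paper avoids all of this by first pushing so that $vx,vy$ are both arcs out of $v$ and then \emph{identifying} $x$ with $y$: in the identified graph $\overrightarrow{M}'$ the vertex $v$ sits on a five-arc chain from the new vertex $w$ to $z$, which is unconditionally extendable, so colour/push-state consistency is handled in one stroke.

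The second half has two further gaps. (a) The gadget reduction is unsound under pushing: the relation a fixed attached chain imposes on $(f(x),f(y))$ flips according to whether $x$ or $y$ lies in the push set (an even path through one new vertex forbids $f(x)=f(y)$, but becomes odd, forcing $f(x)=f(y)$, once exactly one endpoint is pushed), so ``every colouring of $\overrightarrow{M}_1$ has $f(x)=f(y)$'' would not make the augmented graph uncolourable; also your named gadget, an even $2$-chain (three arcs), forbids $f(y)=f(x)+2$, not $f(y)=f(x)$. (b) The claim that the critical subgraph of the augmented graph is ``the union of the gadget with an $(x,y)$-path'' is unjustified --- pushably $3$-critical graphs are not cycles in general (e.g.\ $\overrightarrow{E}_1,\overrightarrow{E}_2,\overrightarrow{E}_3$) --- and the parity assertions about $P$ (``two forward, two backward'') are stated, not derived. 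In the paper no such structure is needed: the critical subgraph $\overrightarrow{M}''$ of $\overrightarrow{M}'-\{v,v_1,v_2,v_3\}$ must contain $w$, and either it is (push-equivalent to) $\overrightarrow{C}_{-4}$ through $w$, which pulls back precisely to the $6$-cycle of conclusion (ii), or $\rho(\overrightarrow{M}'')\leq 0$ and replacing $w$ by $x,y$ and adding $v$ yields a proper subgraph of $\overrightarrow{M}$ with potential at most $4$, contradicting Lemma~\ref{lem gap-lemma}. Your outline never isolates this $\overrightarrow{C}_{-4}$ case, which is exactly where conclusion (ii) comes from, so as written the argument does not establish the lemma.
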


\begin{proof}
   Let $v_1,v_2$ and $v_3$ be the vertices that are chain incident to $v$ and $z$. Assume that $x$ and $y$ are neither adjacent, nor are a part of an oriented $6$-cycle of the form $xw_1w_2w_3yvx$ in $\overrightarrow{M}$ 
 with even number of forward and backward arcs. 
 Without loss of generality, we may assume that  $vx$ and $vy$ are arcs of $\overrightarrow{M}$ (if not, then push some of the vertices $v, x$ or $y$ accordingly and replace $\overrightarrow{M}$ with its appropriate push equivalent oriented graph). 

 Next, identify the vertices $x$ and $y$ (denote the new identified vertex by $w$)
 to obtain the oriented graph $\overrightarrow{M}'$. This operation gives us a homomorphism $f$ of $\overrightarrow{M}$ to $\overrightarrow{M}'$, where $f(x)=f(y) = w$ and $f(u) = u$ when $u \neq x, y$.  
 Suppose there exists 
 $g: \overrightarrow{M}' - \{v, v_1, v_2, v_3\} \xrightarrow{push} \overrightarrow{C}_3$. 
 If $g$ is a partial homomorphism of $\overrightarrow{M}'$ where $X = \overrightarrow{M}'[\{w, v, v_1, v_2, v_3, z\}]$ is a reducible configuration according to Lemma~\ref{lem reducible-critical}, then $g$ is extendable. 
 Let $g_{ext}$ be the extension of $g$ to a pushable homomorphism of $\overrightarrow{M}'$ to $\overrightarrow{C}_3$. Then $g_{ext} \circ f$ is a 
 pushable homomorphism of 
 $\overrightarrow{M}$ to $\overrightarrow{C}_3$, which contradicts our assumption. Thus, $\overrightarrow{M}' - \{v, v_1, v_2, v_3\}$ does not admit a pushable homomorphism to $\overrightarrow{C}_3$, and hence contains a pushably $\overrightarrow{C}_3$-critical subgraph $\overrightarrow{M}''$.

 Now, $\overrightarrow{M}''$ cannot be 
 $\overrightarrow{C}_{-4}$ since there does not exist any oriented $6$-cycle of the form $xw_1w_2w_3yvx$ in $\overrightarrow{M}$ with even number of forward and backward arcs. 
 Furthermore, notice that $\overrightarrow{M}''$ must contain the vertex $w$, as otherwise, it is a proper subgraph of $\overrightarrow{M}$, and thus cannot be $\overrightarrow{C}_3$-critical.

 Let 
 $Y = (V(\overrightarrow{M}'') \setminus \{w\}) \cup \{v, x, y\}$ and 
 let 
 $\overrightarrow{H}$ be the proper induced subgraph  
 $\overrightarrow{M}[Y]$ of 
 $\overrightarrow{M}$. Let us now calculate the potential of $\overrightarrow{H}$. Since, $\overrightarrow{M}'' \neq \overrightarrow{C}_{-4}$, and as $|V(\overrightarrow{M}'')| + |A(\overrightarrow{M}'')| < |V(\overrightarrow{M})| + |A(\overrightarrow{M})|$, we have $\rho(\overrightarrow{M}'') \leq 0$.   
  But
 $\overrightarrow{H}$ has $2$ vertices and $2$ arcs more than that of $\overrightarrow{M}''$, and we have 
 $\rho(\overrightarrow{H}) = \rho(\overrightarrow{M}'') + (2 \times 15) - (2 \times 13) \leq 4$, a contradiction to Lemma~\ref{lem gap-lemma}. 
\end{proof}

\begin{lemma}\label{lem 3-2} 
    If $v$ is a $3$-vertex in $\overrightarrow{M}$ 
    $1$-chain-adjacent to $x, y$ 
    (with internal vertices $x', y'$, respectively),  
    and $2$-chain-adjacent to $z$ (with internal vertices $v_1, v_2$), then one of the following is true: 
    \begin{enumerate}[(i)]
        \item $x,y$ are adjacent,

        \item there is an oriented $6$-cycle of the form $xw_1yy'vx'x$ in $\overrightarrow{M}$ 
 with even number of forward and backward arcs,

 \item there is an oriented $8$-cycle of the form $xw_1w_2w_3yy'vx'x$ in $\overrightarrow{M}$ 
 with odd number of forward and backward arcs,

\item  there is a pushably isomorphic copy of 
$\overrightarrow{F} = \overrightarrow{E}_i - v^*$ in $\overrightarrow{M} - \{x,x',y,y',v\}$, for some $i \in \{1, 2, 3\}$ and for some $v^* \in V(\overrightarrow{E}_i)$, such that 
both $x$ and $y$ are adjacent to some (not necessarily distinct) vertices of $F$, and moreover, if we identify $x$ and $y$, then together with $V(\overrightarrow{F})$ they will induce $\overrightarrow{E}_i$.
    \end{enumerate}
\end{lemma}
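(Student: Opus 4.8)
The plan is to mirror the identify-and-reduce argument of Lemma~\ref{lem 3-3}, but the two extra $2$-vertices $x',y'$ sitting between $v$ and $\{x,y\}$ make the analysis of the recovered critical subgraph more delicate, and this is exactly what produces the new alternatives (iii) and (iv). Assume towards a contradiction that none of (i)--(iv) holds. First I would fix a convenient push-equivalent copy of $\overrightarrow{M}$: since $x$ and $y$ are non-adjacent (as (i) fails), after possibly pushing $x$ we may assume that the oriented path $x\,x'\,v\,y'\,y$ has an even number of forward arcs; the only obstruction to this --- the case where every choice would create a digon upon identification --- is that $x$ and $y$ have a common neighbor, and then the cycle $x\,w_1\,y\,y'\,v\,x'\,x$ is an even oriented $6$-cycle, so (ii) holds. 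With this arranged, identify $x$ and $y$ into one vertex $w$ (no digon arises, by the same common-neighbor analysis), obtaining an oriented graph $\overrightarrow{M}'$ and a homomorphism $f\colon\overrightarrow{M}\to\overrightarrow{M}'$ which is the identity off $\{x,y\}$; in $\overrightarrow{M}'$ the vertices $x',v,y'$ lie on a $4$-cycle through $w$ which, by the parity choice, is push-equivalent to the directed $\overrightarrow{C}_4$ (not to $\overrightarrow{C}_{-4}$), and $v$ additionally lies on the $3$-arc path $v\,v_1\,v_2\,z$.

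Next, set $D=\{v,x',y',v_1,v_2\}$. I would show that the configuration $\overrightarrow{M}'[D\cup\{w,z\}]$, with $D$ uncolored and $\{w,z\}$ as boundary, is reducible: given any colors of $w$ and $z$, the directed $\overrightarrow{C}_4$ through $w$ admits --- after pushing a single vertex among $x',v,y'$ --- every color at $v$, while the path $v\,v_1\,v_2\,z$ (after pushing $v_1,v_2$) forbids at most one color at $v$ by Observation~\ref{obs path analysis}; choosing for $v$ a color different from that of $w$ and admissible from the $z$-side, and then completing $x',y'$ from the $4$-cycle and $v_1,v_2$ from the path, extends the coloring. (This reducibility is of the same flavor as those in Lemma~\ref{lem reducible-critical}, though not literally on the list C1--C16.) Hence $\overrightarrow{M}'-D$ cannot be pushably $\overrightarrow{C}_3$-colorable: otherwise we could extend over $D$ and compose with $f$ to get a pushable $3$-coloring of $\overrightarrow{M}$. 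So $\overrightarrow{M}'-D$ contains a pushably $\overrightarrow{C}_3$-critical subgraph $\overrightarrow{N}$, and $\overrightarrow{N}$ must contain $w$ (else $\overrightarrow{N}$ is a proper subgraph of $\overrightarrow{M}$, contradicting criticality of $\overrightarrow{M}$).

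The last step is the case analysis of $\overrightarrow{N}$, carried out by splitting $w$ back into $x,y$ and reinserting the path $x\,x'\,v\,y'\,y$. If $\overrightarrow{N}=\overrightarrow{C}_{-4}$: when both edges of $\overrightarrow{N}$ at $w$ come from the same side, $\overrightarrow{N}$ embeds as a proper subgraph of $\overrightarrow{M}-y$ or $\overrightarrow{M}-x$, which is impossible; when they come from different sides, reinsertion produces an oriented $8$-cycle (on $\{x,x',v,y',y\}$ together with the other three vertices of $\overrightarrow{N}$) whose number of forward arcs is odd --- odd from the $\overrightarrow{C}_{-4}$ part plus even from $x\,x'\,v\,y'\,y$ --- giving exactly configuration (iii). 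Otherwise, the minimality of $\overrightarrow{M}$ gives either $\rho(\overrightarrow{N})\le -2$ or $\overrightarrow{N}\equiv_p\overrightarrow{E}_i$. Put $\overrightarrow{H}=\overrightarrow{M}[Y]$ with $Y=(V(\overrightarrow{N})\setminus\{w\})\cup\{x,y,x',y',v\}$; it is a proper subgraph of $\overrightarrow{M}$ with $4$ more vertices and at least $4$ more arcs than $\overrightarrow{N}$, so $\rho(\overrightarrow{H})\le\rho(\overrightarrow{N})+8$, while $\overrightarrow{M}$ is obtained from $\overrightarrow{H}$ by attaching at least the $3$-arc path through $v_1,v_2$, so $\rho(\overrightarrow{M})\le\rho(\overrightarrow{H})-9$ and hence $\rho(\overrightarrow{H})\ge\rho(\overrightarrow{M})+9\ge 8$ (also at least $7$ by the Gap Lemma). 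If $\rho(\overrightarrow{N})\le -2$ this forces $\rho(\overrightarrow{H})\le 6<8$, a contradiction; if $\overrightarrow{N}\equiv_p\overrightarrow{E}_i$, then $\rho(\overrightarrow{N})=0$ by Observation~\ref{potential value}, all these inequalities are forced to be equalities, and unwinding the identification shows that $\overrightarrow{F}=\overrightarrow{E}_i-v^\ast$ (with $v^\ast$ the vertex of $\overrightarrow{E}_i$ identified to $w$) sits inside $\overrightarrow{M}-\{x,x',y,y',v\}$, that both $x$ and $y$ are adjacent to $\overrightarrow{F}$ (else $\overrightarrow{N}$ would embed in $\overrightarrow{M}-y$ or $\overrightarrow{M}-x$), and that identifying $x$ and $y$ recovers $\overrightarrow{E}_i$ on $V(\overrightarrow{F})$ with the identified vertex --- which is precisely configuration (iv). Thus every possibility other than (i)--(iv) leads to a contradiction.

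The step I expect to be the main obstacle is the orientation bookkeeping around the identification: choosing the push-equivalent copy of $\overrightarrow{M}$ so that the $4$-cycle through $w$ becomes the directed $\overrightarrow{C}_4$, while certifying that the sole obstruction is a common neighbor of $x$ and $y$ which in turn exhibits the even $6$-cycle (ii), and then tracking parities finely enough that a $\overrightarrow{C}_{-4}$ straddling $w$ yields an \emph{odd} $8$-cycle matching (iii) rather than something uncontrolled. The inline reducibility of the $4$-cycle-plus-$3$-chain configuration, and the potential count for $\overrightarrow{H}$, are comparatively routine once that is settled.
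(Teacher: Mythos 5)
Your overall strategy is the paper's: push to a normal form in which no digon can arise, identify $x$ and $y$ into $w$, locate a pushably $3$-critical subgraph $\overrightarrow{N}$ of the reduced graph through $w$, observe that $\overrightarrow{N}\equiv_p\overrightarrow{C}_{-4}$ straddling $w$ or $\overrightarrow{N}\equiv_p\overrightarrow{E}_i$ would lift exactly to the excluded structures (iii) and (iv), and dispose of the remaining case by comparing the potential of the lifted subgraph $\overrightarrow{H}$ against the Gap Lemma. The one structural difference is that the paper also identifies $x'$ with $y'$, so the uncolored part becomes the $5$-chain $w\,w'\,v\,v_1\,v_2\,z$, i.e.\ configuration C1 of Lemma~\ref{lem reducible-critical}; you keep $x',y'$ and argue inline the reducibility of the $4$-cycle-plus-pendant-$2$-chain configuration, which is correct but is precisely what forces you to control the parity of that $4$-cycle. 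Your parity bookkeeping, though terse, is sound: with the path $x\,x'\,v\,y'\,y$ normalized even, a digon at a common neighbour of $x,y$ is equivalent to the even $6$-cycle of (ii), and a $\overrightarrow{C}_{-4}$ through $w$ using one arc from each of $x,y$ lifts to the odd $8$-cycle of (iii), exactly as you say.

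The one step that fails as written is $\rho(\overrightarrow{M})\le\rho(\overrightarrow{H})-9$, hence $\rho(\overrightarrow{H})\ge 8$ and ``all inequalities are equalities''. The potential does not decrease when you pass from $\overrightarrow{H}$ to $\overrightarrow{M}$: besides the path $v\,v_1\,v_2\,z$ (and $z$ need not even lie in $\overrightarrow{H}$), the rest of $\overrightarrow{M}$ can raise the potential, each additional vertex-plus-arc contributing $+2$. For the case $\rho(\overrightarrow{N})\le-2$ this is harmless, since your parenthetical appeal to Lemma~\ref{lem gap-lemma} already gives $\rho(\overrightarrow{H})\ge 7$ (note $\overrightarrow{H}$ is proper because $v_1\notin\overrightarrow{H}$, it is not $\overrightarrow{C}_3$, and $\overrightarrow{M}\ne P_4(\overrightarrow{H})$), contradicting $\rho(\overrightarrow{H})\le 6$ — this is exactly the paper's computation. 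But in the case $\overrightarrow{N}\equiv_p\overrightarrow{E}_i$ you lean on the ``forced equalities'' to conclude that unwinding yields literally configuration (iv), i.e.\ that $V(\overrightarrow{F})\cup\{x,y\}$ carries no arcs beyond those coming from $\overrightarrow{N}$. Replace that by the correct observation: any extra arc inside $V(\overrightarrow{F})\cup\{x,y\}$ (including a doubled pair $xu,yu$) gives $|A(\overrightarrow{H})|\ge|A(\overrightarrow{N})|+5$, hence $\rho(\overrightarrow{H})\le 0+8-13=-5$, again contradicting the Gap Lemma; so no extra arcs exist, both $x$ and $y$ must see $\overrightarrow{F}$ (else $\overrightarrow{N}$ lifts into $\overrightarrow{M}-x$ or $\overrightarrow{M}-y$, as you note), and (iv) follows. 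With that substitution your argument is complete and is essentially the paper's proof, which simply rules out $\overrightarrow{M}''\equiv_p\overrightarrow{E}_i$ and $\overrightarrow{C}_{-4}$ outright from the negation of (i)--(iv).
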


\begin{proof}
Assume that the statement of the lemma is false, and thus $x$ and $y$ are not a part of any of the structures mentioned in the statement through points $(i)-(iv)$. 
 Without loss of generality we may assume that  $vx', x'x, vy',$ and $y'y$ are arcs of $\overrightarrow{M}$ (if not, then push some of $v,x', y', x, y$ accordingly and replace $\overrightarrow{M}$ with its appropriate push equivalent oriented graph). 

 Next identify the vertices $x$ and $y$ (denote the new identified vertex by $w$) and the vertices $x'$ and $y'$ (denote the new identified vertex by $w'$)
 to obtain the oriented graph $\overrightarrow{M}'$. Notice that this operation actually gives us a homomorphism $f$ of $\overrightarrow{M}$ to $\overrightarrow{M}'$, where $f(x)=f(y) = w$, 
 $f(x') = f(y') = w'$ 
 and $f(u) = u$ when $u \neq x, y,x',y'$.  
 Suppose there exists 
 $g: \overrightarrow{M}' - \{w', v, v_1, v_2\} \xrightarrow{push} \overrightarrow{C}_3$. 
 If $g$ is a partial homomorphism of $\overrightarrow{M}'$ where $X = \overrightarrow{M}'[\{w, w', v, v_1, v_2, z\}]$ is a reducible configuration according to Lemma~\ref{lem reducible-critical}, then $g$ is extendable. 
 Let $g_{ext}$ be the extension of $g$ to a pushable homomorphism of $\overrightarrow{M}'$ to $\overrightarrow{C}_3$. Then $g_{ext} \circ f$ is a 
 pushable homomorphism of 
 $\overrightarrow{M}$ to $\overrightarrow{C}_3$, which contradicts our assumption. Thus, we can say that $\overrightarrow{M}' - \{w', v, v_1, v_2\}$ does not admit a pushable homomorphism to $\overrightarrow{C}_3$, and hence contains a pushably $\overrightarrow{C}_3$-critical subgraph $\overrightarrow{M}''$.

 Observe that $\overrightarrow{M}''$ cannot be
 push equivalent to 
 $\overrightarrow{C}_{-4}$ or 
 $\overrightarrow{E}_i$, for any $i \in \{1, 2, 3\}$, since we have assumed that  $x,y$ does not participate in any of the structures described in the points (i)-(iv) in the statement of the lemma. 
 Furthermore notice that, $\overrightarrow{M}''$ must contain the vertex $w$, as otherwise, it is a proper subgraph of $\overrightarrow{M}$, and thus cannot be $\overrightarrow{C}_3$-critical.

 Let 
 $Y = (V(\overrightarrow{M}'') \setminus \{w\}) \cup \{v, x', y', x, y\}$ and 
 let 
 $\overrightarrow{H}$ be the proper induced subgraph  
 $\overrightarrow{M}[Y]$ of 
 $\overrightarrow{M}$. Let us now calculate the potential of $\overrightarrow{H}$. Since $\overrightarrow{M}'' \neq \overrightarrow{C}_{-4}$ and $\overrightarrow{M}'' \not\equiv_p \overrightarrow{E}_i$ for all $i \in \{1, 2, 3\}$, and as $|V(\overrightarrow{M}'')| + |A(\overrightarrow{M}'')| < |V(\overrightarrow{M})| + |A(\overrightarrow{M})|$, we have $\rho(\overrightarrow{M}'') \leq -2$.   
But
 $\overrightarrow{H}$ has $4$ vertices and $4$ arcs more than that of $\overrightarrow{M}''$, and we have 
 $\rho(\overrightarrow{H}) = \rho(\overrightarrow{M}'') + (4 \times 15) - (4 \times 13) \leq 6$, a contradiction to Lemma~\ref{lem gap-lemma}. 
\end{proof}


\begin{lemma}\label{lem Ei+3-2}
The following structure does not exist in $\overrightarrow{M}$: \\

 ``Let $u$ 
be a $3$-vertex $2$-chain adjacent to $u_1$ with internal vertices $u_{11}, u_{12}$, $2$-chain adjacent to $u_2$ with internal vertices $u_{21}, u_{22}$, and $1$-chain adjacent to $u_3$ with internal vertex $u_{31}$, where $u_1$ is not necessarily 
a $3^+$-vertex, $u_2, u_3$ are necessarily $3^+$-vertices, and  $u_2$ is adjacent to $u_3$. 
Let $\overrightarrow{F}$ be pushably isomorphic to 
$\overrightarrow{E}_i - v^*$ for 
some $i \in \{1,2,3\}$ and for some $v^* \in V(\overrightarrow{E}_i)$. 
Let $u_1 \in V(\overrightarrow{F})$ and $u_3$ be adjacent to some vertices (at least one) of $\overrightarrow{F}$.  Moreover, the vertices $u, u_{11}, u_{12}, u_{21}, u_{22}, u_{31}$ and $u_3$ do not belong to $\overrightarrow{F}$ while $u_2$ may or may not belong to 
$\overrightarrow{F}$. 
Finally, if we consider the subgraph induced by $V(\overrightarrow{F})$ together with the vertex obtained by identifying $u_{12}$
and $u_3$, we will get a pushably isomorphic copy of 
$\overrightarrow{E}_i$."  
\end{lemma}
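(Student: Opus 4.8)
The plan is to mimic the proofs of Lemma~\ref{lem 3-3} and Lemma~\ref{lem 3-2}: assume the configuration occurs in $\overrightarrow{M}$, perform the vertex identification that the statement is built around, and then play the criticality of $\overrightarrow{M}$ against the Gap Lemma. So suppose the structure exists. After replacing $\overrightarrow{M}$ by a suitable push-equivalent copy (normalizing the orientations along the three arms of $u$ and along the copy $\overrightarrow{F}\equiv_p \overrightarrow{E}_i-v^*$), form $\overrightarrow{M}'$ by identifying $u_{12}$ and $u_3$ into a single vertex $w$. Since $\overrightarrow{E}_i$ is a \emph{simple} oriented graph, the hypothesis forces $u_1\notin N(u_3)$, so $u_{12}$ and $u_3$ have disjoint neighbourhoods; hence $\overrightarrow{M}'$ is again an oriented graph, $|V(\overrightarrow{M}')|=|V(\overrightarrow{M})|-1$, $|A(\overrightarrow{M}')|=|A(\overrightarrow{M})|$, and there is a homomorphism $f:\overrightarrow{M}\to\overrightarrow{M}'$ (the identity except $u_{12},u_3\mapsto w$). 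Note the identification creates the $4$-cycle $u\,u_{11}\,w\,u_{31}\,u$, and that by hypothesis $\overrightarrow{M}'[V(\overrightarrow{F})\cup\{w\}]\equiv_p\overrightarrow{E}_i$, so in particular $\overrightarrow{M}'$ is not pushably $3$-colorable. I would also, at this stage, dispose of the degenerate possibilities (some of the named vertices coinciding, or the identification producing a digon): each of these yields a short cycle or a small dense subgraph already handled by Lemma~\ref{lem reducible-critical}, Lemma~\ref{lem 3-3} or Lemma~\ref{lem 3-2}.

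Next I would extract a small critical graph. If $\overrightarrow{M}'$ were pushably $3$-colorable, composing with $f$ would give $\overrightarrow{M}\xrightarrow{push}\overrightarrow{C}_3$, impossible; hence $\overrightarrow{M}'$ contains a pushably $\overrightarrow{C}_3$-critical subgraph $\overrightarrow{N}$. As $|V(\overrightarrow{N})|+|A(\overrightarrow{N})|<|V(\overrightarrow{M})|+|A(\overrightarrow{M})|$, the minimality of $\overrightarrow{M}$ forces $\overrightarrow{N}\equiv_p\overrightarrow{C}_{-4}$, or $\overrightarrow{N}\equiv_p\overrightarrow{E}_j$ for some $j$, or $\rho(\overrightarrow{N})\le -2$. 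Moreover $w\in V(\overrightarrow{N})$, since otherwise $\overrightarrow{N}$ is a non-pushably-$3$-colorable proper subgraph of $\overrightarrow{M}$, contradicting criticality. Un-merging, put $\overrightarrow{H}=\overrightarrow{M}\bigl[(V(\overrightarrow{N})\setminus\{w\})\cup\{u_{12},u_3\}\bigr]$; it is a subgraph of $\overrightarrow{M}$ with one more vertex than $\overrightarrow{N}$ and at least $\varepsilon$ more arcs, where $\varepsilon\in\{0,1,2\}$ counts how many of $u_{11},u_{31}$ lie in $V(\overrightarrow{N})$, so $\rho(\overrightarrow{H})\le\rho(\overrightarrow{N})+15-13\varepsilon$. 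The exceptional cases of Lemma~\ref{lem gap-lemma} are excluded by a vertex count: $\overrightarrow{H}$ contains $u_{12},u_3$ plus a nonempty piece of $\overrightarrow{N}$, while $\overrightarrow{M}$ carries the whole configuration ($u$ and its arms, $\overrightarrow{F}$) on top, so $\overrightarrow{H}\neq\overrightarrow{M}$, $\overrightarrow{H}\not\equiv_p\overrightarrow{C}_3$, and $\overrightarrow{M}\neq P_4(\overrightarrow{H})$. Thus it suffices to force $\rho(\overrightarrow{H})\le 6$.

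The hard part is precisely to pin down $\varepsilon$ and to kill the residual case $\overrightarrow{N}\equiv_p\overrightarrow{C}_{-4}$ with $\varepsilon\le 1$. Since $\overrightarrow{N}$ is critical it has minimum degree $\ge 2$, so at least two neighbours of $w$ lie in $\overrightarrow{N}$; the neighbours of $w$ in $\overrightarrow{M}'$ are $u_{11}$, $u_{31}$, $u_1$, $u_2$, and (when $\deg_{\overrightarrow{E}_i}(v^*)=3$) one more vertex of $\overrightarrow{F}$. Using that $\overrightarrow{E}_i$ is triangle-free and $C_4$-free, the only $4$-cycle of $\overrightarrow{M}'$ through $w$ is the newly created $u\,u_{11}\,w\,u_{31}\,u$, and a copy of $\overrightarrow{C}_{-4}$ or $\overrightarrow{E}_j$ through $w$ cannot sit entirely inside the $\overrightarrow{E}_i$-copy except as $\overrightarrow{E}_i$ itself (which is handled because $w$'s extra neighbours $u_{11},u_{31}$ then must be reached); combining this with the edge $u_2u_3$ and the observation that $u_1$ and $u_2$ alone cannot supply a critical subgraph around $w$ should give $\varepsilon\ge 1$. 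For $\varepsilon=2$ one gets $\rho(\overrightarrow{H})\le\rho(\overrightarrow{N})-11\le -3$; for $\varepsilon=1$ with $\overrightarrow{N}\not\equiv_p\overrightarrow{C}_{-4}$ one gets $\rho(\overrightarrow{H})\le\rho(\overrightarrow{N})+2\le 2$; in both cases $\rho(\overrightarrow{H})\le 6$, contradicting Lemma~\ref{lem gap-lemma}. The remaining configuration, $\overrightarrow{N}\equiv_p\overrightarrow{C}_{-4}$ avoiding $u_{11}$ and $u_{31}$ (equivalently $\varepsilon\le 1$ with $\overrightarrow{N}$ a $\overrightarrow{C}_{-4}$), is the genuine obstacle: here $\overrightarrow{M}$ contains the $6$-cycle $u\,u_{31}\,u_3\,u_2\,u_{22}\,u_{21}\,u$ together with an $8$-cycle through $u$, $u_{11}$, $u_{12}$, $u_1$, $t$, $u_2$, $u_{22}$, $u_{21}$, and I expect this case to be finished by a direct push/colour analysis of that $6$-cycle (exploiting the $3^5$-nature of $u$ and the degree conditions on $u_2,u_3$), or by exhibiting, inside it, one of the reducible configurations of Lemma~\ref{lem reducible-critical} or an instance of Lemma~\ref{lem 3-2} — the precise attachment of $\overrightarrow{F}$ and $u_3$ in the hypothesis being exactly what makes this possible.
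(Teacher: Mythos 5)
There is a genuine gap, and it sits exactly where you flag the ``hard part''. The identification of $u_{12}$ and $u_3$ cannot drive the usual criticality-plus-Gap-Lemma machinery here, because the hypothesis of the lemma itself guarantees that $\overrightarrow{M}'[V(\overrightarrow{F})\cup\{w\}]$ is a copy of $\overrightarrow{E}_i$: this copy is a pushably $3$-critical subgraph of $\overrightarrow{M}'$ that contains $w$ and avoids $u_{11}$ and $u_{31}$, so the extracted critical subgraph $\overrightarrow{N}$ can perfectly well be (and naturally is) this copy, with $\rho(\overrightarrow{N})=0$ and, in your notation, $\varepsilon=0$. Your claimed reduction to $\varepsilon\ge 1$ is not justified and is in fact false: minimum degree $2$ at $w$ is already satisfied inside the $\overrightarrow{E}_i$-copy (since $v^*$ has degree at least $2$ in $\overrightarrow{E}_i$), and nothing forces $\overrightarrow{N}$ to reach $u_{11}$ or $u_{31}$; the parenthetical ``which is handled because $w$'s extra neighbours must be reached'' has no basis. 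In that case your bound gives only $\rho(\overrightarrow{H})\le 15$, which does not contradict Lemma~\ref{lem gap-lemma} — indeed the un-merged graph $\overrightarrow{M}[V(\overrightarrow{F})\cup\{u_{12},u_3\}]$ is a legitimate subgraph of $\overrightarrow{M}$ with unremarkable potential. On top of this, the residual case $\overrightarrow{N}\equiv_p\overrightarrow{C}_{-4}$ is left as an expectation (``I expect this case to be finished by\ldots''), not an argument.

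The paper's proof goes a different way and its two key steps are absent from your proposal. It takes $\overrightarrow{H}=\overrightarrow{M}[V(\overrightarrow{F})\cup\{u,u_{11},u_{12},u_{21},u_{22},u_{31},u_2,u_3\}]$ and counts potential directly: if $u_2\notin V(\overrightarrow{F})$ then $\rho(\overrightarrow{H})\le 1$ forces $\overrightarrow{H}=\overrightarrow{M}$ by the Gap Lemma, but then the $3^+$-vertex $u_2$ must have an uncounted neighbour in $\overrightarrow{F}$, dropping the potential to at most $-12$, a contradiction; hence $u_2\in V(\overrightarrow{F})$, and then $\rho(\overrightarrow{H})\le -1$ again forces $\overrightarrow{H}=\overrightarrow{M}$ with no uncounted arcs. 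At that point $\overrightarrow{M}$ is completely determined up to the finitely many choices of $i\in\{1,2,3\}$ and of the $3$-vertex $v^*\in V(\overrightarrow{E}_i)$, and the contradiction is obtained by verifying (by computer) that every such candidate is pushably $3$-colorable. Without this ``force $\overrightarrow{H}=\overrightarrow{M}$, reconstruct, and check all candidates'' step, or some replacement for it, the configuration cannot be excluded, so the proposal as written does not prove the lemma.
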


\begin{proof} Let $\overrightarrow{H}$ be the subgraph induced by  $V(\overrightarrow{F}) \cup \{u, u_{11}, u_{12}, u_{21}, u_{22}, u_{31}, u_2, u_3\}$. 
Suppose $u_2 \not\in V(\overrightarrow{F})$. 
Since $|V(\overrightarrow{E}_i)| = 13$ and $|A(\overrightarrow{E}_i)| = 15$, 
we must have $|V(\overrightarrow{F})| = 12$ and 
$|A(\overrightarrow{F})| + [V(\overrightarrow{F}) , \{u_{12}, u_3\}] \geq 15$. 
Observe that, the subgraph induced by $V(\overrightarrow{H}) \setminus V(\overrightarrow{F})$ contains $8$ vertices and $8$ arcs. 
Since, 
$$\rho(\overrightarrow{H}) \leq \rho(\overrightarrow{H} \setminus \overrightarrow{F}) + \rho(\overrightarrow{F}),$$
$$\rho(\overrightarrow{H}) \leq (12+8) \times 15 - (15+8) \times 13 = 1,$$
which implies $\overrightarrow{H} = \overrightarrow{M}$ by Lemma~\ref{lem gap-lemma}.


It is given that $u_2$ and $u_3$ are adjacent and $u_2$  is a $3^+$ vertex. 
Further, except $u_2$, the 
neighborhood of each vertex from 
$V(\overrightarrow{H}) \setminus V(\overrightarrow{F})$  is known. That means, $u_2$ must have a neighbor inside 
$\overrightarrow{F}$.
That gives an extra arc which was not counted in 
$\overrightarrow{H}$ while calculating its potential. Hence, the 
updated calculation will imply
$$\rho(\overrightarrow{M}) = \rho(\overrightarrow{H}) \leq 1 - 13 =-12,$$
a contradiction. This implies $u_2 \in V(\overrightarrow{F})$.

We can again calculate the potential of $\overrightarrow{H}$ supposing 
$u_2 \in V(\overrightarrow{F})$,
similarly like above. 
Observe that since there are exactly 
$7$ vertices and $7$ arcs
in the subgraph induced by $V(\overrightarrow{H}) \setminus V(\overrightarrow{F})$, we have 
$$\rho(\overrightarrow{H}) \leq (12+7) \times 15 - (15+7) \times 13 = -1,$$
which implies $\overrightarrow{H} = \overrightarrow{M}$ by Lemma~\ref{lem gap-lemma}. Notice that if there are any arcs other than the ones we counted, the potential, $\rho(\overrightarrow{M}) = \rho(\overrightarrow{H})$ will be less than or equal to $-14$ which is a contradiction. If not, it is possible to reconstruct $\overrightarrow{M}$ from the information we have by varying 
$i \in \{1,2,3\}$ and $v^* \in V(\overrightarrow{E}_i)$. Note that, since $u_3$ is a $3^{+}$-vertex, it must have a neighbor other than $u_{31}$ and $v_2$ inside $V(\overrightarrow{F})$. Thus, the vertex $v^*$ obtained after merging $u_3$ and $u_{12}$ must be a $3$-vertex of 
$\overrightarrow{E}_i$.

We have observed that all possible $\overrightarrow{M}$ that can be reconstructed in this way 
admit a pushable $3$-coloring. 
This observation is a tedious mechanical task, and we have verified all possible cases using a computer. The detailed code and the results can be found in the following link: \url{https://drive.google.com/drive/folders/1j2wlG9WrvrzqMxaU6pRpLbq1x0FYpMPk?usp=drive_link}. 
\end{proof}


\begin{lemma}\label{lm 5 cycle reduction}
 $\overrightarrow{M}$ does not contain a 
   $5$-cycle $\overrightarrow{C}$ of the form  $vu_1v'u_2u_3v$, where $u_1, u_2, u_3$ are $2$-vertices.
\end{lemma}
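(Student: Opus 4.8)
The plan is to argue by contradiction following the template of Lemmas~\ref{lem 3-3} and~\ref{lem 3-2}: assuming $\overrightarrow{M}$ contains a $5$-cycle $\overrightarrow{C}=vu_1v'u_2u_3v$ with $u_1,u_2,u_3$ being $2$-vertices, I would collapse $\overrightarrow{C}$ by identifying $v$ with $v'$, use criticality to locate a pushably $\overrightarrow{C}_3$-critical subgraph in the collapsed graph, and then re-expand it to a subgraph of $\overrightarrow{M}$ whose potential violates the Gap Lemma (Lemma~\ref{lem gap-lemma}). First I would dispose of two degenerate situations. If one of $v,v'$ is a $2$-vertex, then either all five vertices of $\overrightarrow{C}$ are $2$-vertices and $\overrightarrow{M}=\overrightarrow{C}$ is a $5$-cycle (which is pushably $3$-colourable), or the $5$-cycle is a closed chain of length $5$ at the remaining $3^+$-vertex, and the colour of that vertex extends around $\overrightarrow{C}$ since a path with five arcs imposes no constraint between its endpoints (row $k=5$ of Table~\ref{table path analysis}); in both cases criticality of $\overrightarrow{M}$ is contradicted, so $v$ and $v'$ are both $3^+$-vertices. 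If $v$ and $v'$ are adjacent, then $\overrightarrow{M}[\{v,v',u_1,u_2,u_3\}]$ contains $\overrightarrow{C}$ together with the chord $vv'$, i.e.\ six arcs on five vertices, so it has potential at most $-3$; by the Gap Lemma it must equal $\overrightarrow{M}$, whence $\rho(\overrightarrow{M})\le-3<-1$, a contradiction. So from now on $v$ and $v'$ are non-adjacent $3^+$-vertices.

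The core step is the collapse. After replacing $\overrightarrow{M}$ by a push-equivalent orientation in which the two arcs incident with $u_1$ are parallel (both pointing into $u_1$), identify $v$ and $v'$ into a single vertex $w$ to obtain $\overrightarrow{M}'$. This identification is clean: the two parallel arcs at $u_1$ merge into one arc, no bidirected arc appears at $u_1$, no loop appears since $v$ and $v'$ are non-adjacent, and the quotient map is a homomorphism $\psi:\overrightarrow{M}\to\overrightarrow{M}'$ with $\psi(v)=\psi(v')=w$. In $\overrightarrow{M}'$ the vertex $u_1$ is a pendant at $w$ and $\{w,u_2,u_3\}$ induces a triangle, so $X:=\overrightarrow{M}'[\{w,u_1,u_2,u_3\}]$ is a reducible configuration: any colouring of the unique boundary vertex $w$ extends over the pendant $u_1$ and, after possibly pushing $u_2$ or $u_3$, over the triangle $\{w,u_2,u_3\}$, every orientation of $K_3$ being pushably $\overrightarrow{C}_3$-colourable. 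Hence $\overrightarrow{M}'-\{u_1,u_2,u_3\}$ cannot be pushably $3$-colourable, for otherwise the colouring would extend to $\overrightarrow{M}'$ and then, via $\psi$, give a pushable $3$-colouring of $\overrightarrow{M}$. Thus $\overrightarrow{M}'-\{u_1,u_2,u_3\}$ contains a pushably $\overrightarrow{C}_3$-critical subgraph $\overrightarrow{N}$, and $\overrightarrow{N}$ must contain $w$, since otherwise it would be a proper pushably $3$-critical subgraph of $\overrightarrow{M}$; moreover $\overrightarrow{N}$ is strictly smaller than $\overrightarrow{M}$ in $|V|+|A|$.

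Finally set $\overrightarrow{H}:=\overrightarrow{M}\big[(V(\overrightarrow{N})\setminus\{w\})\cup\{v,v',u_1,u_2,u_3\}\big]$. Splitting $w$ back into $v,v'$ adds one vertex and, since $v$ and $v'$ are non-adjacent, no arc, while $u_1,u_2,u_3$ contribute three vertices and the five arcs of $\overrightarrow{C}$, so $|V(\overrightarrow{H})|=|V(\overrightarrow{N})|+4$ and $|A(\overrightarrow{H})|\ge|A(\overrightarrow{N})|+5$, hence
\[
\rho(\overrightarrow{H})\ \le\ \rho(\overrightarrow{N})+4\cdot15-5\cdot13\ =\ \rho(\overrightarrow{N})-5\ \le\ 3,
\]
since $\rho(\overrightarrow{N})\le0$ unless $\overrightarrow{N}\equiv_p\overrightarrow{C}_{-4}$ (where $\rho=8$), the values $\rho(\overrightarrow{E}_i)=0$ being irrelevant by the minimality of $\overrightarrow{M}$. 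As $\rho(\overrightarrow{H})<6$ we get $\overrightarrow{H}\not\equiv_p\overrightarrow{C}_3$, and $\overrightarrow{M}=P_4(\overrightarrow{H})$ would give $\rho(\overrightarrow{M})=\rho(\overrightarrow{H})-7\le-4$, contradicting $\rho(\overrightarrow{M})\ge-1$; so the Gap Lemma forces $\overrightarrow{H}=\overrightarrow{M}$. But then $-1\le\rho(\overrightarrow{M})\le\rho(\overrightarrow{N})-5$ forces $\rho(\overrightarrow{N})\ge4$, hence $\overrightarrow{N}\equiv_p\overrightarrow{C}_{-4}$, $|V(\overrightarrow{M})|=8$, $|A(\overrightarrow{M})|=9$, and $\overrightarrow{M}$ is the $5$-cycle $\overrightarrow{C}$ with a fourth internally disjoint $(v,v')$-path of length $4$ attached (carrying the $\overrightarrow{C}_{-4}$); there are only finitely many such orientations, and each of them is pushably $3$-colourable — a check most conveniently carried out by computer, as for Lemma~\ref{lem Ei+3-2} — giving the final contradiction. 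I expect the main obstacle to be precisely the bookkeeping around the collapse: verifying that the arcs at $u_1$ can be made parallel by pushing without introducing a bidirected arc anywhere else, that the triangle $\{w,u_2,u_3\}$ is genuinely reducible in every orientation, and that $\psi$ is indeed a homomorphism, together with the finite but fiddly elimination of the residual $\overrightarrow{C}_{-4}$ configuration.
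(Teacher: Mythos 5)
Your proposal is correct and follows essentially the same route as the paper: push so both arcs at $u_1$ point the same way, identify $v$ and $v'$, use criticality to locate a pushably $\overrightarrow{C}_3$-critical subgraph through the merged vertex, pull it back into $\overrightarrow{M}$ and apply the Gap Lemma's potential count, and finish by reconstructing the residual $8$-vertex graph arising from $\overrightarrow{C}_{-4}$ and checking it is pushably $3$-colourable. The differences are only bookkeeping — the paper deletes $u_1,u_2,u_3$ before identifying and re-adds just $v,v',u_1$, getting its contradiction from the absent $u_2,u_3$, while you re-add the whole cycle and funnel everything into the $\overrightarrow{C}_{-4}$ case — and where the paper exhibits an explicit colouring of the residual graph (Fig.~\ref{fig:M''}) you defer that finite check to a computer.
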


\begin{proof}
    Assume the contrary. Suppose, $\overrightarrow{M}$ contains a 
   $5$-cycle $\overrightarrow{C}$ of the form  $vu_1v'u_2u_3v$, where $u_1, u_2, u_3$ are $2$-vertices. We push the vertices of $\overrightarrow{M}$ to obtain an orientation where $vu_1$ and $v'u_1$ are arcs. We get an oriented graph $\overrightarrow{M}_1$ by deleting the vertices $u_1, u_2, u_3$ and identifying the 
    vertices $v, v'$ (call the new vertex $w$). 
    
    If $f$ is a pushable homomorphism of  $\overrightarrow{M}_1$ to $\overrightarrow{C}_3$ with 
    $f(w) = \bar{0}$, without loss of generality, then there exists a partial pushable homomorphism $g$ of $\overrightarrow{M}$ with $g(v) = g(v') = \bar{0}$ where $u_1, u_2, u_3$ are the uncolored vertices. 
    Notice that $g$ is extendable due to Observation~\ref{obs path analysis}, which is a contradiction. 
    
    Hence, $\overrightarrow{M}_1$ does not admit a pushable $3$-coloring, and thus must contain a pushably $3$-critical subgraph $\overrightarrow{M}'$. 
    Now, $\overrightarrow{M}'$ must contain the vertex $w$, as otherwise $\overrightarrow{M}'$ is a 
    pushably $3$-critical proper subgraph of $\overrightarrow{M}$, which is impossible.

    Let $\overrightarrow{M}''$ be the subgraph of $\overrightarrow{M}$ induced by $[V(\overrightarrow{M}') \setminus \{w\}] \cup \{v, u_1, v'\}$. Since $\overrightarrow{M}'$ is a pushably $3$-critical graph having less number of vertices than $\overrightarrow{M}$, 
    unless $\overrightarrow{M}' = \overrightarrow{C}_{-4}$, 
    we must have $\rho(\overrightarrow{M}') \leq 0$. Using this,
    and assuming $\overrightarrow{M}' \neq \overrightarrow{C}_{-4}$,
    we have 
    $$\rho(\overrightarrow{M}'') = \rho(\overrightarrow{M}') + 2 \times 15 - 2\times 13 \leq 4.$$
    However, this implies that $\overrightarrow{M}'' = \overrightarrow{M}$ by Lemma~\ref{lem gap-lemma}, which is impossible since $u_2$ and $u_3$ do not belong to $\overrightarrow{M}''$. 

    That means, $\overrightarrow{M}' = 
    \overrightarrow{C}_{-4}$. Thus, we can reconstruct 
    $\overrightarrow{M}''$ along with the $2$-chain 
    $v'u_2u_3v$, it will exactly be the oriented graph 
    $\overrightarrow{M}'''$ depicted in Fig.~\ref{fig:M''}. 
    Note that, $\rho(\overrightarrow{M}''') = 3$, and thus we 
    must have $\overrightarrow{M}''' = \overrightarrow{M}$. 
    However, this is impossible since $\overrightarrow{M}'''$ 
    is pushably $3$-colorable (see Fig.~\ref{fig:M''}). 
    \end{proof}

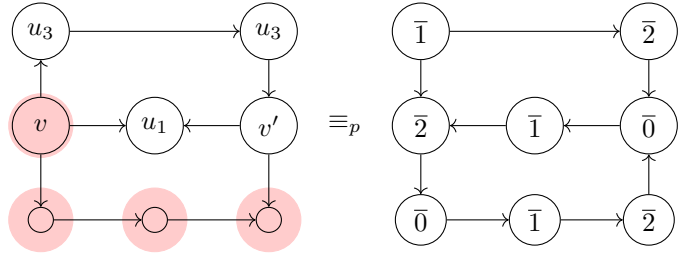
\begin{figure}
    \centering

    \begin{tikzpicture}
    \filldraw[red!20] (0,0) circle (12pt);
    \node[draw, circle, minimum size=.75cm] (v1) at (0,0) {$v$};
    
    \node[draw, circle, minimum size=.75cm] (v2) at (1.5,0) {$u_1$};
    \node[draw, circle, minimum size=.75cm] (v3) at (3,0) {$v'$};
    \node[draw, circle, minimum size=.75cm] (v4) at (0,1.25) {$u_3$};
    \node[draw, circle, minimum size=.75cm] (v5) at (3,1.25) {$u_3$};
    \filldraw[red!20] (0,-1.25) circle (12pt);
    \node[draw, circle, minimum size=.05cm] (v11) at (0,-1.25) {};
    \filldraw[red!20] (1.5,-1.25) circle (12pt);
    \node[draw, circle, minimum size=.05cm] (v21) at (1.5,-1.25) {};
    \filldraw[red!20] (3,-1.25) circle (12pt);
    \node[draw, circle, minimum size=.05cm] (v31) at (3,-1.25) {};
    
     \node at (4,0){$\equiv_p$};

     \node[draw, circle, minimum size=.75cm] (v1') at (5,0) {$\overline{2}$};
    \node[draw, circle, minimum size=.75cm] (v2') at (6.5,0) {$\overline{1}$};
    \node[draw, circle, minimum size=.75cm] (v3') at (8,0) {$\overline{0}$};
    \node[draw, circle, minimum size=.75cm] (v4') at (5,1.25) {$\overline{1}$};
    \node[draw, circle, minimum size=.75cm] (v5') at (8,1.25) {$\overline{2}$};
    \node[draw, circle, minimum size=.05cm] (v11') at (5,-1.25) {$\overline{0}$};
    \node[draw, circle, minimum size=.05cm] (v21') at (6.5,-1.25) {$\overline{1}$};
    \node[draw, circle, minimum size=.05cm] (v31') at (8,-1.25) {$\overline{2}$};
     
    \draw[->] (v1) -- (v2);
    \draw[->] (v3) -- (v2);
     \draw[->] (v1) -- (v4);
     \draw[->] (v5) -- (v3);
     \draw[->] (v4) -- (v5);
     \draw[->] (v1) -- (v11);
     \draw[->] (v11) -- (v21);
     \draw[->] (v21) -- (v31);
     \draw[->] (v3) -- (v31);

     \draw[<-] (v1') -- (v2');
    \draw[->] (v3') -- (v2');
     \draw[<-] (v1') -- (v4');
     \draw[->] (v5') -- (v3');
     \draw[->] (v4') -- (v5');
     \draw[->] (v1') -- (v11');
     \draw[->] (v11') -- (v21');
     \draw[->] (v21') -- (v31');
     \draw[<-] (v3') -- (v31');

\end{tikzpicture}

    \caption{A depiction of the oriented graph $\overrightarrow{M}'''$ as per the proof of Lemma~\ref{lm 5 cycle reduction}, and its pushable $3$-coloring. The highlighted vertices are pushed.}
    \label{fig:M''}
\end{figure}

\begin{lemma}\label{lm 3_221-1-3>4}
    $\overrightarrow{M}$ does not contain a $3_{2,2,1}$-vertex which is $1$-chain-adjacent to a 
    $3^{\geq 4}$-vertex.
\end{lemma}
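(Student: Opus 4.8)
The plan is to argue by contradiction, in the same style as Lemmas~\ref{lem 3-3}, \ref{lem 3-2} and~\ref{lm 5 cycle reduction}: assume $\overrightarrow{M}$ contains a $3_{2,2,1}$-vertex $v$ that is $1$-chain-adjacent to a $3^{\geq 4}$-vertex $w$, produce a strictly smaller auxiliary graph by deleting and identifying vertices, and then exploit pushable $3$-criticality of $\overrightarrow{M}$ together with the Gap Lemma. Write $v$'s two $2$-chains as $v\,a_1\,a_2\,x$ and $v\,b_1\,b_2\,y$ (with $x,y$ the $3^{+}$-endpoints and $a_1,a_2,b_1,b_2$ the internal $2$-vertices) and its $1$-chain as $v\,c\,w$. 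By Lemma~\ref{lm 5 cycle reduction} we may assume $w\notin\{x,y\}$ (otherwise $\overrightarrow{M}$ contains the forbidden directed $5$-cycle), and by configuration \textbf{C2} of Lemma~\ref{lem reducible-critical} the vertex $w$ is a $3^4$-, $3^5$- or $3^6$-vertex, so the two chains at $w$ other than $v\,c\,w$ carry at least three internal $2$-vertices, and the corresponding arcs, between them.

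Replace $\overrightarrow{M}$ by a push-equivalent copy in which both $2$-chains of $v$ are oriented identically away from $v$ (possible since the parity of each $2$-chain of $v$ can be flipped by pushing its far endpoint without disturbing the other two chains of $v$), and form $\overrightarrow{M}'$ by identifying $x$ with $y$ (call the result $p$), $a_1$ with $b_1$, and $a_2$ with $b_2$. In $\overrightarrow{M}'$ the vertex $v$ has become a $2$-vertex, and $w\,c\,v\,\overline{a_1b_1}\,\overline{a_2b_2}\,p$ is an oriented path on six vertices whose four internal vertices are $2$-vertices; by the $k=5$ rows of Table~\ref{table path analysis} a colored pair of endpoints of such a path forbids no color, so this path is a reducible configuration of $\overrightarrow{M}'$. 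The quotient map $\overrightarrow{M}\to\overrightarrow{M}'$ is a homomorphism, so a pushable $3$-coloring of $\overrightarrow{M}'$ would pull back to one of $\overrightarrow{M}$, which is impossible; hence $\overrightarrow{M}'-\{c,v,\overline{a_1b_1},\overline{a_2b_2}\}$ is not pushably $3$-colorable (otherwise extend across the reducible path), so it contains a pushably $3$-critical subgraph $\overrightarrow{N}$, which must use $p$ (otherwise $\overrightarrow{N}$ would be a proper pushably $3$-critical subgraph of $\overrightarrow{M}$). Since $|V(\overrightarrow{N})|+|A(\overrightarrow{N})| < |V(\overrightarrow{M})|+|A(\overrightarrow{M})|$, minimality of $\overrightarrow{M}$ gives $\rho(\overrightarrow{N})\le -2$, or $\overrightarrow{N}\equiv_p\overrightarrow{C}_{-4}$, or $\overrightarrow{N}\equiv_p\overrightarrow{E}_i$ for some $i\in\{1,2,3\}$.

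In the case $\overrightarrow{N}\equiv_p\overrightarrow{C}_{-4}$, unfolding the identification forces in $\overrightarrow{M}$ a directed $5$-cycle of the kind excluded by Lemma~\ref{lm 5 cycle reduction} (or a shorter even closed walk), a contradiction. In the case $\overrightarrow{N}\equiv_p\overrightarrow{E}_i$, unfolding produces precisely the configuration forbidden by Lemma~\ref{lem Ei+3-2} with $u=v$, $\{u_1,u_2\}=\{x,y\}$, $u_3=w$ and $\overrightarrow{F}=\overrightarrow{N}-p$, so it cannot occur. In the remaining case $\rho(\overrightarrow{N})\le -2$, reconstruct the subgraph $\overrightarrow{H}$ of $\overrightarrow{M}$ induced by the vertex set of $\overrightarrow{N}$ with $p$ replaced by $\{x,y\}$, together with $v,a_1,a_2,b_1,b_2,c$ and the whole neighborhood of $w$ (its two chains other than $v\,c\,w$, including their far endpoints); a routine count of $|V(\overrightarrow{H})|$ and $|A(\overrightarrow{H})|$ — using that these extra vertices around $w$ contribute vertices and arcs not yet accounted for — yields $\rho(\overrightarrow{H})\le 6$, while $\overrightarrow{H}$ is a proper subgraph of $\overrightarrow{M}$ that is neither $\equiv_p\overrightarrow{C}_3$ nor satisfies $\overrightarrow{M}=P_4(\overrightarrow{H})$, contradicting Lemma~\ref{lem gap-lemma}.

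The genuine obstacle I expect is the last potential count: one must check that re-expanding $p$ and reattaching $v$'s chains together with $w$'s other two chains increases the potential by no more than the slack Lemma~\ref{lem gap-lemma} tolerates, and this is exactly where the hypothesis "$w$ is a $3^{\geq 4}$-vertex'' rather than merely "$w$ is a $3^{+}$-vertex'' has to be spent, since a poorer $w$ would not supply enough arcs to close the gap. A secondary subtlety is verifying, in the $\overrightarrow{E}_i$ case, that the vertex of $\overrightarrow{E}_i$ being deleted is a degree-$3$ vertex and that identifying $u_{12}$ with $u_3$ (in the notation of Lemma~\ref{lem Ei+3-2}) genuinely reconstructs $\overrightarrow{E}_i$, so that Lemma~\ref{lem Ei+3-2} applies verbatim; the push-to-align step before the identification also needs the remark that each $2$-chain of $v$ can have its orientation normalised independently.
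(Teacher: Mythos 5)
Your overall strategy (identify the far ends of the two $2$-chains, find a pushably $3$-critical subgraph through the merged vertex, then use minimality and Lemma~\ref{lem gap-lemma}) mirrors the proof technique of Lemma~\ref{lem 3-2}, but the step you yourself flag as the crux --- the ``routine count'' giving $\rho(\overrightarrow{H})\le 6$ --- does not go through, and this is precisely why the paper does not prove the lemma this way. Unfolding $p$ into $\{x,y\}$ and restoring the two $2$-chains adds $6$ vertices ($x,y$ is a net $+1$, plus $v,a_1,a_2,b_1,b_2$) against only $6$ guaranteed new arcs, i.e.\ a cost of $+6(15-13)=+12$, so in the generic case $\rho(\overrightarrow{N})\le -2$ you only get $\rho(\overrightarrow{H})\le 10$, which contradicts nothing (for $1$-chains, as in Lemma~\ref{lem 3-2}, the overhead is $+8$ and one just reaches the threshold $6$). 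Adding $c$ and $w$ helps only if $w\in V(\overrightarrow{N})$ (one new vertex, two new arcs, giving $\rho(\overrightarrow{H})\le\rho(\overrightarrow{N})+1\le -1$, which still leaves the residual possibility $\overrightarrow{H}=\overrightarrow{M}$ to be refuted separately); if $w\notin V(\overrightarrow{N})$ it costs $+4$, and attaching $w$'s other two chains ``including their far endpoints'' makes the bound worse, not better: a pendant chain with $j$ internal vertices and a new far endpoint changes the potential by $+2(j+1)>0$, since each new vertex costs $15$ while each new arc recovers only $13$; the contribution is negative only when the far endpoint already lies in $\overrightarrow{N}$, which you cannot guarantee. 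So the hypothesis ``$w$ is a $3^{\geq4}$-vertex'' cannot be spent the way you propose, and in the case $w\notin V(\overrightarrow{N})$ no usable potential bound comes out at all.

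The exceptional cases are also not handled correctly. The degenerate identifications are ignored: if $x=y$, if $x$ and $y$ are adjacent, or if they share a neighbour whose two arcs become inconsistent after your push-normalisation, then $\overrightarrow{M}'$ has a loop, a digon or bad parallel arcs and is not an oriented graph, so the quotient/criticality machinery does not apply; these are exactly the situations that produce alternatives (i)--(iii) of Lemma~\ref{lem 3-2} and that consume most of the paper's actual proof (its Cases 2--5, which invoke Lemma~\ref{lm 5 cycle reduction}, Lemma~\ref{lem Ei+3-2}, configurations \textbf{C14}--\textbf{C16} of Lemma~\ref{lem reducible-critical}, and a final ad hoc colouring of $\overrightarrow{M}-u'$). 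Your two exceptional subcases for $\overrightarrow{N}$ are likewise misreduced: if $\overrightarrow{N}\equiv_p\overrightarrow{C}_{-4}$, unfolding yields a $4$-path between $x$ and $y$ (hence, with the two $2$-chains, a $10$-cycle of prescribed parity), not the $5$-cycle of Lemma~\ref{lm 5 cycle reduction}, and nothing forces its internal vertices to be $2$-vertices; and if $\overrightarrow{N}\equiv_p\overrightarrow{E}_i$, the copy of $\overrightarrow{E}_i$ arises by merging the two chain endpoints $x$ and $y$, whereas Lemma~\ref{lem Ei+3-2} concerns merging a chain-internal $2$-vertex $u_{12}$ with the vertex $u_3$ and additionally requires the adjacency $u_2\sim u_3$, so it does not apply verbatim (or at all) here. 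The paper instead derives the lemma by applying the already-established dichotomy of Lemma~\ref{lem 3-2} twice at $v$ and eliminating the five resulting cases with bespoke potential and colouring arguments; your single-identification shortcut would need new analogues of those exceptional alternatives for the $2$-chain gluing as well as a genuinely different way to close the potential gap.
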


\begin{proof}
   Suppose $\overrightarrow{M}$ contains a $3_{2,2,1}$-vertex $v$ which is $1$-chain-adjacent to a 
    $3^{\geq 4}$-vertex $u$. Let the two $2$-chains incident to $v$ be 
    $vu_{11}u_{12}v_1$ and $vu_{21}u_{22}v_2$. 
    Also let $u'$ be the $2$-vertex adjacent to $v$ and $u$.  Notice that it is possible to apply Lemma~\ref{lem 3-2} on $v$ 
    in two ways depending on whether $v_1$ or $v_2$ play the role of $z$ (as in Lemma~\ref{lem 3-2}). We arrive at a contradiction by checking all instances stated in Lemma~\ref{lem 3-2}.
\medskip

  \noindent \textbf{Case 1:} Suppose while applying Lemma~\ref{lem 3-2} on $v$, in both instances point $(iv)$ gets implemented. 
    That is, $\overrightarrow{M}$ contains two subgraphs 
    $\overrightarrow{F}_1 = \overrightarrow{E}_i - v_1^*$ and 
    $\overrightarrow{F}_2 = \overrightarrow{E}_j - v_2^*$
    for some $i, j \in \{1,2,3\}$ and for some $v_1^* \in V(\overrightarrow{E}_i)$, $v_2^* \in V(\overrightarrow{E}_j)$. 
    Moreover, $v_2 \in V(\overrightarrow{F}_1)$, $v_1 \in V(\overrightarrow{F}_2)$ and $u, u', v, u_{11}, u_{12}, u_{21}, u_{22} \not\in V(\overrightarrow{F}_1) \cup V(\overrightarrow{F}_2)$. Observe that, the vertices of 
    $\overrightarrow{F}_1$, along with the vertices $u$ and $u_{22}$ identified will give us a copy of $\overrightarrow{E}_i$. Similarly, 
    the vertices of 
    $\overrightarrow{F}_2$, along with the vertices $u$ and $u_{12}$ identified will give us a copy of $\overrightarrow{E}_j$. 
    Let $\overrightarrow{M}'$ be the subgraph of $\overrightarrow{M}$ induced by 
    $V(\overrightarrow{F}_1) \cup V(\overrightarrow{F}_2) \cup \{u, u', v, u_{11}, u_{12}, u_{21}, u_{22}\}$.

    To calculate the potential of $\overrightarrow{M}'$, 
    if we use the value of the potentials of $\overrightarrow{E}_i$ and $\overrightarrow{E}_j$, 
    then we will have counted the potential of two extra vertices (which are not in $\overrightarrow{M}'$). 
    However, we can easily cancel the effect of overcounting two vertices by not counting the vertices $u_{12}$ and $u_{22}$ anymore. 
    On the other hand, some arcs we counted in 
    $\overrightarrow{E}_i$ and $\overrightarrow{E}_j$, may originally have been common arcs between $V(\overrightarrow{F}_1) \cap V(\overrightarrow{F}_2)$ and  $u$.
    There can be at most two such arcs as $u$ is a $3^{\geq 4}$-vertex. 
    Suppose there are 
    $\ell$ arcs between $V(\overrightarrow{F}_1) \cap V(\overrightarrow{F}_2)$ and  $u$. As observed before, $\ell$ may take values 
    $0, 1$ or $2$. 
    That means, the overcounting of $\ell$ arcs can be accounted for by 
    adding a factor of $13 \ell$.
    Furthermore, we have also double counted the potential of the graph induced by $V(\overrightarrow{F}_1) \cap V(\overrightarrow{F}_2)$, and thus we must also subtract a factor of $\rho(\overrightarrow{M}[V(\overrightarrow{F}_1) \cap V(\overrightarrow{F}_2)])$ in our calculation. That leaves us with 
    calculating potential for $5$ more vertices $u, u', v, u_{11}, u_{21}$ and $6$ more arcs, that is, the orientations of the edges 
    $uu'$, $u'v$, $vu_{11}, vu_{21}, u_{11}u_{12}, u_{21}u_{22}$. 
    Therefore, 
    \begin{align*}
    \rho(\overrightarrow{M}') &\leq \rho(\overrightarrow{E}_i) + \rho(\overrightarrow{E}_j) - \rho(\overrightarrow{M}[V(\overrightarrow{F}_1) \cap V(\overrightarrow{F}_2)]) + (5 \times 15) - (6 \times 13) + 13\ell \\
    &\leq 0 + 0 - \rho(\overrightarrow{M}'[V(\overrightarrow{F}_1) \cap V(\overrightarrow{F}_2)]) + -3 + 13\ell \\
    &= 13\ell -3 - \rho(\overrightarrow{M}[V(\overrightarrow{F}_1) \cap V(\overrightarrow{F}_2)]).    
    \end{align*}
 
    Notice that the value of $\rho(\overrightarrow{M}'[V(\overrightarrow{F}_1) \cap V(\overrightarrow{F}_2)])$ is $0$ if the intersection is empty. Otherwise, it is at least $6$ due to Lemma~\ref{lem gap-lemma}. Thus, if $\ell = 0$, we have 
    $\rho(\overrightarrow{M}') \leq -3$, a contradiction. 

    If $\ell = 1$, then $V(\overrightarrow{F}_1) \cap V(\overrightarrow{F}_2)$ is non-empty. Notice that,  $\rho[\overrightarrow{M}'[V(\overrightarrow{F}_1) \cap V(\overrightarrow{F}_2)]) \geq 6$ according to Lemma~\ref{lem gap-lemma}. Thus, 
    $\rho(\overrightarrow{M}') \leq 4$, which implies $\overrightarrow{M}' = \overrightarrow{M}$. 
    That means, the two neighbors of $u$ (other than $u'$), without loss of generality, 
    must both belong to $\overrightarrow{F}_1$. That means, $v_1^*$ is a $3$-vertex of $\overrightarrow{E}_i$. This implies 
$\overrightarrow{M}'[V(\overrightarrow{F}_1) \cap V(\overrightarrow{F}_2)]$ cannot contain all $3$-vertices of $\overrightarrow{E}_i$, and hence cannot have more than one cycle. Therefore, 
$\rho(\overrightarrow{M}'[V(\overrightarrow{F}_1) \cap V(\overrightarrow{F}_2)]) \geq 12$. 
Thus,  
$\rho(\overrightarrow{M}') \leq -2$, a contradiction.

    If $\ell = 2$, then both $v_1^*, v_2^*$ corresponds to a $3$-vertex while 
    $v_1, v_2$, respectively, are one of the neighbors of 
    $v_1^*, v_2^*$. Since $v_1, v_2 \not\in V(\overrightarrow{F}_1) \cap V(\overrightarrow{F}_2)$, the induced subgraph 
    $\overrightarrow{M}'[V(\overrightarrow{F}_1) \cap V(\overrightarrow{F}_2)]$ cannot have more than one cycle. If $\overrightarrow{M}'[V(\overrightarrow{F}_1) \cap V(\overrightarrow{F}_2)]$ has a cycle $C$, then $C$ must have at least six vertices as the girth of each $\overrightarrow{E}_i$ is $6$. Moreover, since $u$ is a $3^{\geq 4}$-vertex, the paths connecting $u$ to $C$ must have at least $3$ vertices outside $C$. Therefore, if 
    $\overrightarrow{M}'[V(\overrightarrow{F}_1) \cap V(\overrightarrow{F}_2)]$ is a connected (resp., disconnected) unicyclic graph, then it must have at least $9$ vertices (resp., $7$) vertices. 
    Hence, if 
    $\overrightarrow{M}'[V(\overrightarrow{F}_1) \cap V(\overrightarrow{F}_2)]$ is a  unicyclic graph, then $\rho(\overrightarrow{M}'[V(\overrightarrow{F}_1) \cap V(\overrightarrow{F}_2)]) \geq 18$, which implies, $\rho(\overrightarrow{M}') \leq 5$, which further implies 
    $\overrightarrow{M}' = \overrightarrow{M}$. 
    This, in particular,  forces 
    $v_1 \not\in V(\overrightarrow{F}_1) \cap V(\overrightarrow{F}_2)$ to be a $3$-vertex of $\overrightarrow{F}_1$, which is not possible since the $C$ must contain all three 
    $3$-vertices of $\overrightarrow{F}_1$. That means, $\overrightarrow{M}'[V(\overrightarrow{F}_1) \cap V(\overrightarrow{F}_2)]$ is a forest. 
    
    If $\overrightarrow{M}'[V(\overrightarrow{F}_1) \cap V(\overrightarrow{F}_2)]$ is a forest with two or more components, then 
$\rho(\overrightarrow{M}'[V(\overrightarrow{F}_1) \cap V(\overrightarrow{F}_2)]) \geq 30$
implying $\rho(\overrightarrow{M}') \leq -7$. a contradiction. That means, $\overrightarrow{M}'[V(\overrightarrow{F}_1) \cap V(\overrightarrow{F}_2)]$ is a tree. Since $\overrightarrow{E}_i$ has girth at least $6$, the 
path connecting the two neighbors of $u$ in 
$V(\overrightarrow{F}_1) \cap V(\overrightarrow{F}_2)$ must have at least $5$ vertices. On the other hand, if $V(\overrightarrow{F}_1) \cap V(\overrightarrow{F}_2)$ has at least $6$ vertices, 
then $\rho(\overrightarrow{M}') \leq -2$, a contradiction.

Hence, we may assume that 
$\overrightarrow{M}'[V(\overrightarrow{F}_1) \cap V(\overrightarrow{F}_2)]$ is a path on $5$ vertices with its endpoints adjacent to $u$. Then, in particular 
$\rho(\overrightarrow{M}') \leq 0$, implying  $\overrightarrow{M}' = \overrightarrow{M}$. Since $u$ is a $3^{\geq 4}$-vertex, and since no 
$3$-vertices of $\overrightarrow{E}_i$ are chain-adjacent through two distinct chains, $V(\overrightarrow{F}_1) \cap V(\overrightarrow{F}_2)$
contains two adjacent $3$-vertices. Moreover, since $v_1^*$ is obtained by identifying $u, u_{12}$, and $u_{12}$ is adjacent to the $3^+$-vertex $v_1$, we must have two adjacent $3$-vertices in $\overrightarrow{E}_i$ which does not belong to $V(\overrightarrow{F}_1) \cap V(\overrightarrow{F}_2)$. That means, $\overrightarrow{E}_i$ contains a pair of non-incident arcs whose endpoints are $3$-vertices. This is a contradiction since $\overrightarrow{E}_i$ does not contain such a pair of edges for any $i \in \{1,2,3\}$.     
    
    Therefore, it is not possible 
 that 
 while applying Lemma~\ref{lem 3-2} on $v$, in both instances point $(iv)$ gets implemented.~$\blacklozenge$
\medskip

  \noindent \textbf{Case 2:} Suppose that while applying Lemma~\ref{lem 3-2} on $v$, with $v_2$ 
playing the role of $z$ (from Lemma~\ref{lem 3-2}), point $(i)$ gets implemented. That will force $u = v_1$, which in turn implies a $5$-cycle of the form $uu'vu_{11}u_{12}u$ where $u', u_{11}, u_{12}$ are $2$-vertices. This is not possible due to Lemma~\ref{lm 5 cycle reduction}. Hence, while applying Lemma~\ref{lem 3-2} on $v$, point $(i)$ of the lemma can not be implemented.~$\blacklozenge$ 
\medskip

  \noindent \textbf{Case 3:} Suppose, applying Lemma~\ref{lem 3-2} on $v$, with $v_2$ 
playing the role of $z$ (from Lemma~\ref{lem 3-2}), point $(ii)$ gets implemented. This forces $u$ and $v_1$ to be adjacent. Moreover, since $u$ is a 
$3^{\geq 4}$-vertex, a $3$-chain $uu_1u_2u_3u''$ 
must be incident to $u$. That means, 
if we apply Lemma~\ref{lem 3-2}
on $v$ with $v_1$ playing the role of $z$ 
(from Lemma~\ref{lem 3-2}) we cannot implement points 
(i)-(iii). Furthermore, it is not possible to impement point (iv) due to Lemma~\ref{lem Ei+3-2}. Hence, while applying Lemma~\ref{lem 3-2} on $v$, point $(ii)$ of Lemma~\ref{lem 3-2} can not be implemented.~$\blacklozenge$
\medskip

  \noindent \textbf{Case 4:} Suppose while applying Lemma~\ref{lem 3-2} on $v$,  the points (iii) and (iv) get implemented, respectively. Without loss of generality, this will force 
a path of the form $uw_{11}w_{12}v_1$ and $\overrightarrow{F} = \overrightarrow{E}_i - v^*$ for some 
$i \in \{1, 2, 3\}$
and some $v^* \in V(\overrightarrow{E}_i)$ containing $v_2$ and having some neighbors of $u$. We know that $\{u$, $v$, $u'$, $u_{11}$, $u_{12}$, $u_{21}$, $u_{22}\}$ $\not\in V(\overrightarrow{F})$. 
Let $\overrightarrow{M}'$ be the subgraph of $\overrightarrow{M}$ induced by $V(\overrightarrow{F})$ $\cup$ $\{u$, $v$, $u'$, $u_{11}$, $u_{12}$, $u_{21}$, $u_{22}$, $v_1$, $w_{11}$, $w_{12}\}$. 
Thus,
$\rho(\overrightarrow{M}') \leq 5$ if $v_1, w_{11}$ and $w_{22}$ do not belong to $\overrightarrow{F}$. However, this will imply $\overrightarrow{M}' = \overrightarrow{M}$. This is impossible since $v_1$ is a $2$-vertex in $\overrightarrow{M}'$
and a $3^+$-vertex in $\overrightarrow{M}$. On the other hand, if we assume that any of $v_1, w_{11}$ or $w_{12}$ belong to $\overrightarrow{F}$, then the potential calculation will give us
$\rho(\overrightarrow{M}') \leq -10$, a contradiction.~$\blacklozenge$ 
\medskip

  \noindent \textbf{Case 5:} Suppose while applying Lemma~\ref{lem 3-2} on $v$, in both instances point $(iii)$ gets implemented. This will force 
two paths of the form $uw_{11}w_{12}v_1$ and $uw_{21}w_{22}v_2$. Also, since $u$ is a $3^{\geq 4}$-vertex, without loss of generality we may assume that $w_{11}, w_{12}, w_{21}$ are $2$-vertices.  Suppose without loss of generality that the arcs $vu'$ and $uu'$ are present in $\overrightarrow{M}$. Since $\overrightarrow{M} - u'$ is pushably $3$-colorable, the pushable $3$-coloring of $\overrightarrow{M} - u'$ can always be extended to $\overrightarrow{M}$.~$\blacklozenge$
\end{proof}

\begin{lemma}\label{lem 3_222}
$\overrightarrow{M}$ does not contain a $3_{2,2,2}$-vertex.
\end{lemma}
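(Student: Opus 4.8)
The plan is to follow the identification‑and‑potential scheme of Lemmas~\ref{lem 3-3}, \ref{lem 3-2} and~\ref{lm 5 cycle reduction}. Suppose for contradiction that $\overrightarrow{M}$ contains a $3_{2,2,2}$-vertex $v$, and write its three incident $2$-chains as $v a_i b_i z_i$ with $a_i,b_i$ being $2$-vertices and $z_i$ a $3^+$-vertex, for $i\in\{1,2,3\}$. First I would dispose of the degenerate placements of $z_1,z_2,z_3$. If $z_i=z_j$ for some $i\neq j$, a short analysis using Table~\ref{table path analysis} (colour $v$ with the colour of $z_i$ and extend along each chain, which succeeds for every combination of chain parities) shows the partial pushable homomorphism of $\overrightarrow{M}$ with $\{v,a_1,b_1,a_2,b_2,a_3,b_3\}$ uncoloured is always extendable, so this cannot occur; similarly, the cases where two of the $z_i$ are adjacent, or share a neighbour outside the chains, produce either an extendable configuration or a short cycle excluded by Lemma~\ref{lm 5 cycle reduction} (and by $\overrightarrow{M}\neq\overrightarrow{C}_{-4}$). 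Hence we may assume $z_1,z_2,z_3$ are pairwise distinct and pairwise non-adjacent, with no common neighbour lying outside the three chains.

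By pigeonhole two of the chains have the same parity, say $C_1\colon v a_1 b_1 z_1$ and $C_2\colon v a_2 b_2 z_2$; pushing only the internal vertices $a_1,b_1$ we may assume $C_1$ and $C_2$ agree arc by arc. Let $\overrightarrow{M}'$ be obtained from $\overrightarrow{M}$ by identifying $a_1$ with $a_2$ and $b_1$ with $b_2$; denote the merged vertices by $a$ and $b$. Under the assumptions above $\overrightarrow{M}'$ is again an oriented graph with $|V(\overrightarrow{M}')|=|V(\overrightarrow{M})|-2$ and $|A(\overrightarrow{M}')|=|A(\overrightarrow{M})|-2$, and the identification map is a homomorphism $\overrightarrow{M}\to\overrightarrow{M}'$; consequently $\overrightarrow{M}'$ is not pushably $3$-colorable. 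In $\overrightarrow{M}'$ the vertex $b$ is a $3$-vertex with neighbours $z_1,z_2,a$, while $a,v,a_3,b_3$ are $2$-vertices forming a $5$-chain $b\,a\,v\,a_3\,b_3\,z_3$ between the $3^+$-vertices $b$ and $z_3$. By (C1) this $5$-chain is reducible, so $\overrightarrow{M}'-\{a,v,a_3,b_3\}$ is not pushably $3$-colorable and therefore contains a pushably $3$-critical subgraph $\overrightarrow{N}$; by minimality of $\overrightarrow{M}$ we have $\overrightarrow{N}=\overrightarrow{C}_{-4}$, or $\overrightarrow{N}\equiv_p\overrightarrow{E}_i$ for some $i$, or $\rho(\overrightarrow{N})\le -2$.

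Two facts now finish the generic case. First, $\overrightarrow{N}$ must contain $b$, as otherwise $\overrightarrow{N}$ would be a proper subgraph of $\overrightarrow{M}$, impossible since both are pushably $3$-critical. Second, the only neighbours of $b$ in $\overrightarrow{M}'-\{a,v,a_3,b_3\}$ are $z_1$ and $z_2$, so since every pushably $3$-critical graph has minimum degree at least $2$, both $z_1,z_2\in V(\overrightarrow{N})$. Reconstruct in $\overrightarrow{M}$ by un-identifying $b$ into $b_1,b_2$ and re-attaching only the near portions $z_i b_i a_i v$ of $C_1,C_2$: put $\overrightarrow{H}=\overrightarrow{M}\!\left[(V(\overrightarrow{N})\setminus\{b\})\cup\{b_1,b_2,a_1,a_2,v\}\right]$. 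Relative to $\overrightarrow{N}$, $\overrightarrow{H}$ gains $4$ vertices ($b_1,b_2,a_1,a_2,v$ in place of $b$) and $4$ arcs ($b_1a_1,b_2a_2,a_1v,a_2v$; the two edges of $\overrightarrow{N}$ at $b$ simply become $z_1b_1,z_2b_2$), so $\rho(\overrightarrow{H})\le\rho(\overrightarrow{N})+4\cdot15-4\cdot13=\rho(\overrightarrow{N})+8$. When $\rho(\overrightarrow{N})\le -2$ this forces $\rho(\overrightarrow{H})\le 6$; since $\overrightarrow{H}$ omits the vertex $a_3$ it is not $\overrightarrow{M}$, it has more than three vertices so it is not push-equivalent to $\overrightarrow{C}_3$, and the part of $\overrightarrow{M}$ outside $\overrightarrow{H}$ contains the $3^+$-vertex $z_3$ together with the path $v a_3 b_3 z_3$, which is not a $\overrightarrow{P}_4$, so $\overrightarrow{M}\neq P_4(\overrightarrow{H})$ either; thus $\rho(\overrightarrow{H})\le 6$ contradicts Lemma~\ref{lem gap-lemma}.

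It remains to handle $\overrightarrow{N}=\overrightarrow{C}_{-4}$ and $\overrightarrow{N}\equiv_p\overrightarrow{E}_i$, where the identification alone does not drop the potential below the Gap Lemma threshold. For $\overrightarrow{N}=\overrightarrow{C}_{-4}$, un-identifying $b$ turns the $4$-cycle through $b$ into an $8$-cycle through $v$ in $\overrightarrow{M}$; together with the already-known chain structure this pins $\overrightarrow{M}$ down to an explicitly describable graph, which one checks is pushably $3$-colorable (exactly as in the $\overrightarrow{M}'=\overrightarrow{C}_{-4}$ sub-case of Lemma~\ref{lm 5 cycle reduction}). For $\overrightarrow{N}\equiv_p\overrightarrow{E}_i$ one reconstructs all possible graphs $\overrightarrow{M}$ from the available data---$\overrightarrow{N}$ is one of three fixed small graphs, $b$ is a prescribed vertex of it, and the three $2$-chains together with $v$ are attached as above---and verifies that each such $\overrightarrow{M}$ is pushably $3$-colorable, a finite mechanical check of the same flavour as in Lemma~\ref{lem Ei+3-2} which I would carry out with a computer. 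I expect these last two cases, and especially the $\overrightarrow{E}_i$ case, to be the main obstacle; the only other fiddly point is the exhaustive treatment of the degenerate positions of $z_1,z_2,z_3$ in the first step, each instance of which must be routed either to an explicit extension argument via Table~\ref{table path analysis} or to one of the earlier reduction lemmas.
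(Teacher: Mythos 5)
Your overall strategy (align two of the three $2$-chains by pushing, identify them, extract a pushably $3$-critical subgraph $\overrightarrow{N}$ of the identified graph, and pull back through the potential function) is the same family of argument the paper uses via Lemma~\ref{lem 3-2}, and your generic case is essentially sound: when $\rho(\overrightarrow{N})\le -2$, the bound $\rho(\overrightarrow{H})\le\rho(\overrightarrow{N})+4\cdot 15-4\cdot 13$ together with Lemma~\ref{lem gap-lemma} does give a contradiction (your stated reason for $\overrightarrow{M}\neq P_4(\overrightarrow{H})$ needs repair, since $z_3$ may well lie in $\overrightarrow{H}$; the correct reason is that $a_3,b_3$ have no neighbours outside $\{v,b_3\}$ and $\{a_3,z_3\}$, so they cannot be two of the three internal vertices of a single attached $\overrightarrow{P}_4$). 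Two smaller slips: for $z_i=z_j$ the configuration is indeed extendable, but not by ``colour $v$ with the colour of $z_i$'' --- you must allow pushing $v$ or choosing another colour; and your ``$5$-chain'' is a $4$-chain in the paper's convention, although the path-extension fact you invoke is correct.

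The genuine gap is in the exceptional cases $\overrightarrow{N}=\overrightarrow{C}_{-4}$ and $\overrightarrow{N}\equiv_p\overrightarrow{E}_i$, which you dismiss far too quickly. If $\overrightarrow{N}=\overrightarrow{C}_{-4}$, un-identifying only tells you that $z_1,z_2$ share a neighbour and that $\overrightarrow{M}$ contains an $8$-cycle through $v$; the induced subgraph consisting of this $8$-cycle, even with the third chain and $z_3$ attached, has potential $16$ (resp.\ $22$, or $14$ in the $\overrightarrow{E}_i$ case with $z_3$ outside), far above the threshold $7$ of Lemma~\ref{lem gap-lemma}, so $\overrightarrow{M}$ is \emph{not} pinned down --- unlike Lemma~\ref{lm 5 cycle reduction}, where the pulled-back graph had potential $3$ and was forced to equal $\overrightarrow{M}$, or Lemma~\ref{lem Ei+3-2}, where forced extra arcs drove the potential to $-1$ and made the computer check finite. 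Likewise, when $\overrightarrow{N}\equiv_p\overrightarrow{E}_i$ the pull-back has potential $8$, so there is no finite list of candidate graphs $\overrightarrow{M}$ to verify by machine. These configurations cannot be excluded upfront either: two chain-ends sharing an outside neighbour creates a $7$- or $8$-cycle (not a $5$-cycle), and the partial colouring at $v$ need not be extendable (three $3$-arc chains to distinct coloured endpoints can forbid all three colours even after pushing $v$); indeed this is exactly the local picture inside $\overrightarrow{E}_2$, whose $3_{2,2,2}$-vertex has every pair of chain-ends joined through a common neighbour. This is precisely where the paper does its real work: it applies (the analogue of) Lemma~\ref{lem 3-2} once per pair of chains, plays the outcomes against each other, carries out potential counts on unions and intersections of copies of $\overrightarrow{E}_i-v^*$ using the girth and the $3$-vertex structure of the $\overrightarrow{E}_i$'s, and in the last surviving sub-case forces $\overrightarrow{E}_2$ as a subgraph of $\overrightarrow{M}$ to conclude. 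Without an argument of this kind your proof does not close.
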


\begin{proof}
Suppose $\overrightarrow{M}$ contains a $3_{2,2,2}$-vertex $v$.
     Let the three $2$-chains incident to $v$ be 
    $vu_{11}u_{12}v_1$, $vu_{21}u_{22}v_2$, and $vu_{31}u_{32}v_3$. 
    Notice that it is possible to apply Lemma~\ref{lem 3-2} on $v$ in three ways depending on whether $v_1$, $v_2$ or $v_3$ play the role of $z$ (as in Lemma~\ref{lem 3-2}).  
    \medskip

\noindent \textbf{Case 1:} It is not possible to apply  Lemma~\ref{lem 3-2}$(i)$ on $v$. Hence this case does not arise. ~$\blacklozenge$

\medskip  

\noindent \textbf{Case 2:} Without loss of generality, suppose that Lemma~\ref{lem 3-2}$(ii)$ gets implemented with $v_1$  playing the role of $z$ (as in Lemma~\ref{lem 3-2}). This implies $v_2 = v_3$. 
Let $\overrightarrow{M}'$
be the oriented graph obtained by identifying the vertices $v_3$  and $v_2$ (call this vertex as $w$), 
and deleting the vertices $u_{21}, u_{22}, u_{31}, u_{32}$ from $\overrightarrow{M}$. 
If $\overrightarrow{M}'$ is pushably $\overrightarrow{C}_3$-colorable, then we can extend the coloring to a 
pushable $\overrightarrow{C}_3$-coloring of $\overrightarrow{M}$ due to Observation~\ref{obs path analysis}. That means, 
$\overrightarrow{M}'$ contains a pushably $3$-critical graph $\overrightarrow{M}''$, and due to our assumptions, $\rho(\overrightarrow{M}'') \leq 8$ 
(in case if it is $\overrightarrow{C}_{-4}$). 
Let $Y = (V(\overrightarrow{M}'') \setminus \{w\}) \cup \{u_{21}, u_{22}\}$. 
Consider the proper induced subgraph 
$\overrightarrow{H} = \overrightarrow{M}[Y]$ of $\overrightarrow{M}$. 
Note that, $\rho(\overrightarrow{H}) \leq 8 + (15 \times 2) - (13 \times 3) = -1$. This is a contradiction to Lemma~\ref{lem gap-lemma}. Therefore, it is not possible to implement the point $(ii)$ of Lemma~\ref{lem 3-2} at all on $v$. ~$\blacklozenge$

\medskip

  \noindent \textbf{Case 3:} Suppose that while applying Lemma~\ref{lem 3-2} on $v$, in two instances Lemma~\ref{lem 3-2}$(iv)$ gets implemented. 
Without loss of generality, suppose that Lemma~\ref{lem 3-2}$(iv)$ gets implemented with $v_1$ and $v_2$ playing the role of $z$ (as in Lemma~\ref{lem 3-2}) respectively. 
That means, $\overrightarrow{M}$ contains two subgraphs 
    $\overrightarrow{F}_1 = \overrightarrow{E}_i - v_1^*$ and 
    $\overrightarrow{F}_2 = \overrightarrow{E}_j - v_2^*$
    for some $i, j \in \{1,2,3\}$ and for some $v_1^* \in V(\overrightarrow{E}_i)$, $v_2^* \in V(\overrightarrow{E}_j)$. 
    Moreover, $v_2, v_3 \in V(\overrightarrow{F}_1)$, $v_1, v_3 \in V(\overrightarrow{F}_2)$ and $v, u_{31}, u_{32}, u_{11}, u_{12}, u_{21}, u_{22} \not\in V(\overrightarrow{F}_1) \cup V(\overrightarrow{F}_2)$. Observe that, the vertices of 
    $\overrightarrow{F}_1$, along with the vertices $u_{32}$ and $u_{22}$ identified will give us a copy of $\overrightarrow{E}_i$. Similarly, 
    the vertices of 
    $\overrightarrow{F}_2$, along with the vertices $u_{32}$ and $u_{12}$ identified will give us a copy of $\overrightarrow{E}_j$. 
    Let $\overrightarrow{M}'$ be the subgraph of $\overrightarrow{M}$ induced by 
    $V(\overrightarrow{F}_1) \cup V(\overrightarrow{F}_2) \cup \{v, u_{31}, u_{32},  u_{11}, u_{12}, u_{21}, u_{22}\}$.

    To calculate the potential of $\overrightarrow{M}'$, 
    on the one hand, if we use the value of the potentials of $\overrightarrow{E}_i$ and $\overrightarrow{E}_j$, 
    then we will have counted the potential of two extra vertices (which are not in $\overrightarrow{M}'$). 
    However, we can easily cancel the effect of overcounting two vertices by not counting the vertices $u_{12}$ and $u_{22}$ anymore. 
    Observe that, the arc connecting $u_{32}$ and $v_3$
    is counted two times - once  in $\overrightarrow{E}_i$ and 
    once in $\overrightarrow{E}_j$. To neutralize the effect of this double count, we must add a factor of $13$ in our calculation. 
    Furthermore, we have also double counted the potential of the graph induced by $V(\overrightarrow{F}_1) \cap V(\overrightarrow{F}_2)$, and thus we must also subtract a factor of $\rho(\overrightarrow{M}[V(\overrightarrow{F}_1) \cap V(\overrightarrow{F}_2)])$ in our calculation. That leaves us with 
    calculating potential for $5$ more vertices $v, u_{31}, u_{32}, u_{11}, u_{21}$ and $6$ more arcs, that is, the orientations of the edges 
    $u_{31}u_{32}$, $u_{31}v$, $vu_{11}, vu_{21}, u_{11}u_{12}, u_{21}u_{22}$. 
    Therefore, 
    \begin{align*}
    \rho(\overrightarrow{M}') &\leq \rho(\overrightarrow{E}_i) + \rho(\overrightarrow{E}_j) - \rho(\overrightarrow{M}[V(\overrightarrow{F}_1) \cap V(\overrightarrow{F}_2)]) + (5 \times 15) - (6 \times 13) + 13\\
    &\leq 0 + 0 - \rho(\overrightarrow{M}'[V(\overrightarrow{F}_1) \cap V(\overrightarrow{F}_2)]) + (-3) + 13 \\
    &= 10 - \rho(\overrightarrow{M}[V(\overrightarrow{F}_1) \cap V(\overrightarrow{F}_2)]).    
    \end{align*}
 
    If $\overrightarrow{M}'[V(\overrightarrow{F}_1) \cap V(\overrightarrow{F}_2)]$  has at most one cycle (a cycle must have at least $6$ vertices since the girth of $\overrightarrow{E}_i$ 
    and $\overrightarrow{E}_j$ is $6$), or is a disconnected graph, then 
    $\rho(\overrightarrow{M}'[V(\overrightarrow{F}_1) \cap V(\overrightarrow{F}_2)]) \geq 12$. This implies
     $\rho(\overrightarrow{M}') \leq -2$, a contradiction to Lemma~\ref{lem gap-lemma}.

      Hence, $\overrightarrow{M}'[V(\overrightarrow{F}_1) \cap V(\overrightarrow{F}_2)]$  is a connected graph having at least two cycles. Then it must be the same as $\overrightarrow{E}_i - v_1^*$
      where $v_1^*$ is a $2$-vertex of $\overrightarrow{E}_i$. In that case,      
    $\rho(\overrightarrow{M}'[V(\overrightarrow{F}_1) \cap V(\overrightarrow{F}_2)]) \geq 12$. This implies
     $\rho(\overrightarrow{M}') \leq -2$, a contradiction to Lemma~\ref{lem gap-lemma}. 

     Therefore,  it is not possible 
 that 
 while applying Lemma~\ref{lem 3-2} on $v$, in two instances Lemma~\ref{lem 3-2}$(iv)$ gets implemented.~$\blacklozenge$

\medskip

  \noindent \textbf{Case 4:} Suppose that while applying Lemma~\ref{lem 3-2} on $v$, in one of the instances Lemma~\ref{lem 3-2}$(iv)$ gets implemented. 
  Without loss of generality, suppose that Lemma~\ref{lem 3-2}$(iv)$ gets implemented with $v_1$  playing the role of $z$ (as in Lemma~\ref{lem 3-2}). 
  That means, $\overrightarrow{M}$ contains a subgraph 
    $\overrightarrow{F} = \overrightarrow{E}_i - v^*$ 
    for some $i \in \{1,2,3\}$ and for some $v^* \in V(\overrightarrow{E}_i)$. 
    Moreover, $v_2, v_3 \in V(\overrightarrow{F})$  and 
    $v_1, u_{11}, u_{12}, v, u_{21}, u_{22}, u_{31}, u_{32} \not\in V(\overrightarrow{F})$. 
    Due to the previous cases, the only option for us is to implement Lemma~\ref{lem 3-2}$(iii)$ on $v$ when the role of $z$ (as in Lemma~\ref{lem 3-2}) is played by $v_2$ and $v_3$, respectively.  In this case, $v_1$ must be connected to $v_2$  
  (resp., $v_3$)  by an (underlying)  $2$-path of the form  $v_1w_2v_2$
  (resp., $v_1w_3v_3$). Notice that, the vertices $w_2, w_3$ may or may not belong to $\overrightarrow{F}$. Let $\overrightarrow{M}'$ be the graph induced by $V(\overrightarrow{F}) \cup \{v_1, u_{11}, u_{12}, v, u_{21}, u_{22}, u_{31}, u_{32}, w_2, w_3\}$. Observe that, $\rho(\overrightarrow{M}') \leq -4$, a contradiction. 
  $\blacklozenge$

\medskip

  \noindent \textbf{Case 5:} By the above cases, we are forced to implement Lemma~\ref{lem 3-2}$(iii)$ for all three instances. This will force $\overrightarrow{E}_2$ as a subgraph of 
  $\overrightarrow{M}$, a contradiction.  $\blacklozenge$
\end{proof}

Applying Lemmas~\ref{lem 3-3} and \ref{lem 3-2}, we can reduce some additional configurations as given in the following lemma. 

\begin{figure}[]
    \centering

    \begin{tabularx}{\textwidth}{X X X}
				
\scalebox{.7}{
\begin{tikzpicture}
  
  \node[fill=black!20,circle, minimum size=.75cm,draw] (a1) at (0,0) {$3^6$};
  \node[fill=black!20,circle, minimum size=.75cm,draw] (a3) at (2,0) {$3^{\geq 1}$};
  
  \node at (1,-1) {\textbf{P1}};
  
  \draw[thick] (a1)-- (a3);

  \node[fill=black!20,circle, minimum size=.75cm,draw] (b1) at (4,0) {$3^5$};
  \node[circle, minimum size=.05cm,draw] (b2) at (5.25,0) {};
  \node[fill=black!20,circle, minimum size=.75cm,draw] (b3) at (6.5,0) {$3^5$};
  \node[circle, minimum size=.05cm,draw] (b4) at (7.75,0) {};
  \node[fill=black!20,circle, minimum size=.75cm,draw] (b5) at (9,0) {$3^5$};

  \node at (6.5,-1) {\textbf{P2}};

  \draw[thick] (b1)-- (b2)-- (b3)-- (b4)-- (b5); 
    
\end{tikzpicture}
}

&
\scalebox{.7}{
\begin{tikzpicture}
\node at (0,0) {};
  \node[fill=black!20,circle,minimum size=.75cm,draw] (a1) at (4,0) {$3^6$};
  \node[fill=black!20,circle,minimum size=.75cm,draw] (a2) at (6,0) {$3^0$};
  \node[fill=black!20,circle,minimum size=.75cm,draw] (a3) at (8,0) {$3^6$};

  \node at (6,-1) {\textbf{P3}};

  \draw[thick] (a1)-- (a2)-- (a3);  
 
  \node[fill=black!20,circle,minimum size=.75cm,draw] (b1) at (10,0) {$3^5$};
  \node[fill=black!20,circle,minimum size=.75cm,draw] (b2) at (12,0) {$3^2$};
  \node[fill=white,circle,minimum size=.05cm,draw] (b3) at (13.25,0) {};
  \node[fill=black!20,circle,minimum size=.75cm,draw] (b4) at (14.5,0) {$3^6$};
 
  \node at (12,-1) {\textbf{P4}};
 
  \draw[thick] (b1)-- (b2)-- (b3)-- (b4); 
    
\end{tikzpicture}
}
\\
\scalebox{.7}{
\begin{tikzpicture}
   
  \node[fill=black!20,circle,minimum size=.75cm,draw] (a1) at (0,0) {$3^5$};
  \node[fill=white,circle,minimum size=.05cm,draw] (a2) at (1.25,0) {};
  \node[fill=black!20,circle,minimum size=.75cm,draw] (a3) at (2.5,0) {$3^4$};
  \node[fill=white,circle,minimum size=.05cm,draw] (a4) at (3.75,0) {};
  \node[fill=black!20,circle,minimum size=.75cm,draw] (a5) at (5,0) {$3^5$};

  \node at (2.5,-1) {\textbf{P5}};

  \draw[thick] (a1)-- (a2)-- (a3)-- (a4)-- (a5); 
    
\end{tikzpicture}
}
&
\scalebox{.7}{
\begin{tikzpicture}
   
    \centering
  \node[fill=black!20,circle,minimum size=.75cm,draw] (a1) at (0,0) {$3^5$};
  \node[fill=white,circle,minimum size=.05cm,draw] (a2) at (1.25,0) {};
  \node[fill=black!20,circle,minimum size=.75cm,draw] (a3) at (2.5,0) {$3^3$};
  \node[fill=white,circle,minimum size=.05cm,draw] (a4) at (3.75,0) {};
  \node[fill=black!20,circle,minimum size=.75cm,draw] (a5) at (5,0) {$3^5$};
  \node[fill=white,circle,minimum size=.05cm,draw] (a6) at (2.5,1) {};
  \node[fill=black!20,circle,minimum size=.75cm,draw] (a7) at (2.5,2) {$3^5$};

  \node at (2.5,-1) {\textbf{P6}};

  \draw[thick] (a1)-- (a2)-- (a3)-- (a4)-- (a5); 
  \draw[thick] (a3)-- (a6)-- (a7);
    
\end{tikzpicture}

}

&
\scalebox{.7}{
\begin{tikzpicture}
  \node[fill=black!20,circle,draw] (a1) at (-.5,0) {$3^5$};
  \node[fill=black!20,circle,draw] (a2) at (1.5,0) {$3^1$};
  \node[fill=black!20,circle,draw] (a3) at (3.5,0) {$3^5$};
  \node[fill=white,circle,draw] (a4) at (1.5,1) {};
  \node[fill=black!20,circle,draw] (a5) at (1.5,2) {$3^6$};
 \node at (-1,0) {};
  
  \node at (1.5,-1) {\textbf{P7}};

  \draw[thick] (a1)-- (a2)-- (a3); 
  \draw[thick] (a2)-- (a4)-- (a5); 
    
\end{tikzpicture}
}

\\
\scalebox{.7}{
\begin{tikzpicture}
  \node[fill=black!20,circle,draw] (a1) at (0,0) {$3^5$};
  \node[fill=black!20,circle,draw] (a2) at (2,0) {$3^4$};
  \node[fill=white,circle,draw] (a3) at (3.25,0) {};
  \node[fill=black!20, circle, draw] (a4) at (4.5,0)  {\scriptsize$3^{\geq 5}$}; 
  
  \node at (2,-1) {\textbf{P8}};
  \node at (-1,0) {};

  \draw[thick] (a1) -- (a2)-- (a3)-- (a4); 
    
\end{tikzpicture}
}
&
\scalebox{.7}{
\begin{tikzpicture}
  \node[fill=black!20,circle,draw] (a1) at (0,0) {$3^6$};
  \node[fill=white,circle,draw] (a2) at (1.25,0) {};
  \node[fill=black!20,circle,draw] (a3) at (2.5,0) {$3^{\geq 3}$};
  
  \node at (1.25,-1) {\textbf{P9}};
 \node at (-1.5,0) {};
  \draw[thick] (a1)-- (a2)-- (a3); 
    
\end{tikzpicture}

}
&

\scalebox{.7}{
\begin{tikzpicture}
\centering
  \node[fill=black!20,circle,draw] (a1) at (0,0) {$3^6$};
  \node[fill=white,circle,draw] (a2) at (1.25,0) {};
  \node[fill=black!20,circle,draw] (a3) at (2.5,0) {$3^2$};
  \node[fill=white,circle,draw] (a4) at (3.75,0) {};
  \node[fill=black!20,circle,draw] (a5) at (5,0) {$3^6$};
 
  \node at (2.5,-1) {\textbf{P10}};

  \draw[thick] (a1)-- (a2)-- (a3)-- (a4)-- (a5); 
    
\end{tikzpicture}
}

\end{tabularx}

\caption{The list of reducible configurations (using potential) for Lemma~\ref{lem reducible-potential}.}\label{fig:Pi}
\end{figure}
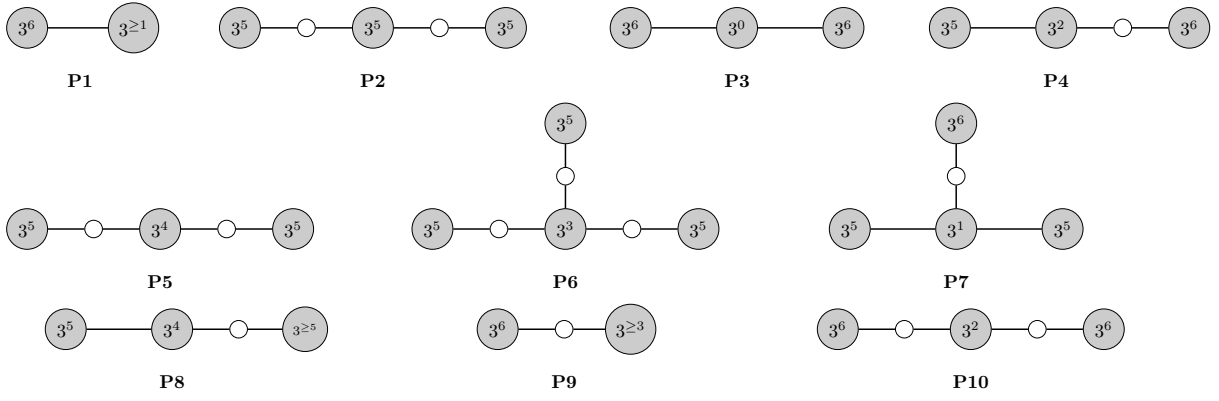

\begin{lemma}\label{lem reducible-potential}
    The configurations listed below are reducible. See Fig.~\ref{fig:Pi} for a pictorial reference. 
\begin{enumerate}[(P1)]
  \item A $3^6$-vertex adjacent to a 
    $3^{\geq 1}$-vertex.

    \item A $3^5$-vertex $1$-chain-adjacent to two 
    $3^5$-vertices.   
   
 \item A $3^0$-vertex adjacent to two  $3^6$-vertices.

\item A $3^2$-vertex adjacent to a $3^5$-vertex and $1$-chain-adjacent to a $3^{6}$-vertex.

\item A $3^4$-vertex $1$-chain-adjacent to two $3^5$-vertices.

\item A $3^{3}$-vertex $1$-chain-adjacent to three 
$3^5$-vertices.

\item  A $3^{1}$-vertex adjacent to two $3^{5}$-vertices and 
$1$-chain-adjacent to a $3^6$-vertex.

\item  A $3^4$-vertex adjacent to a $3^5$-vertex and $1$-chain-adjacent to a $3^{\geq 5}$-vertex.

 \item A $3^6$-vertex $1$-chain-adjacent to a 
    $3^{\geq 3}$-vertex. 

\item A $3^2$-vertex $1$-chain-adjacent to two  $3^6$-vertices. 
\end{enumerate}
\end{lemma}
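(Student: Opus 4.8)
The plan is to prove the stronger statement that $\overrightarrow{M}$ contains none of \textbf{P1}--\textbf{P10}, arguing by contradiction exactly as in the proofs of Lemmas~\ref{lm 5 cycle reduction}, \ref{lm 3_221-1-3>4} and \ref{lem 3_222}: fix a configuration and suppose $\overrightarrow{M}$ contains it. Each configuration is built around one or more $3$-vertices, and for a suitably chosen such vertex $v$ --- the central vertex, or one of the distinguished heavy $3^5$/$3^6$-vertices, depending on which of the two hypotheses its chain environment realises --- I would apply Lemma~\ref{lem 3-3} or Lemma~\ref{lem 3-2}. Lemma~\ref{lem 3-3} splits into its two alternatives and Lemma~\ref{lem 3-2} into its four; for the configurations carrying several heavy vertices (\textbf{P2}, \textbf{P5}, \textbf{P6}, \textbf{P7}, \textbf{P10}) one applies the relevant lemma at two or three of them and intersects the resulting conclusions. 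The task then is to eliminate every branch.

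For the branches forcing two chain-neighbours of $v$ to be adjacent (Lemma~\ref{lem 3-3}(i), Lemma~\ref{lem 3-2}(i)) or forcing a short even or odd cycle through $v$ (Lemma~\ref{lem 3-3}(ii), Lemma~\ref{lem 3-2}(ii)--(iii)), the forced edge or cycle together with the prescribed chain structure exhibits an explicit subgraph $\overrightarrow{H}\subseteq\overrightarrow{M}$. Counting $|V(\overrightarrow{H})|$ and $|A(\overrightarrow{H})|$ and using $\rho(\overrightarrow{H})=15|V(\overrightarrow{H})|-13|A(\overrightarrow{H})|$, the many $2$-vertices stuffed into the chains keep $\rho(\overrightarrow{H})$ low, so Lemma~\ref{lem gap-lemma} forces $\overrightarrow{H}=\overrightarrow{M}$, or $\overrightarrow{H}\equiv_p\overrightarrow{C}_3$, or $\overrightarrow{M}=P_4(\overrightarrow{H})$. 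The last two are impossible since $\overrightarrow{H}$ carries $2$-vertices and is too large, while $\overrightarrow{H}=\overrightarrow{M}$ makes $\overrightarrow{M}$ completely explicit, whereupon either a direct pushable $3$-colouring built with Observation~\ref{obs path analysis} contradicts pushable $3$-criticality, or a boundary vertex of $\overrightarrow{H}$ still has an unaccounted neighbour, supplying one more arc and dropping $\rho$ below $-2$. Throughout, Lemma~\ref{lm 5 cycle reduction} kills any forced $5$-cycle of the pertinent shape, Lemma~\ref{lem 3_222} removes $3_{2,2,2}$-vertices, and the reducible configurations \textbf{C1}--\textbf{C16} of Lemma~\ref{lem reducible-critical} --- in particular \textbf{C1}, which caps chain lengths, and \textbf{C4}, which forbids adjacent $3^5$-vertices --- keep the number of sub-cases finite.

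The last branch, Lemma~\ref{lem 3-2}(iv), in which $\overrightarrow{M}$ contains a pushably isomorphic copy of $\overrightarrow{E}_i-v^*$ glued to the configuration, is the main obstacle, and I would handle it precisely as in the proofs of Lemmas~\ref{lm 3_221-1-3>4} and \ref{lem 3_222}. One bounds $\rho$ of the union of the configuration with the $\overrightarrow{E}_i$-copy using $\rho(\overrightarrow{E}_i)=0$, corrects for the doubly counted vertex and arc(s) of the copy, and subtracts the doubly counted potential of the intersection of the two copies (or of the copy with the rest of the configuration), which by Lemma~\ref{lem gap-lemma} is $0$ if empty and at least $6$ otherwise; then, using that every $\overrightarrow{E}_i$ has girth $6$ and contains no two independent arcs joining degree-$3$ vertices, one pins the intersection down (empty, forest, tree, path, or unicyclic) and either forces $\rho<-2$ against Lemma~\ref{lem gap-lemma} or forces the union to be $\overrightarrow{M}$, in which case Lemma~\ref{lem Ei+3-2} --- whose proof already contains the required exhaustive machine check of the finitely many reconstructions of $\overrightarrow{M}$ --- finishes it. The genuine difficulty I anticipate is exactly this overlap bookkeeping between $\overrightarrow{E}_i$-copies and the surrounding chains, together with the combinatorial blow-up in the multi-vertex configurations \textbf{P2}, \textbf{P5}, \textbf{P6}, \textbf{P7}, \textbf{P10}, where the branchings of two or three independent applications of Lemmas~\ref{lem 3-3}/\ref{lem 3-2} must be combined and pruned.
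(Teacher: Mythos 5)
Your overall architecture matches the paper's: apply Lemma~\ref{lem 3-3} or Lemma~\ref{lem 3-2} at the heavy vertices of each configuration, branch on their alternatives, and eliminate branches using the Gap Lemma, Lemma~\ref{lem Ei+3-2}, Lemma~\ref{lm 5 cycle reduction}, Lemma~\ref{lm 3_221-1-3>4} and the configurations of Lemma~\ref{lem reducible-critical}. Your treatment of the Lemma~\ref{lem 3-2}(iv) branch (potential bookkeeping for the glued $\overrightarrow{E}_i-v^*$ copy, finished by Lemma~\ref{lem Ei+3-2}) is also in the spirit of what the paper does for \textbf{P6}, \textbf{P9} and \textbf{P10}.

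However, there is a genuine gap in how you propose to kill the main branch, namely the even $6$-cycle forced by Lemma~\ref{lem 3-3} (and the cycles from Lemma~\ref{lem 3-2}(ii)--(iii)). You claim that ``the many $2$-vertices stuffed into the chains keep $\rho(\overrightarrow{H})$ low, so Lemma~\ref{lem gap-lemma} forces $\overrightarrow{H}=\overrightarrow{M}$,'' and then finish by colouring or by finding an extra arc. This is false for these configurations: a pendant $k$-chain ending at a boundary $3^+$-vertex adds $k$ vertices and $k$ arcs, hence \emph{raises} the potential by $2k$, and a $6$-cycle alone already has $\rho=12$. For instance, in \textbf{P2} the forced structure (a $6$-cycle through three $3^5$-vertices with three outgoing $3$-chains) has $\rho=36$, far above the Gap Lemma thresholds, so nothing forces $\overrightarrow{H}=\overrightarrow{M}$ and no contradiction is obtained; and since $\overrightarrow{H}$ may be a proper subgraph with arbitrarily coloured boundary, you cannot simply exhibit a pushable $3$-colouring either --- you would have to prove the local structure is reducible, which requires the parity/pushing case analysis. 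The paper eliminates exactly these branches by recognising the forced $6$-cycles, with the degree information coming from the statement of each \textbf{P}-configuration, as the already-reduced configurations \textbf{C14}, \textbf{C15}, \textbf{C16} of Lemma~\ref{lem reducible-critical}; your plan never invokes these, only \textbf{C1} and \textbf{C4}. In addition, \textbf{P1}, \textbf{P3}, \textbf{P7} and \textbf{P10} are not settled by potential at all in the paper: \textbf{P1} ends in a degree contradiction, \textbf{P7} in the geometric impossibility of two overlapping forced $6$-cycles, and \textbf{P3}, \textbf{P10} in the observation that the forced identifications make some vertex a cut vertex of $\overrightarrow{M}$ (impossible for a pushably $3$-critical graph) or make $\overrightarrow{M}$ push equivalent to $\overrightarrow{E}_1$ or $\overrightarrow{E}_3$ (excluded by assumption); none of these exits is anticipated by your generic potential-counting scheme, so as written the plan does not go through for roughly half of the configurations.
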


\begin{proof}
\noindent   \textbf{(P1)} Let $v$ be a $3^6$-vertex adjacent to the $3^{\geq 1}$-vertex $u$, and $3$-chain-adjacent to $v_1$ and $v_2$. Applying Lemma~\ref{lem 3-3} we can infer that both $v_1$ and $v_2$ are adjacent to $u$. Thus, we get $deg(u) = 3$, which means $u$ must be a $3^0$ vertex, which is impossible. Thus, \textbf{P1} is reducible.~$\blacklozenge$

\medskip

  \noindent   \textbf{(P2)} Let $v_2$ be a $3^5$-vertex $1$-chain-adjacent to two other $3^5$-vertices $v_1$ and $v_3$. 
  Since $v_2$ is a $3^5$-vertex and $v_2$ is incident to a $3$-chain, 
  by applying Lemma~\ref{lem 3-3} we are forced to have an oriented $6$-cycle $\overrightarrow{C}$ of the form $v_1u_1v_2u_2v_3u_3v_1$ where $v_1, v_2$ and $v_3$ are $3^5$-vertices and $u_1, u_2$ and $u_3$ are $2$-vertices. 
  Moreover, note that, each $v_i$ must be $3$-chain 
  adjacent to $v'_i$ (say), and the vertices of the $3$-chains (except $v_i$) are outside the cycle $\overrightarrow{C}$. This is exactly the configuration \textbf{C16} which is reducible due to Lemma~\ref{lem reducible-critical}. Thus, \textbf{P2} is reducible.~$\blacklozenge$

\medskip

\noindent   \textbf{(P3)} Let $v$ be a $3^0$-vertex adjacent to 
  two $3^6$ vertices $v_1$ and $v_2$ (say), let the other neighbor of $v$, a  
  $3^+$-vertex be $v'$. 
  Let $v_i$ be $3$-chain-adjacent to 
  $v_{i1}$ and $v_{i2}$ for all $i \in \{1,2\}$. 
  Thus, applying Lemma~\ref{lem 3-3} on $v_1$ (resp., $v_2$) we can infer that the vertex $v_{1j}$ (resp., $v_{2j}$) must be the same as either $v'$ 
  or $v_2$ (resp., $v_1$), 
  where $j \in \{1,2\}$. 
  However, irrespective of which vertex $v_{ij}$'s get identified with, 
  the vertex $v'$ will be a cut vertex inside $\overrightarrow{M}$, or $\overrightarrow{M}$ is push equivalent to $\overrightarrow{E}_1$ (this is possible when $v'$ is also a $3^6$-vertex). 
  However, since 
  $\overrightarrow{M}$ is pushably $3$-critical, it cannot contain a cut vertex. Furthermore, by our assumption, $\overrightarrow{M}$ cannot be push equivalent to $\overrightarrow{E}_1$. This is a contradiction. 
  Thus, \textbf{P3} is reducible.~$\blacklozenge$

   \medskip

  \noindent   \textbf{(P4)} Let $v$ be a $3^2$-vertex  
  adjacent to a $3^5$-vertex $v_1$,
  $1$-chain-adjacent to a $3^6$-vertex $v_2$, 
and $1$-chain-adjacent to another $3^+$-vertex $v'$. 
  Applying Lemma~\ref{lem 3-3} on $v_1$ and $v_2$ we can infer the existence of a directed $6$-cycle of the form 
  $\overrightarrow{C} = v_1vu_1v_2u_2u_3v_1$, where $u_1, u_2$ and $u_3$ are $2$-vertices. This is exactly the configuration \textbf{C15} which is reducible due to Lemma~\ref{lem reducible-critical}. Thus, \textbf{P4} is reducible.~$\blacklozenge$

\medskip

  \noindent   \textbf{(P5)}  Let $v$ be a $3^4$-vertex  
  $1$-chain-adjacent to two $3^5$-vertices $v_1, v_2$ and
  $2$-chain-adjacent to a vertex $v_3$. Notice that $v$ must be $3_{2,1,1}$-vertex. We know that $v_1$ and $v_2$ are not adjacent by Lemma~\ref{lem reducible-critical} and are not $3_{2,2,1}$ by Lemma~\ref{lm 3_221-1-3>4}. Hence, the vertices $v_1$ and $v_2$ must be $3_{3,1,1}$. Applying Lemma~\ref{lem 3-3} on $v_1$ and $v_2$ we can 
  say that we must have a directed $6$-cycle involving the vertices $v, v_1$ and $v_2$ of the form $\overrightarrow{C} = v_1u_1v_2u_2vu_3v_1$, where $u_1, u_2$ and $u_3$ are $2$-vertices. That is, the configuration reduces to \textbf{C16}, which is reducible due to Lemma~\ref{lem reducible-critical}. Thus, \textbf{P5} is reducible.~$\blacklozenge$

  \medskip

  \noindent   \textbf{(P6)}  Let $v$ be a $3^3$-vertex  
  $1$-chain-adjacent to three $3^5$-vertices $v_1, v_2,$ and $v_3$. Notice that $v$ must be $3_{1,1,1}$-vertex. Suppose $v_1$ is a $3_{2,2,1}$-vertex. If we apply Lemma~\ref{lem 3-2} on $v_1$, then observe that it is not possible to implement points $(i), (ii)$, or $(iii)$. Thus, point $(iv)$ will get implemented twice. This will imply a subgraph with potential $-8$ or less, a contradiction. Hence the vertices $v_1, v_2$ and $v_3$ must be $3_{3,1,1}$. Applying Lemma~\ref{lem 3-3} on $v_1$ and $v_2$ we must have a directed $6$-cycle involving the vertices $v, v_1$ and $v_2$ of the form $\overrightarrow{C} = v_1u_1v_2u_2vu_3v_1$, where $u_1, u_2$ and $u_3$ are $2$-vertices. That is, the configuration reduces to \textbf{C16}, which is reducible due to Lemma~\ref{lem reducible-critical}. Thus, \textbf{P6} is reducible.~$\blacklozenge$

   \medskip

 \noindent   \textbf{(P7)} Let $v$ be a $3^1$-vertex  
  adjacent to two $3^5$-vertices $v_1$ and $v_2$, and
  $1$-chain-adjacent to a $3^6$-vertex $v_3$.
  Notice that $v_1$ and $v_2$ must be $3_{3,2,0}$-vertices and $v_3$ must be a 
  $3_{3,2,1}$-vertex. 
  Thus, applying Lemma~\ref{lem 3-3} on $v_1$ and $v_2$ we can 
  say that we must have a $6$-cycle involving the vertices $v, v_1$ and $v_3$ along with three other $2$-vertices and a $6$-cycle involving the vertices $v, v_2$ and $v_3$ along with three other $2$-vertices. Observe that it is impossible to have such a configuration.  
  Thus, \textbf{P7} is reducible.~$\blacklozenge$

 \medskip

    \noindent   \textbf{(P8)} Let $v$ be a $3^4$-vertex adjacent to a $3^5$-vertex $v_1$, $1$-chain-adjacent to a $3^{\geq5}$-vertex $v_2$, and $3$-chain adjacent to a vertex $v_3$. Notice that $v$ must be a $3_{3,1,0}$-vertex, $v_1$ must be a $3_{3,2,0}$-vertex, and $v_2$ must be a $3_{3,1,1}$-vertex or a $3_{3,2,1}$-vertex. Let $v_1$ be $3$-chain-adjacent to $v_4$ and $v_2$ be $3$-chain-adjacent to $v_5$. Applying Lemma~\ref{lem 3-3} on $v_1$ we must have a directed $6$-cycle involving the vertices $v, v_1$ and $v_2$ of the form $\overrightarrow{C} =vv_1u_1u_2v_2u_3v$, where $u_1, u_2$ and $u_3$ are $2$-vertices. This is exactly the configuration \textbf{C15} which is reducible due to Lemma~\ref{lem reducible-critical}. Thus, \textbf{P8} is reducible.~$\blacklozenge$
    
    \medskip

\noindent   \textbf{(P9)} It is enough to show that a $3^6$-vertex $v$ that is $1$-chain-adjacent to a $3^3$-vertex $v_3$ is reducible. Firstly, $v$ must be $3_{3,2,1}$-vertex. 
Let $v$ 
be incident to the $3$-chain $vu_{11}u_{12}u_{13}v_1$, 
the $2$-chain $vu_{21}u_{22}v_2$, and the $1$-chain $vu_{31}v_3$. Applying Lemma~\ref{lem 3-3} on $v$ forces a directed $6$-cycle of the form $vu_{21}u_{22}v_2v_3u_{31}v$. Next, we can apply Lemma~\ref{lem 3-2} on $v$ and that will force one of the following four scenarios:
\begin{enumerate}[(i)]
    \item The vertices $u_{12}$ and $v_3$ are adjacent. This forces $u_{13} = v_3$ which is impossible since they have different degrees. 

    \item The vertex $u_{12}$ and $v_3$ have a common neighbor. This forces $v_1 = v_3$. Notice that the subgraph $\overrightarrow{M}'$ (say) induced by $\{v, v_1, v_2, u_{11}, u_{12}, u_{13}, u_{21}, u_{22}, u_{31}\}$ 
    has potential $\rho(\overrightarrow{M'}) = 5$. Hence, Lemma~\ref{lem gap-lemma} implies that $\overrightarrow{M}' = \overrightarrow{M}$. However, $v_2$ is a $3^+$ vertex in $\overrightarrow{M}$ but a $2$-vertex in $\overrightarrow{M}'$, a contradiction.

     \item The vertex $u_{12}$ and $v_3$ are connected by a path with three internal vertices $u_{13}, v_1$ and $w$ (say).  Notice that, in this case, even if $w$ is a $2$-vertex, $v_3$ 
     must be a $3^{\leq 2}$-vertex, a contradiction. 

     \item The fourth case forces the situation described in 
     Lemma~\ref{lem Ei+3-2} and thus is impossible. 
\end{enumerate}
Thus, \textbf{P9} is reducible.~$\blacklozenge$

   \medskip

  \noindent   \textbf{(P10)} Let $v$ be a $3^2$-vertex adjacent to $v'$ and 
  $1$-chain-adjacent to two $3^6$-vertices $v_1$ and $v_2$ with the internal vertices being $u_1$ and $u_2$ respectively.  
  Let the 
  $3$-chain-adjacent (resp., $2$-chain-adjacent) vertices of $v_1$ and $v_2$ be $v'_1$ and $v'_2$ (resp., $v''_1$ and $v''_2$), respectively. 
  Moreover, let the $3$-chains incident to $v_1$ (resp., $v_2$) 
  be $v_1u_{11}u_{12}u_{13}v'_1$ (resp., $v_2u_{21}u_{22}u_{23}v'_2$). 
  Applying Lemma~\ref{lem 3-3} on $v_1$ and $v_2$ we 
 get that $v''_1=v''_2=v'$. 
 Note that when we apply Lemma~\ref{lem 3-2} on $v_1$ (resp., $v_2$), 
 we know that the structures implied by 
point $(iv)$ of the lemma does not occur due to Lemma~\ref{lem Ei+3-2}. 
 
 Therefore, since the degree of $v$ is exactly $3$, 
applying  Lemma~\ref{lem 3-2} on $v_1$ and $v_2$ 
 forces $v'_1=v_2$ and $v'_2=v_1$. In this case, 
 either $v'$ is a cut vertex of $\overrightarrow{M}$, or 
 $\overrightarrow{M}$ is push equivalent to $\overrightarrow{E}_3$,  a contradiction. Thus, \textbf{P10} is reducible.~$\blacklozenge$

\medskip

This completes the proof of the lemma. 
 \end{proof}

\subsection{Discharging}
Let us define an initial charge function for the vertices of $\overrightarrow{M}$ as follows: 
$$ch(x) = 13deg(x)-30, \text{ for all } x \in V(\overrightarrow{M}).$$
Notice that 
$$\sum_{x \in V(\overrightarrow{M})} ch(x)  = \sum_{x \in V(\overrightarrow{M})} \left(13deg(x) - 30 \right) = 26|A(\overrightarrow{M})| - 30|V(\overrightarrow{M})| = - 2 \rho(\overrightarrow{M}) \leq 2.$$

\medskip

\noindent \textbf{Discharging rules:} We redistribute the charge of the vertices of $\overrightarrow{M}$ as per the following rules.

\begin{enumerate}[(R1)]
    \item A $3^+$-vertex $v$ donates a charge of $2$ to each of its chain-incident $2$-vertices $u$.

    \item A $3^+$-vertex $v$ donates a charge of $3$ to each of its adjacent $3^6$-vertices $u$.

    \item A $3^+$-vertex $v$ donates a charge of $1$ to each of its adjacent $3^5$-vertices $u$.

    \item A $3^+$-vertex $v$ donates a charge of $3$ to each of its 
    $1$-chain-adjacent $3^6$-vertices $u$.

    \item A $3^+$-vertex $v$ donates a charge of $1$ to each of its 
    $1$-chain-adjacent $3^5$-vertices $u$ when $v$ itself is not a 
    $3^5$-vertex. 
\end{enumerate}

Let $ch''(x)$ be the updated charge of the vertices of $\overrightarrow{M}$ after performing (R1)-(R5).

\begin{lemma}\label{lem updated charge}
    For each $x \in V(\overrightarrow{M})$, we have 
    $$ch''(x) \geq 
    \begin{cases}
     0 & \text{ if $x$ is a $2$-vertex},\\   
     3 & \text{ if $x$ is a $3^0$-vertex or a $3^1$-vertex},\\
     1 & \text{ if $x$ is a $3^2$-vertex or a 
                        $3^3$-vertex},\\
     0 & \text{ if $x$ is a $3^{\geq 4}$-vertex},\\
     3 & \text{ if $x$ is a $4^{\leq 9}$-vertex},\\     
     2 & \text{ if $x$ is a $4^{10}$-vertex},\\
     5 & \text{ if $x$ is a $5^{+}$-vertex},\\
    \end{cases}
    $$ 
\end{lemma}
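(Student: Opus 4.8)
The plan is to verify the stated lower bound on $ch''(x)$ one vertex at a time, splitting on the degree of $x$ and, for $3$- and $4$-vertices, on the chain-type $k_{t_1,\dots,t_k}$ of $x$; in each case I would compute exactly how much $x$ ships out under (R1)--(R5) and how much it receives, bounding both with the reducible configurations already in hand (Lemmas~\ref{lem reducible-critical}, \ref{lem reducible-potential}, \ref{lem 3-3}, \ref{lem 3-2}, \ref{lm 3_221-1-3>4}, \ref{lem 3_222}). The two extreme cases are quick. A $2$-vertex $u$ sits inside a unique chain and receives $2$ from each of the two $3^+$-endpoints of that chain via (R1) while sending nothing, so $ch''(u) = -4 + 4 = 0$. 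For a $d$-vertex $x$ with $d \ge 5$: since \textbf{C1} forbids chains with more than three internal $2$-vertices, each of the $d$ chains at $x$ costs $x$ at most $6$ (three internal $2$-vertices cost $6$ via (R1); an edge, resp. a $1$-chain, to a $3^6$-vertex costs at most $3$, resp. $5$), so $ch''(x) \ge (13d-30) - 6d = 7d - 30 \ge 5$.

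For a $3$-vertex $x$ we have $ch(x) = 9$ and, writing $x$ as a $3^t$-vertex, (R1) removes exactly $2t$, with $t \le 6$ by \textbf{C2}. When $t \le 4$, $x$ receives nothing, so the point is to cap its outflow under (R2)--(R5): I would use \textbf{P1} to forbid a $3^6$-neighbour joined by an edge, \textbf{C6} and \textbf{P9} to forbid a $3^6$-neighbour joined by a $1$-chain, and \textbf{C4}, \textbf{P5}, \textbf{P8} to keep the number of $3^5$-neighbours of a $3^4$-vertex down to one, giving outflow at most $2t+1 \le 9$; the coarser count for $t \le 3$, which now must allow one $3^6$- and a few $3^5$-neighbours and is controlled by \textbf{P3}, \textbf{P4}, \textbf{P6}, \textbf{P7}, \textbf{P10}, still delivers $ch'' \ge 3$ for $t \in \{0,1\}$ and $ch'' \ge 1$ for $t \in \{2,3\}$. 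When $t \in \{5,6\}$ the vertex runs a deficit after (R1) and I would instead show it receives enough back: by Lemma~\ref{lem 3_222} (and the analysis behind \textbf{P1}) a $3^6$-vertex is $3_{3,3,0}$ or $3_{3,2,1}$, in both cases with a unique edge- or $1$-chain-neighbour that, by \textbf{P1} and \textbf{P9}, is neither $3^5$ nor $3^6$ and hence ships $3$ to $x$ via (R2) or (R4) and receives nothing back, so $ch''(x) = 9 - 12 + 3 = 0$; and a $3^5$-vertex is $3_{3,2,0}$, $3_{3,1,1}$ or $3_{2,2,1}$, where \textbf{P2}, Lemma~\ref{lm 3_221-1-3>4} and \textbf{P9} force it to receive at least $1$ via (R3) or (R5) while giving nothing back, so $ch''(x) \ge 9 - 10 + 1 = 0$.

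The $4$-vertices carry the bulk of the work. Here $ch(x) = 22$, $x$ is a $4^t$-vertex with $t \le 10$ by \textbf{C3}, and (R1) removes $2t$, so the targets $ch''(x) \ge 3$ (for $t \le 9$) and $ch''(x) \ge 2$ (for $t = 10$) translate into bounding the outflow of $x$ under (R2)--(R5) by $19 - 2t$ (resp. by $0$ when $t = 10$). I would go distribution by distribution through $4_{t_1,t_2,t_3,t_4}$, bounding in each case how many endpoints of the edge- and $1$-chains at $x$ can be $3^5$- or $3^6$-vertices: for $t=10$ ($4_{3,3,3,1}$, $4_{3,3,2,2}$) the lone $1$-chain-endpoint, if present, is neither $3^5$ nor $3^6$ by \textbf{C6} and \textbf{C7}, so the outflow is $0$; for $t=9$ ($4_{3,3,3,0}$, $4_{3,3,2,1}$) the lone edge-, resp. $1$-chain-endpoint cannot be $3^6$ by \textbf{C5}, resp. \textbf{C6}, leaving outflow at most $1$; for $t=8$ ($4_{2,2,2,2}$, $4_{3,3,2,0}$, $4_{3,2,2,1}$, $4_{3,3,1,1}$) the unique short chain in the first three distributions contributes at most $3$, while $4_{3,3,1,1}$ is handled by \textbf{C11} together with the companion statement that a $4^{\ge 7}$-vertex is not simultaneously $1$-chain-adjacent to a $3^6$-vertex and a $3^5$-vertex; and for $t \le 7$ the distributions contain enough short chains that the trivial per-chain cap of $3$, combined with \textbf{C9}, \textbf{C11}, \textbf{C12}, \textbf{C13} limiting the $3^6$-vertices reachable by edges or $1$-chains, already yields outflow $\le 19 - 2t$.

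I expect the crux to be exactly this $t=8$, $4_{3,3,1,1}$ sub-case: a $4_{3,3,1,1}$-vertex that is $1$-chain-adjacent to one $3^6$-vertex and one $3^5$-vertex is the unique genuinely tight configuration, landing on $ch'' = 2$, one short of the required $3$, so the whole lemma hinges on this configuration being reducible (provable by the same pushing argument used for \textbf{C11}). The real difficulty is therefore not any single calculation but checking that the catalogue of reducible configurations of Lemmas~\ref{lem reducible-critical} and \ref{lem reducible-potential} is exactly rich enough to eliminate every tight sub-case --- the $4_{3,3,1,1}$ case above, together with $4_{3,3,2,0}$ and $4_{3,2,2,1}$, which land exactly on $ch'' = 3$ --- and that the coarse per-chain bound genuinely suffices everywhere else.
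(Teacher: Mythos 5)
Your overall route is the same as the paper's: a case analysis over the degree and chain-type of $x$, bounding outflow under (R1)--(R5) by the catalogue of reducible configurations, and for $3^5$-/$3^6$-vertices showing a guaranteed inflow via (R2)--(R5). Your treatment of $2$-vertices, $5^+$-vertices (your per-chain cap of $6$ is in fact a cleaner way to phrase the paper's $5^+$ computation), all the $3$-vertex cases, and the $4^t$-cases with $t\neq 8$ matches the paper and is correct.

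The genuine gap is exactly where you place the crux and then stop: the $4_{3,3,1,1}$-vertex that is $1$-chain-adjacent to one $3^6$-vertex and one $3^5$-vertex. As you compute, such a vertex would end with $ch''=22-16-3-1=2$, below the claimed bound of $3$ for $4^{\leq 9}$-vertices, so the lemma for this case stands or falls with that configuration being excluded from $\overrightarrow{M}$. But the ``companion statement'' you invoke (a $4^{\geq 7}$-vertex cannot be simultaneously $1$-chain-adjacent to a $3^6$-vertex and a $3^5$-vertex) is not among the configurations of Lemmas~\ref{lem reducible-critical} and~\ref{lem reducible-potential} -- \textbf{C11} only forbids two $3^6$-vertices, \textbf{C6} and \textbf{C7} require a $4^{\geq 9}$- resp.\ $4^{10}$-vertex, and Lemma~\ref{lm 3_221-1-3>4} only concerns a $3^{\geq 4}$ partner -- so you cannot cite it; you must prove it, and your proposal merely asserts that ``the same pushing argument used for \textbf{C11}'' works. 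That assertion is plausible but not automatic: unlike \textbf{C11}, where both $1$-chain partners are $3_{3,2,1}$-vertices and can each be guaranteed two admissible colors, a $3^5$ partner may be pinned to a single admissible color and can then forbid two colors at $x$ through an odd $1$-chain, so one has to redo the parity/pushing case analysis (tracking which chain parities are toggled by pushing $x$, the $3^6$-partner and the $3^5$-partner) rather than quote \textbf{C11}. For what it is worth, the paper's own write-up of the $4^8$ case leans on \textbf{C11} for the claim that at most one chain-adjacent $3^{\geq 5}$-vertex occurs at distance at most $2$, which \textbf{C11} alone does not give either; so you have correctly located the delicate point, but as submitted your proof of the $4^8$ case (and hence of the lemma) is incomplete until that extra reducibility argument is actually carried out.
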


\begin{proof}
If $x$ is a $2$-vertex, then only (R1) is applied on it. Since $x$ is chain-incident to exactly two $3^+$-vertex, and it receives a charge of $2$ from each of them, we have
$$ch''(x) = ch(x) + (2 \times 2) = (13 \times 2) - 30 + (2 \times 2) = 0.$$

If $x$ is a $3^0$-vertex or a $3^1$-vertex, then 
$x$ can have at most one chain-adjacent 
$3^6$-vertex since \textbf{P1} and \textbf{P3} are reducible configurations by Lemma~\ref{lem reducible-potential}. 
In particular, if $x$ is a $3^1$-vertex 
with one chain-adjacent $3^6$-vertex, then $x$ cannot have two other chain-adjacent $3^5$-vertices 
since \textbf{P7} is a reducible configuration by Lemma~\ref{lem reducible-potential}. Thus, in any case, $x$ donates a maximum of $6$ charge. Thus, 
$$ch''(x) \geq ch(x) - 6  = (13 \times 3) - 30 - 6 = 3.$$

If $x$ is a $3^2$-vertex,
then $x$ can have at most one 
chain-adjacent $3^6$-vertex 
at distance at most $2$.
Moreover, if $x$ has one $1$-chain-adjacent 
$3^6$-vertex, then it is not possible for $x$ to have two chain-adjacent $3^5$-vertices at distance at most $2$. 
The above inferences can be drawn as 
\textbf{P1}, \textbf{P4}, and \textbf{P10} are reducible configurations by Lemma~\ref{lem reducible-potential}. 
Therefore, the maximum charge $x$ donates is
$2\times 2 = 4$ to its chain-incident $2$-vertices, and either, $3$ to its chain-adjacent $3^6$-vertex at distance $2$ along with $1$ to at most one chain-adjacent $3^5$-vertex, or $1 \times 3 = 3$ to three chain-adjacent $3^5$-vertices. In any case, $x$ donates a maximum of $4+3+1 = 8$ charge. Thus, 
$$ch''(x) \geq ch(x) - 8  = (13 \times 3) - 30 - 8 = 1.$$

If $x$ is a $3^3$-vertex,
then $x$ does not have any 
chain-adjacent $3^6$-vertex 
at distance at most $2$.
Moreover,  $x$ 
cannot have three chain-adjacent $3^5$-vertices at distance at most $2$. 
The above inferences can be drawn as 
\textbf{P1}, \textbf{P6}, and \textbf{P9} are reducible configurations by Lemma~\ref{lem reducible-potential}. 
Therefore, the maximum charge $x$ donates is
$2\times 3 = 6$ to its chain-incident $2$-vertices, and $1 \times 2 = 2$ to its two 
chain-adjacent $3^5$-vertices at distance at most $2$. Thus, 
$$ch''(x) \geq ch(x) - 6 - 2  = (13 \times 3) - 30 - 8 = 1.$$

If $x$ is a $3^4$-vertex,
then $x$ does not have any 
chain-adjacent $3^6$-vertex 
at distance at most $2$, and can have at most one chain-adjacent $3^5$-vertex at distance at most $2$ since
\textbf{P1}, \textbf{P5}, \textbf{P8}, and \textbf{P9} are reducible configurations by Lemma~\ref{lem reducible-potential}. 
Therefore, the maximum charge $x$ donates is
$2\times 4 = 8$ to its chain-incident $2$-vertices, and $1$ to its  
chain-adjacent $3^5$-vertex at distance at most $2$. Thus, 
$$ch''(x) \geq ch(x) - 8 - 1  = (13 \times 3) - 30 - 9 = 0.$$

If $x$ is a $3^5$-vertex,
then $x$ does not have any 
chain-adjacent $3^6$-vertex 
at distance at most $2$, and does not have any 
adjacent $3^5$-vertex. In particular, if $x$ is a $3_{3,1,1}$-vertex, then it cannot have two $1$ chain-adjacent $3^5$-vertices and if $x$ is a $3_{2,2,1}$-vertex, then it cannot have any $1$ chain-adjacent $3^5$-vertex. That also means, $x$ must receive a charge of $1$ from some adjacent or $1$-chain-adjacent neighbor. 
These inferences can be drawn since 
\textbf{C4}, \textbf{P1}, \textbf{P2}, and \textbf{P9} are reducible by Lemmas~\ref{lem reducible-critical} and \ref{lem reducible-potential}.
Therefore, the maximum charge $x$ donates is
$2\times 5 = 10$ to its chain-incident $2$-vertices, and receives at least $1$ from its chain-adjacent vertices at distance at most $2$. Thus, 
$$ch''(x) \geq ch(x) - 10 + 1  = (13 \times 3) - 30 - 9 = 0.$$

If $x$ is a $3^6$-vertex,
then $x$ does not have any 
chain-adjacent $3^{\geq 5}$-vertex 
at distance at most $2$. Moreover, as $3_{2,2,2}$ is a reducible configuration, $x$ must have at least one chain-adjacent $3^+$-vertex at distance at most $2$. 
The above inferences can be drawn from Lemma~\ref{lem 3_222}, and as \textbf{P1} and \textbf{P9} are reducible configurations by Lemma~\ref{lem reducible-potential}. 
Therefore, the updated charge of $x$ is
$$ch''(x) \geq ch(x) - (2 \times 6) + 3  = (13 \times 3) - 30 - 12 +3 = 0.$$

If $x$ is a $4^{\leq 3}$-vertex,
then in the worst case scenario $x$ may have
four chain-adjacent $3^{6}$-vertex 
at distance at most $2$. 
Therefore, the updated charge of $x$ is
$$ch''(x) \geq ch(x) - (2 \times 3) - (3 \times 4)  = (13 \times 4) - 30 - 6 - 12 \geq 4.$$

If $x$ is a $4^{4}$-vertex,
then in the worst case scenario $x$ may have
three chain-adjacent $3^{6}$-vertices and one $3^5$ vertex at distance at most $2$
 since
\textbf{C13} is a reducible configuration by Lemma~\ref{lem reducible-critical}.
Therefore, the updated charge of $x$ is
$$ch''(x) \geq ch(x) - (2 \times 4) - (3 \times 3) - 1  = (13 \times 4) - 30 - 8 - 9 - 1 \geq 4.$$

If $x$ is a $4^{5}$-vertex,
then in the worst case scenario $x$ may have
three chain-adjacent $3^{\geq 5}$-vertices at distance at most $2$.
Therefore, the updated charge of $x$ is
$$ch''(x) \geq ch(x) - (2 \times 5) - (3 \times 3)  = (13 \times 4) - 30 - 10 - 9 \geq 3.$$

If $x$ is a $4^{6}$-vertex,
then in the worst case scenario $x$ may have
three chain-adjacent $3^{\geq 5}$-vertices at distance at most $2$ among which not all can be $3^6$-vertices
 since \textbf{C12} is a  reducible configuration by Lemma~\ref{lem reducible-critical}.
Therefore, the updated charge of $x$ is
$$ch''(x) \geq ch(x) - (2 \times 6) - (3 \times 2) - 1  = (13 \times 4) - 30 - 12 - 6 -1 \geq 3.$$

If $x$ is a $4^{7}$-vertex,
then in the worst case scenario $x$ may have
two chain-adjacent $3^{\geq 5}$-vertices at distance at most $2$ among which not all can be $3^6$-vertices
 since
\textbf{C9} and \textbf{C11} are reducible configurations by Lemma~\ref{lem reducible-critical}.
Therefore, the updated charge of $x$ is
$$ch''(x) \geq ch(x) - (2 \times 7) - 3  - 1  = (13 \times 4) - 30 - 14 - 3 -1 \geq 4.$$

If $x$ is a $4^{8}$-vertex,
then in the worst case scenario $x$ may have
one chain-adjacent $3^{\geq 5}$-vertex at distance at most $2$ 
 since \textbf{C11} is a reducible configuration by Lemma~\ref{lem reducible-critical}.
Therefore, the updated charge of $x$ is
$$ch''(x) \geq ch(x) - (2 \times 8) - 3  = (13 \times 4) - 30 - 16 - 3 \geq 3.$$

If $x$ is a $4^{9}$-vertex,
then in the worst case scenario $x$ may have
one chain-adjacent $3^{\geq 5}$-vertex at distance at most $2$, which, in fact, cannot be a $3^6$ vertex,
 since \textbf{C5} and \textbf{C6} are reducible configurations by Lemma~\ref{lem reducible-critical}.
Therefore, the updated charge of $x$ is
$$ch''(x) \geq ch(x) - (2 \times 9) - 1  = (13 \times 4) - 30 - 18 - 1 \geq 3.$$

If $x$ is a $4^{10}$-vertex,
then  $x$ does not have
any chain-adjacent $3^{\geq 5}$-vertex at distance at most $2$
 since \textbf{C7} is a reducible configuration by Lemma~\ref{lem reducible-critical}.
Therefore, the updated charge of $x$ is
$$ch''(x) \geq ch(x) - (2 \times 10)  = (13 \times 4) - 30 - 20 \geq 2.$$

If $x$ is a $k$-vertex for some $k \geq 5$,
then in the worst case scenario $x$ has 
$3k$ many chain-incident $2$-vertices
 since \textbf{C1} is reducible configurations by Lemma~\ref{lem reducible-critical}.
Therefore, the update charge of $x$ is
$$ch''(x) \geq ch(x) - (2 \times 3k)  = (13 \times k) - 30 -6k = 7k - 30 \geq 5.$$

This completes the proof. 
\end{proof}

Note that we have already shown that the updated charge of each vertex is non-negative, that is, $ch''(x) \geq 0$ for all $x \in V(\overrightarrow{M})$. However, since $\sum_{x \in V(\overrightarrow{M})} ch(x) \leq 2$, it is, thus, not possible for a set of vertices of $\overrightarrow{M}$ to have updated charge $3$ or more collectively. Using this, we are going arrive at a contradiction by showing that 
after performing the discharging according to (R1)-(R5), the updated total charge is
   $\sum_{x \in V(\overrightarrow{M})} ch''(x) \geq 3$. 
   However, the proof is long, and thus, for convenience of the readers, we divide the proof into several lemmas. 
\begin{lemma}\label{lem M is subcubic}
    The oriented graph $\overrightarrow{M}$ is subcubic. 
\end{lemma}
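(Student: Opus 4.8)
The plan is to show that if $\overrightarrow{M}$ has a vertex of degree at least $4$, then the total updated charge $\sum_{x} ch''(x)$ is at least $3$, contradicting the fact that $\sum_{x} ch(x) = -2\rho(\overrightarrow{M}) \leq 2$ and that charge is conserved by the discharging rules. The key observation, established in Lemma~\ref{lem updated charge}, is that every vertex of $\overrightarrow{M}$ has non-negative updated charge, and moreover every vertex of degree $4$ or more contributes at least $2$ to the total, while a $4^{\leq 9}$-vertex or a $5^+$-vertex contributes at least $3$. So if there is even one vertex of degree $\geq 5$, or one vertex of degree $4$ that is a $4^{\leq 9}$-vertex, we are immediately done.

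The remaining case to rule out is that $\overrightarrow{M}$ has at least one $4$-vertex and that every $4$-vertex is a $4^{10}$-vertex. First I would handle this by a short counting argument: a $4^{10}$-vertex $x$ has all four of its incident chains being long (two $3$-chains and two $2$-chains, or three $3$-chains and one $1$-chain — actually a $4^{10}$ must be $4_{3,3,3,1}$), so in particular $x$ has at least one $2$-vertex chain-adjacent to it, hence $\overrightarrow{M}$ is not $2$-regular-free in the naive sense; more importantly, I want to locate where the ``extra'' charge comes from. Since $ch''(x) \geq 2$ for a $4^{10}$-vertex and $ch''$ is non-negative everywhere else, the only way the total can fail to reach $3$ is if $\overrightarrow{M}$ has exactly one $4$-vertex (a $4^{10}$), no vertices of degree $\geq 5$, and every $3$-vertex has updated charge $0$ (so every $3$-vertex is a $3^{\geq 4}$-vertex, in fact a $3^4$ or $3^5$ or $3^6$-vertex) and every $2$-vertex has charge exactly $0$. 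But then I would argue that the single $4^{10}$-vertex $v$ being $4_{3,3,3,1}$ forces, via the reducible configurations (C7) and related lemmas, a contradiction: a $4^{10}$-vertex cannot coexist with the surrounding structure that the low-charge $3$-vertices impose, since (C7) forbids a $3^{\geq 5}$-vertex from being $1$-chain-adjacent to a $4^{10}$-vertex, and the $1$-chain neighbor of $v$ must be a $3^+$-vertex which, having charge $0$, must be a $3^{\geq 4}$-vertex, hence in particular a $3^{\geq 5}$-vertex unless it is a $3^4$-vertex — and I would chase the $3^4$ sub-case using (C8), (C10) and Lemma~\ref{lem reducible-potential}.

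Actually, a cleaner route: I would prove the contrapositive directly by supposing $\overrightarrow{M}$ has a vertex $v$ of degree $\geq 4$ and splitting on $\deg(v)$. If $\deg(v) \geq 5$ then $ch''(v) \geq 5 \geq 3$, done. If $\deg(v) = 4$ and $v$ is not a $4^{10}$-vertex, then $ch''(v) \geq 3$, done. If $\deg(v) = 4$ and $v$ is a $4^{10}$-vertex, then $v$ is a $4_{3,3,3,1}$-vertex, so it is $1$-chain-adjacent to some $3^+$-vertex $w$; by Lemma~\ref{lem reducible-potential} (or directly by the degree constraints and reducibility of (P9)) and the charge bound, $w$ has $ch''(w) \geq 1$, and since $ch''(x)\geq 0$ for all other $x$, we get $\sum ch''(x) \geq ch''(v) + ch''(w) \geq 2 + 1 = 3$. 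The final inequality $\sum ch''(x) \geq 3$ contradicts $\sum ch''(x) = \sum ch(x) \leq 2$.

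The main obstacle I anticipate is the last sub-case: verifying that the $1$-chain-neighbor $w$ of the $4^{10}$-vertex really does carry positive updated charge, i.e. that $w$ cannot be a $3^4$-, $3^5$- or $3^6$-vertex (each of which has $ch''(w) = 0$). This is exactly where reducible configurations (C7)—forbidding a $3^{\geq 5}$-vertex $1$-chain-adjacent to a $4^{10}$-vertex—and (C8), (C10) come in, together with the possibility that $w$ is not a $3$-vertex at all but another $4^+$-vertex (in which case $ch''(w)\geq 2$ and we are even better off). I expect the write-up to need a brief case analysis on the degree and chain-type of $w$, invoking Lemmas~\ref{lem reducible-critical} and~\ref{lem reducible-potential}, but no new ideas beyond what is already assembled; the bookkeeping of which reducible configuration kills which case is the delicate part.
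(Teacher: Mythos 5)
Your overall strategy (conservation of charge plus the bounds of Lemma~\ref{lem updated charge}, so that one extra unit of positive charge near the unique $4^{10}$-vertex finishes the proof) is the right framework, and your handling of $5^+$-vertices and $4^{\leq 9}$-vertices is exactly as in the paper. The gap is in the remaining case: you assert that a $4^{10}$-vertex must be a $4_{3,3,3,1}$-vertex and hence has a $1$-chain-adjacent $3^+$-vertex $w$. That is not justified. Since (by the reducibility of \textbf{C1}) each incident chain carries at most three internal $2$-vertices, a $4^{10}$-vertex can be either $4_{3,3,3,1}$ \emph{or} $4_{3,3,2,2}$, and nothing in the list of reducible configurations excludes the latter (configurations \textbf{C7}, \textbf{C8}, \textbf{C10} only concern $4^{10}$-vertices that are already assumed to be $1$-chain-adjacent to something, which is why their proofs may write ``$v$ must be a $4_{3,3,3,1}$-vertex''). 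In the $4_{3,3,2,2}$ case there is no $1$-chain-neighbor $w$, so the extra unit of charge you count, $ch''(w)\geq 1$, is simply not available, and the chain-adjacent $3^+$-vertices at the ends of the $2$- and $3$-chains can a priori be $3^4$-, $3^5$- or $3^6$-vertices of updated charge $0$; no local configuration rules this out.

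Closing that case is where the real work lies, and it is what the paper's proof does: after reducing to ``one $4^{10}$-vertex $u$ and all other $3^+$-vertices of type $3^4$, $3^5$, $3^6$'', it proves Facts 1--3 and Claims 1--3 eliminating $3^6$-, $3_{2,2,1}$-, $3_{3,2,0}$- and $3_{3,1,1}$-vertices (using Lemma~\ref{lm 3_221-1-3>4}, Lemma~\ref{lem 3_222}, Lemma~\ref{lem 3-3} and configurations such as \textbf{C15}, \textbf{C16}, \textbf{P1}, \textbf{P9}), then kills the $3^4$-vertices, and only then concludes that $u$ cannot exist because its chains need $3^+$ endpoints. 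Your sub-case analysis for the $4_{3,3,3,1}$ situation can in fact be completed (\textbf{C7} kills $w$ a $3^{\geq 5}$-vertex; for $w$ a $3^4$-vertex with zero charge, the nearby $3^5$-vertex is excluded by \textbf{C8} if adjacent, by \textbf{C10} if a $1$-chain-adjacent $3_{3,1,1}$-vertex, and by Lemma~\ref{lm 3_221-1-3>4} if a $1$-chain-adjacent $3_{2,2,1}$-vertex, while a $3_{3,2,0}$-vertex has no incident $1$-chain), but as written your proof is incomplete because the $4_{3,3,2,2}$ case is skipped, and handling it requires essentially the global elimination argument of the paper rather than a purely local look at one neighbor of the $4^{10}$-vertex.
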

\begin{proof}
 From Lemma~\ref{lem updated charge}, it is clear that $ch''(x) \geq 3$ if $x$ is a $5^+$-vertex or a 
 $4^{\leq 9}$-vertex, and $ch''(x) \geq 2$ if $x$ is a $4^{10}$-vertex. This means that if $\overrightarrow{M}$ is not subcubic, then it can have exactly one $4^+$ vertex $u$, and $u$ must necessarily be a $4^{10}$-vertex. Furthermore, we know that $ch''(x) \geq 1$ if $x$ is a $3^{\leq 3}$-vertex due to Lemma~\ref{lem updated charge}. Thus, apart from $u$, all vertices of $\overrightarrow{M}$ must be $3^k$-vertices, 
 where $k \in \{4,5,6\}$. 

\medskip 

\noindent \textbf{Fact 1:} A $3^4$-vertex $x$ has updated charge $1$ unless it has a chain-adjacent $3^5$-vertex at distance at most $2$. 

\medskip 

\noindent \textbf{Fact 2:} A $3^5$-vertex $x$ has updated charge $1$ if it has two chain-adjacent $3^{\leq 4}$-vertices at distance at most $2$. 

\medskip

Thus, we can assume that any $3^4$-vertex has exactly one chain-adjacent $3^5$-vertex at distance at most $2$ and any $3^5$-vertex has exactly one 
 chain adjacent $3^4$-vertex at distance at most $2$.

\medskip 

\noindent \textbf{Fact 3:} If $x$ is a $3^6$ vertex of $\overrightarrow{M}$, then it must be a $3_{3,2,1}$-vertex. 

\medskip

\noindent \textit{Proof of the fact.}  If $x$ is a $3^6$-vertex, then it cannot be a $3_{2,2,2}$-vertex due to 
        Lemma~\ref{lem 3_222}. Suppose that $x$ is a $3_{3,3,0}$-vertex adjacent to $y$. If $y$ is a $3$-vertex, then 
        $y$ has to be a $3^0$-vertex as \textbf{P1} is reducible due to Lemma~\ref{lem reducible-potential}. 
        Note that, 
         every $4^{10}$-vertex is adjacent to four distinct $2$-vertices 
        as \textbf{C1} is reducible due to 
        Lemma~\ref{lem reducible-critical}. So if $y$ is a $4^+$-vertex, it cannot be a $4^{10}$-vertex. Thus, by Lemma~\ref{lem updated charge} we can conclude that $ch''(y) \geq 3$. That means, the existence of a $3_{3,3,0}$-vertex $x$ 
        implies the existence of a vertex $y$ having $ch''(y) \geq 3$ leading to a contradiction. 
        So $x$ cannot be a $3_{3,3,0}$-vertex, and hence must be a $3_{3,2,1}$-vertex.~$\blacksquare$

\medskip  

\noindent \textbf{Claim 1:} $\overrightarrow{M}$ does not contain any $3^6$-vertex or any $3_{2,2,1}$-vertex. 

\medskip

\noindent \textit{Proof of the claim.} Let $x$ be a $3_{3,2,1}$-vertex or a $3_{2,2,1}$-vertex $1$-chain-adjacent to $y$. Notice that $y \neq u$ as \textbf{C7} is reducible due to Lemma~\ref{lem reducible-critical}. 
Moreover, $y$ must be a $3^{\leq 3}$-vertex due to Lemma~\ref{lm 3_221-1-3>4} and since \textbf{P9} is reducible due to Lemma~\ref{lem reducible-potential}. 
However, we have established earlier that $\overrightarrow{M}$ does not contain a $3^{\leq 3}$-vertex, which is a contradiction. Thus, we can conclude that $\overrightarrow{M}$ does not 
contain a $3_{3,2,1}$-vertex or a $3_{2,2,1}$-vertex. ~$\blacksquare$

\medskip  

\noindent \textbf{Claim 2:} $\overrightarrow{M}$ does not contain any  $3_{3,2,0}$-vertex. 

\medskip

\noindent \textit{Proof of the claim.} Let us assume that $x$ is a $3_{3,2,0}$-vertex adjacent to $y$. 
Observe that $y$ must be a $3^4$-vertex since the configuration \textbf{C4} is reducible from Lemma~\ref{lem reducible-critical}.
Note that, applying Lemma~\ref{lem 3-3} on $x$, without loss of generality,
will force a directed $6$-cycle $\overrightarrow{C}$ in 
$\overrightarrow{M}$ of the form $xyu_1zu_2u_3x$ where $u_1, u_2$ and $u_3$ 
are $2$-vertices. This, in particular, implies that $y$ is a 
$3_{3,1,0}$-vertex. 
In that case, $z$ must be a $3^4$-vertex. 
This is exactly the configuration \textbf{C15} which is reducible due to Lemma~\ref{lem reducible-critical}. Therefore, we can conclude that $\overrightarrow{M}$ does not contain a $3_{3,2,0}$-vertex.~$\blacksquare$

\medskip  

\noindent \textbf{Claim 3:} $\overrightarrow{M}$ does not contain any 
$3_{3,1,1}$-vertex. 

\medskip

\noindent \textit{Proof of the claim.} Let $x$ be a $3_{3,1,1}$-vertex $1$-chain adjacent to $y$ and $z$. Applying Lemma~\ref{lem 3-3} on $x$, without loss of generality,
will force a directed $6$-cycle $\overrightarrow{C}$ in 
$\overrightarrow{M}$ of the form $xu_1yu_2zu_3x$ where $u_1$ and $u_3$ 
are $2$-vertices. If $u_2$ is not a $2$-vertex, then it must be a $3^{\leq 3}$-vertex since \textbf{C1} is reducible due to Lemma~\ref{lem reducible-critical}, which is not possible. 
Thus, $u_2$ must also be a $2$-vertex. 
Since every $3^5$-vertex must have exactly one chain-adjacent 
$3^4$-vertex at distance at most $2$, without loss of generality, we may 
assume that $y$ is a $3^4$-vertex and $z$ is a $3^{\geq 4}$-vertex. This is impossible as \textbf{C16} is reducible by Lemmas~\ref{lem reducible-critical}. This means, there are no $3^5$-vertices in the graph $\overrightarrow{M}$.~$\blacksquare$

\medskip  

Therefore, the only $3^+$-vertices of $\overrightarrow{M}$ other than $u$ are $3^4$-vertices. However, if a $3^4$-vertex $x$ is neither adjacent nor $1$-adjacent to any $3^5$-vertex, then $ch''(x) \geq 1$, a contradiction. 
So $\overrightarrow{M}$ cannot have a $3^4$-vertex as well. That means, the only $3^+$ vertex in $\overrightarrow{M}$ is the $4^{10}$-vertex $u$. However, this is not possible since the existence of a $4^{10}$-vertex $u$ forces the existence of  four other $3^+$-vertices chain-adjacent to $u$. 
Thus, we can conclude that $\overrightarrow{M}$ must be a subcubic graph. 
\end{proof}

\begin{lemma}\label{lem no 3^6}
    The oriented graph $\overrightarrow{M}$ does not contain any $3^6$-vertex.  
\end{lemma}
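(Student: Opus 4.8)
The plan is to argue by contradiction. Suppose $\overrightarrow{M}$ contains a $3^6$-vertex $v^\ast$. By Lemma~\ref{lem M is subcubic} every vertex of $\overrightarrow{M}$ is a $2$- or a $3$-vertex. Moreover, $\overrightarrow{M}$ has no $3^0$- and no $3^1$-vertex: by Lemma~\ref{lem updated charge} such a vertex would carry updated charge at least $3$, while $\sum_{x\in V(\overrightarrow{M})} ch''(x) = -2\rho(\overrightarrow{M}) \leq 2$ and every $ch''(x)$ is nonnegative. So every $3$-vertex of $\overrightarrow{M}$ is a $3^k$-vertex with $k \in \{2,3,4,5,6\}$. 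The first step is to pin down the chain-type of $v^\ast$. By Lemma~\ref{lem 3_222} it is not a $3_{2,2,2}$-vertex; and if it were a $3_{3,3,0}$-vertex, the vertex $y$ adjacent to it along its $0$-chain would be a $3$-vertex, forced by \textbf{P1} to be a $3^0$-vertex, contradicting the previous sentence. Hence $v^\ast$ is a $3_{3,2,1}$-vertex.

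Next I would analyse the vertex $C$ that is $1$-chain-adjacent to $v^\ast$ along the chain of $v^\ast$ carrying a single $2$-vertex. It is a $3$-vertex, and by \textbf{P9} it is a $3^{\leq 2}$-vertex, hence (no $3^{\leq 1}$-vertex exists) a $3^2$-vertex. Exactly one of the three chains at $C$ carries a single $2$-vertex---the one shared with $v^\ast$---so $C$ is a $3_{1,1,0}$-vertex; write $B$ for the vertex adjacent to $C$ along its $0$-chain and $D$ for the endpoint of the remaining chain at $C$. Since $C$ is a $3^{\geq 1}$-vertex, \textbf{P1} forbids $B$ from being a $3^6$-vertex; since $C$ is a $3^2$-vertex adjacent to $B$ and $1$-chain-adjacent to the $3^6$-vertex $v^\ast$, \textbf{P4} forbids $B$ from being a $3^5$-vertex; so $B$ is a $3^2$-, $3^3$-, or $3^4$-vertex. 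By \textbf{P10}, $D$ is not a $3^6$-vertex. Now apply Lemma~\ref{lem 3-3} to $v^\ast$ with its $3$-chain (the one carrying three $2$-vertices) as the distinguished chain: alternative~(i) is impossible because the two remaining neighbours of $v^\ast$ are $2$-vertices and hence cannot be adjacent, so alternative~(ii) produces an oriented $6$-cycle; tracing neighbourhoods around this cycle forces the endpoint of the $2$-chain of $v^\ast$ to be adjacent to $C$, hence to equal $B$. Thus $B$ is a $3^{\geq 2}$-vertex carrying a $2$-chain towards $v^\ast$, and $v^\ast$, $B$, $C$ all lie on one $6$-cycle (which, since its number of forward arcs is even, is push-equivalent to a directed $6$-cycle).

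It remains to derive a contradiction from this local picture. If $B$ and $D$ were both $3^2$- or $3^3$-vertices, then $B$, $C$, $D$ would be three distinct vertices with $ch'' \geq 1$ each, forcing $\sum_x ch''(x) \geq 3$; so at least one of $B$, $D$ is a $3^{\geq 4}$-vertex (hence $B$ is a $3^4$-vertex, or $D$ is a $3^4$- or $3^5$-vertex). In the remaining branches I would refine the configuration by applying the structural lemmas (Lemmas~\ref{lem 3-3}, \ref{lem 3-2}, \ref{lem Ei+3-2}) at the $3$-vertices of this neighbourhood that carry a long enough chain---notably at the $3$-chain-endpoint of $v^\ast$ and at $D$---and checking, case by case, that each branch either contains one of the reducible configurations of Lemmas~\ref{lem reducible-critical} and \ref{lem reducible-potential} (in particular \textbf{C14}--\textbf{C16} and \textbf{P2}--\textbf{P10}) or the configuration of Lemma~\ref{lem Ei+3-2}, or else creates a $3^{\leq 1}$-vertex or a short cycle forbidden by Lemma~\ref{lm 5 cycle reduction} or Lemma~\ref{lm 3_221-1-3>4}. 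The main obstacle is precisely this final case analysis: although all structural and discharging tools are now available, exhausting the ways a $3^6$-vertex can be attached to its $3_{1,1,0}$ neighbour while avoiding every reducible configuration is delicate, and---as in the proof of Lemma~\ref{lem M is subcubic}---this is where essentially all of the remaining work lies.
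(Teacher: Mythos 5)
Your setup is sound and, up to notation, reproduces the first half of the paper's argument: excluding $3^0$- and $3^1$-vertices by charge, pinning $v^\ast$ down to a $3_{3,2,1}$-vertex, identifying its $1$-chain-neighbour $C$ as a $3^2$-vertex (hence a $3_{1,1,0}$-vertex; note, though, that such a vertex has \emph{two} $1$-chains, not one, contrary to your phrasing), and applying Lemma~\ref{lem 3-3} to the $3$-chain of $v^\ast$ to force the endpoint of its $2$-chain to be the $0$-chain neighbour $B$ of $C$; the exclusions via \textbf{P1}, \textbf{P4}, \textbf{P9}, \textbf{P10} are exactly the ones the paper uses.

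After that point, however, there is a genuine gap: you never derive the contradiction, and you explicitly defer ``essentially all of the remaining work'' to an unspecified local case analysis at $B$ and $D$. That is not how the proof closes, and it is not clear it can close that way. The paper's decisive step is global charge bookkeeping rather than further local branching. First, $ch''(C)\geq 1$ since $C$ is a $3^2$-vertex, and $ch''(B)\geq 1$ in \emph{every} case: if $B$ is a $3^4$-vertex it is forced to be a $3_{2,2,0}$-vertex (its chains are the $2$-chain to $v^\ast$, the $0$-chain to $C$, and hence one more $2$-chain), so it has no $3^5$-vertex within distance $2$ and Fact~1 of Lemma~\ref{lem M is subcubic} applies — so your branch ``$B$ is a $3^4$-vertex'' is not actually a surviving case. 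Since the total updated charge is at most $2$ and all charges are nonnegative, every other vertex then has updated charge $0$, which forces all remaining $3$-vertices to be $3^4$- or $3^5$-vertices paired as in Facts~1 and~2; Claims~1--3 from the proof of Lemma~\ref{lem M is subcubic} are then reused verbatim to eliminate every type of $3^5$-vertex; and finally the third chain-neighbour $w$ of $C$ cannot be a $3^{\geq 5}$-vertex, so $C$ donates only to $v^\ast$ and its two chain-incident $2$-vertices, giving $ch''(C)\geq 2$ and a total charge of at least $3>2$, the contradiction. Without this completion (or a fully worked-out alternative), your argument stops exactly where the lemma's actual content begins.
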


\begin{proof}
Suppose $\overrightarrow{M}$ contain a $3^6$-vertex $x$. We know   
that $x$ must be a $3_{3,2,1}$-vertex due to Fact 3 of Lemma~\ref{lem M is subcubic}. 
Assume that $x$ is $1$-chain-adjacent to  $y$ and 
$2$-chain-adjacent to $z$.  
Applying Lemma~\ref{lem 3-3} on $x$, without loss of generality,
will force a directed $6$-cycle $\overrightarrow{C}$ in 
$\overrightarrow{M}$ of the form $xu_1yzu_2u_3x$ where $u_1, u_2, u_3$ 
are $2$-vertices. 
Notice that, $y$ must be a $3^{2}$-vertex 
as \textbf{P9} is reducible due to Lemma~\ref{lem reducible-potential}, and by using Fact 1. 
Also $z$ must be a $3^{\leq 4}$-vertex as \textbf{P4} is reducible 
due to Lemma~\ref{lem reducible-potential}. If $z$ is a $3^4$-vertex then it is neither adjacent nor $1$-chain-adjacent to any $3^5$-vertex. 
Hence $ch''(y) \geq 1$ and $ch''(z) \geq 1$. Thus, $\overrightarrow{M}$ cannot contain any vertex other than $y$ and $z$ with updated charge at least $1$. 

Therefore, apart from $x$ and (may be) $y$, all vertices of $\overrightarrow{M}$ must be $3^k$-vertices, where $k \in \{4,5\}$. 
Moreover, we can assume that any $3^4$-vertex (except, may be, $z$) has exactly one chain-adjacent $3^5$-vertex at distance at most $2$ and any $3^5$-vertex has exactly one 
 chain adjacent $3^4$-vertex at distance at most $2$.

 \medskip

 \noindent \textbf{Claim 4:} $\overrightarrow{M}$ does not contain any $3^5$-vertex. 

 \medskip

 \noindent \textit{Proof of the claim.}  Using the exact same arguments used to prove Claim 1 (resp., Claim 2, Claim 3) from the proof of Lemma~\ref{lem M is subcubic}, one can show that 
 $\overrightarrow{M}$ does not contain any $3_{2,2,1}$-vertex (resp., $3_{3,2,0}$-vertex, $3_{3,1,1}$-vertex).~$\blacksquare$

\medskip

Suppose that $y$ is a $3^2$-vertex chain-adjacent to a $3^6$-vertex $x$, a $3^{\leq 4}$-vertex $z$, and another $3$-vertex $w$. 
We know that $w$ cannot be a $3^{\geq 5}$-vertex. Therefore, the updated charge of $y$ must be at least $2$, that is, $ch''(y) \geq 2$. This is a contradiction. 

Thus, we can conclude that $\overrightarrow{M}$ 
does not contain any $3^6$-vertex. 
\end{proof}

\begin{lemma}\label{lem no 3-311}
    The oriented graph $\overrightarrow{M}$ does not contain any $3_{3,1,1}$-vertex.  
\end{lemma}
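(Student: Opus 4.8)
The plan is to argue by contradiction, following the pattern of Lemmas~\ref{lem 3-3}, \ref{lem 3-2} and the proofs of Lemmas~\ref{lm 3_221-1-3>4} and \ref{lem 3_222}. Suppose $\overrightarrow{M}$ contains a $3_{3,1,1}$-vertex $x$; let $u_1$ (resp.\ $u_3$) be the $2$-vertex on the $1$-chain joining $x$ to a $3^+$-vertex $y$ (resp.\ $z$), and let $a_1,a_2,a_3$ be the $2$-vertices on the $3$-chain joining $x$ to a $3^+$-vertex $w$. By Lemmas~\ref{lem M is subcubic} and \ref{lem no 3^6}, $\overrightarrow{M}$ is subcubic with no $3^6$-vertex, so $x$ is a $3^5$-vertex, and since \textbf{C1} and \textbf{C2} are reducible, every $3$-vertex of $\overrightarrow{M}$ is a $3^{\le 5}$-vertex each of whose chains has at most three internal $2$-vertices.

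The first step would be to apply Lemma~\ref{lem 3-3} to $x$, which is indeed a $3$-vertex adjacent to $u_1$ and $u_3$ and $3$-chain-adjacent to $w$. Conclusion~(i) of that lemma would force $u_1\sim u_3$; but the two neighbours of $u_1$ (resp.\ $u_3$) are $x$ and the $3^+$-vertex $y$ (resp.\ $z$), so this cannot happen. (A short separate argument --- identifying $u_1$ with $u_3$ and invoking reducibility of long chains, \textbf{C1} --- also rules out $y=z$.) Hence conclusion~(ii) applies, and since $u_1$ (resp.\ $u_3$) has no neighbour besides $x$ and $y$ (resp.\ $z$), the resulting oriented $6$-cycle must have the form $\overrightarrow{C}=xu_1yu_2zu_3x$ for some vertex $u_2$; after pushing its internal $2$-vertices we may take $\overrightarrow{C}$ to be directed. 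If $u_2$ is not a $2$-vertex, it is a $3$-vertex adjacent to both $y$ and $z$, hence a $3_{0,0,t}$-vertex with $t\le 3$, and one reaches a contradiction either by applying Lemma~\ref{lem 3-3} to $u_2$ (when $t=3$) or by combining the short cycle through $u_2$ with the \textbf{P}-configurations of Lemma~\ref{lem reducible-potential}, with Lemma~\ref{lm 3_221-1-3>4}, and with a potential estimate against Lemma~\ref{lem gap-lemma}. So $u_2$ is a $2$-vertex; then $\overrightarrow{C}$ is a directed $6$-cycle whose three $3^+$-vertices $x,y,z$ are pairwise non-adjacent along it, with $x$ a $3^5$-vertex and, as each of $y,z$ is flanked on $\overrightarrow{C}$ by two distinct $2$-vertices, each of $y$ and $z$ a $3^{\ge 2}$-vertex.

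It then remains to eliminate $\overrightarrow{C}$. If at least one of $y,z$ is a $3^{\ge 4}$-vertex, then $\overrightarrow{C}$ is exactly configuration \textbf{C16}, which is reducible by Lemma~\ref{lem reducible-critical}, a contradiction. In the remaining case both $y$ and $z$ are $3_{1,1,t}$-vertices with $t\in\{0,1\}$ (that is, $3^2$- or $3^3$-vertices); here one chases further structure by re-applying Lemmas~\ref{lem 3-3} and \ref{lem 3-2} to $y$, to $z$, and to their remaining chain-neighbours, and argues that the configuration so produced must be a reducible \textbf{P}-configuration of Lemma~\ref{lem reducible-potential} (for instance \textbf{P4}, \textbf{P5}, \textbf{P9} or \textbf{P10}), or contain a pushably isomorphic copy of some $\overrightarrow{E}_i$ (handled as in Lemma~\ref{lem Ei+3-2}), or have potential small enough to contradict Lemma~\ref{lem gap-lemma}, or be directly pushably $3$-colourable. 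The main obstacle I anticipate is precisely this last case analysis: one must keep careful account of which additional $2$-vertices and arcs are created when Lemmas~\ref{lem 3-3} and \ref{lem 3-2} are re-applied around $\overrightarrow{C}$, so that in each sub-case either a named reducible configuration is exhibited or the bound on $\rho$ is tight enough to invoke Lemma~\ref{lem gap-lemma}.
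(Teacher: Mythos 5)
Your opening moves coincide with the paper's: apply Lemma~\ref{lem 3-3} to the $3_{3,1,1}$-vertex $x$ (conclusion~(i) being impossible because the two relevant neighbours are $2$-vertices), obtain the $6$-cycle $xu_1yu_2zu_3x$, and use \textbf{C16} to exclude $y$ or $z$ being a $3^{\geq 4}$-vertex once $u_2$ is known to be a $2$-vertex. But the two remaining cases --- $u_2$ a $3^{+}$-vertex, and both $y,z$ being $3^{2}$- or $3^{3}$-vertices --- are precisely where your proposal stops being a proof, and you miss the single idea the paper uses to close them: this lemma sits \emph{after} the discharging, so since $\sum_{v} ch''(v) = -2\rho(\overrightarrow{M}) \leq 2$ and every updated charge is nonnegative (Lemma~\ref{lem updated charge}), exhibiting one vertex with $ch'' \geq 3$ is an immediate contradiction. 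That is how the paper finishes both cases: if $u_2$ is a $3^{+}$-vertex, then (by subcubicity, \textbf{C1}, and Lemma~\ref{lem no 3^6}) $y,z$ are $3^{\leq 4}$-vertices and $u_2$ is a $3^{\leq 3}$-vertex whose donations are too small, giving $ch''(u_2) \geq 3$; and in the final case the contradiction is again a single vertex ($z$, as a $3^{2}$-vertex with at most two chain-adjacent $3^5$-vertices at distance at most $2$) with $ch''(z) \geq 3$. You never invoke the charge bookkeeping at all.

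Your substitutes do not carry the weight as stated. Applying Lemma~\ref{lem 3-3} to $u_2$ in your $t=3$ sub-case gains nothing, because its conclusion~(ii) can be witnessed by the very cycle $xu_1yu_2zu_3x$ you already have (a directed $6$-cycle has an even number of forward and of backward arcs), so no new structure and no potential drop is forced; the rest of that case is only a list of lemmas that ``might'' combine to a contradiction. Likewise, for the case where both $y$ and $z$ are $3^{2}$- or $3^{3}$-vertices you explicitly defer the analysis (``the main obstacle I anticipate''), offering a menu of \textbf{P}-configurations, Lemma~\ref{lem Ei+3-2}, and Lemma~\ref{lem gap-lemma} rather than an argument; nothing in the proposal verifies that any of these actually applies. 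So the gap is concrete: without the global charge argument (or a fully worked-out structural/potential replacement for it), neither hard case is closed, and the lemma is not proved.
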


\begin{proof}
Suppose $\overrightarrow{M}$ contains a $3_{3,1,1}$-vertex $x$ 
$1$-chain-adjacent to $y$ and $z$. 
Applying Lemma~\ref{lem 3-3} on $x$, without loss of generality,
will force a directed $6$-cycle $\overrightarrow{C}$ in 
$\overrightarrow{M}$ of the form $xu_1yu_2zu_3x$ where $u_1$ and $u_3$ 
are $2$-vertices. 
If $u_2$ is a $3^+$-vertex, then 
$y,z$ must be $3^{\leq 4}$-vertices and $u_2$ must be a 
$3^{\leq 3}$-vertex since \textbf{C1} is reducible due to 
Lemma~\ref{lem reducible-critical}. 
In particular, $u_2$ is a $3^{\leq 3}$-vertex 
having no chain-adjacent 
$3^5$-vertex at distance at most $2$. Therefore, we must have $ch''(u_2) \geq 3$, a contradiction. Thus, $u_2$ must be a $2$-vertex.

Since \textbf{C16} is reducible due to Lemma~\ref{lem reducible-critical}, without loss of generality we may assume that $y$ is a $3^{\leq 3}$-vertex and $z$ is a $3^2$-vertex. Note that, $z$ can have at most two 
chain-adjacent 
$3^5$-vertex at distance at most $2$. Thus, $ch''(z) \geq 3$, a contradiction. 
Thus, we can conclude that $\overrightarrow{M}$ 
does not contain any $3_{3,1,1}$-vertex. 
\end{proof}

\begin{lemma}\label{lem no 3-320}
    The oriented graph $\overrightarrow{M}$ does not contain any $3_{3,2,0}$-vertex.  
\end{lemma}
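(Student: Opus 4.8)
The plan is to argue by contradiction. Suppose $\overrightarrow{M}$ contains a $3_{3,2,0}$-vertex $x$; let $y$ be the vertex joined to $x$ by its chain with no internal $2$-vertex (so $xy\in A(\overrightarrow{M})$), let $x\,b_1\,b_2\,v_2$ be its chain with exactly two internal $2$-vertices $b_1,b_2$, and let $v_1$ be the far endpoint of its chain with three internal $2$-vertices. By Lemma~\ref{lem M is subcubic}, $\overrightarrow{M}$ is subcubic, so $y,v_1,v_2$ are $3$-vertices; by Lemma~\ref{lem updated charge} (no vertex has $ch''\ge 3$) there is no $3^{\le 1}$-vertex, by Lemma~\ref{lem no 3^6} no $3^6$-vertex, and since \textbf{C4} is reducible $y$ is not a $3^5$-vertex, so $y\in\{3^2,3^3,3^4\}$. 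I would then apply Lemma~\ref{lem 3-3} to $x$, with $y$ and $b_1$ playing the roles of its two neighbours and $v_1$ that of the far endpoint of its $3$-chain. As $b_1$ has only the neighbours $x,b_2$ and $y\notin\{x,b_2\}$, conclusion (i) of that lemma is impossible, so conclusion (ii) gives a $6$-cycle with an even number of forward and backward arcs; replacing $\overrightarrow{M}$ by a suitable push-equivalent copy this is a directed $6$-cycle, and tracing the neighbours of the $2$-vertices $b_1,b_2$ pins it down to the shape $x\,b_1\,b_2\,v_2\,w\,y\,x$, where $w$ is a vertex adjacent to both $v_2$ and $y$ and, being on the cycle, distinct from $x,y,b_1,b_2,v_2$.

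Suppose first that $w$ is a $3$-vertex. Then $y$ has the two chains $xy$ and $yw$ with no internal $2$-vertex, so $y\in\{3^2,3^3\}$; likewise $w$ has the two chains $wy$ and $wv_2$ with no internal $2$-vertex, so $w$ is a $3_{2,0,0}$- or a $3_{3,0,0}$-vertex, in particular $w\in\{3^2,3^3\}$. If moreover $v_2\in\{3^2,3^3\}$, then $y,w,v_2$ are three distinct vertices each with $ch''\ge 1$ by Lemma~\ref{lem updated charge}, so $\sum_{z\in V(\overrightarrow{M})}ch''(z)\ge 3$; but $\sum_z ch''(z)=\sum_z ch(z)=-2\rho(\overrightarrow{M})\le 2$ and every $ch''(z)\ge 0$, a contradiction. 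Hence $v_2\in\{3^4,3^5\}$. Now $w$ has exactly one chain carrying internal $2$-vertices (two or three of them) and no $1$-chain, and its only adjacent $3^{+}$-vertices are $y\notin\{3^5,3^6\}$ and $v_2$; so by the discharging rules (R1)--(R5) it donates at most $2\cdot 3+1=7$, whence $ch''(w)\ge 9-7=2$, and $ch''(w)+ch''(y)\ge 3$, again contradicting the charge budget.

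It remains to treat the case that $w$ is a $2$-vertex, so the cycle has exactly the three $2$-vertices $b_1,b_2,w$. Then $v_2$ is joined to $y$ by a chain with one internal $2$-vertex and to $x$ by a chain with two internal $2$-vertices, so it has at least three chain-incident $2$-vertices, and $v_2\in\{3^3,3^4,3^5\}$. If $v_2\in\{3^4,3^5\}$, then, traversed as $x\,y\,w\,v_2\,b_2\,b_1\,x$, the cycle is precisely the reducible configuration \textbf{C15} ($x$ a $3^5$-vertex, $y$ a $3^{\ge 2}$-vertex, $v_2$ a $3^{\ge 4}$-vertex), a contradiction. So $v_2$ is a $3^3$-vertex, and its three chains are the one to $x$, the one to $y$, and one with no internal $2$-vertex, to a $3$-vertex $v_2'$; that is, $v_2$ is a $3_{2,1,0}$-vertex. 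If $y$ is a $3^4$-vertex, the cycle is the reducible configuration \textbf{C14}, a contradiction. Finally, if $y\in\{3^2,3^3\}$ then $ch''(y)\ge 1$, while the $3_{2,1,0}$-vertex $v_2$ donates $2$ to each of its three chain-incident $2$-vertices ($b_1,b_2,w$) and, via (R1)--(R5), at most $1$ further unit — to $v_2'$ if $v_2'$ is a $3^5$-vertex, and nothing to the $3^5$-vertex $x$, to which $v_2$ is neither adjacent nor $1$-chain-adjacent — so $ch''(v_2)\ge 9-7=2$ and $ch''(v_2)+ch''(y)\ge 3$, the final contradiction.

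The main obstacle I anticipate is the application of Lemma~\ref{lem 3-3} to $x$ and the bookkeeping for the resulting $6$-cycle: one must check the lemma legitimately applies with the $2$-vertex $b_1$ in the role of a neighbour, that tracing the neighbours of $b_1,b_2$ really forces the shape $x\,b_1\,b_2\,v_2\,w\,y\,x$, and that none of the cycle's vertices collapses onto an internal vertex of the $3$-chain or onto $v_1$ (in the subcase $w$ a $2$-vertex this would contradict $w$ being a $2$-vertex or the distinctness of the cycle's vertices). Everything past the cycle is short: each remaining configuration is killed either by \textbf{C14}/\textbf{C15} or by the remark that a vertex carrying many incident $2$-vertices and at most one $3^5$- or $3^6$-neighbour cannot shed enough charge to respect the bound $\sum_z ch''(z)\le 2$.
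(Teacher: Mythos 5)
Your proposal is correct and takes essentially the same route as the paper: apply Lemma~\ref{lem 3-3} to the $3_{3,2,0}$-vertex to force the directed $6$-cycle through $x$, $y$ and $v_2$, rule out $v_2$ being a $3^{\geq 4}$-vertex and $y$ being a $3^4$-vertex via the reducible configurations \textbf{C15} and \textbf{C14}, and conclude with the charge count $ch''(y)+ch''(v_2)\geq 1+2=3$, exceeding the total budget of $2$. The only difference is that you also handle the case where the common neighbour $w$ of $y$ and $v_2$ is a $3$-vertex (the paper simply asserts that all three internal vertices of the cycle are $2$-vertices); your charge argument for that case is sound and, if anything, fills in a small step the paper leaves implicit.
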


\begin{proof}
Suppose $\overrightarrow{M}$ contains a $3_{3,2,0}$-vertex $x$ 
adjacent to $y$ and $2$-chain-adjacent to  $z$. 
Applying Lemma~\ref{lem 3-3} on $x$, without loss of generality,
will force a directed $6$-cycle $\overrightarrow{C}$ in 
$\overrightarrow{M}$ of the form $xyu_1zu_2u_3x$ where $u_1, u_2$ and $u_3$ 
are $2$-vertices. 

Since \textbf{C14} and \textbf{C15} are reducible due to Lemma~\ref{lem reducible-critical} and $\overrightarrow{M}$ does not contain any $3^{\leq 1}$-vertex, without loss of generality we may assume that $y$ is a $3^{\leq 3}$-vertex and $z$ is a $3^3$-vertex. 
Note that,  $z$ can have at most  one  
chain-adjacent 
$3^5$-vertex at distance at most $2$. Thus, $ch''(y) \geq 1$ (by Lemma~\ref{lem updated charge}), and $ch''(z) \geq 2$, a contradiction. 
Thus, we can conclude that $\overrightarrow{M}$ 
does not contain any $3_{3,2,0}$-vertex. 
\end{proof}

\begin{lemma}\label{lem no 3-2}
    The oriented graph $\overrightarrow{M}$ does not contain any 
    $3^2$-vertex.  
\end{lemma}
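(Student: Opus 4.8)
The plan is to reach a contradiction purely by charge counting, exploiting the structure of $\overrightarrow{M}$ pinned down in the preceding lemmas. First I would record the facts already available: $\overrightarrow{M}$ is subcubic (Lemma~\ref{lem M is subcubic}), has no $3^6$-vertex (Lemma~\ref{lem no 3^6}), no $3_{3,1,1}$-vertex (Lemma~\ref{lem no 3-311}) and no $3_{3,2,0}$-vertex (Lemma~\ref{lem no 3-320}). In particular every $3^5$-vertex of $\overrightarrow{M}$ is a $3_{2,2,1}$-vertex, and such a vertex has all three incident chains of length at least two, so it has no neighbour at all.

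Now suppose, for contradiction, that $\overrightarrow{M}$ contains a $3^2$-vertex $x$. Since $\overrightarrow{M}$ is subcubic and has no $3^6$-vertex, $x$ receives no charge and can lose charge only through rules (R1), (R3), (R5). Through (R1) it gives $2$ to each of its exactly two chain-incident $2$-vertices, a total of $4$. Rule (R3) cannot fire at $x$: any neighbour of $x$ that were a $3^5$-vertex would be a $3_{2,2,1}$-vertex, which has no neighbour — a contradiction; hence no neighbour of $x$ is a $3^5$-vertex. Rule (R5) only fires along $1$-chains leaving $x$; a $3^2$-vertex is either a $3_{2,0,0}$-vertex (no $1$-chain) or a $3_{1,1,0}$-vertex (at most two $1$-chains), so (R5) costs $x$ at most $2$. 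Hence $x$ loses at most $6$, and therefore $ch''(x)\ge ch(x)-6=(13\cdot 3-30)-6=3$.

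This contradicts Lemma~\ref{lem updated charge}: all updated charges are non-negative and $\sum_{y\in V(\overrightarrow{M})}ch''(y)=-2\rho(\overrightarrow{M})\le 2$, so no single vertex can carry charge $3$ or more. Consequently $\overrightarrow{M}$ has no $3^2$-vertex. The only delicate point — and the one easiest to overlook — is the justification that (R3) does not fire at $x$, which is exactly where the earlier exclusion of $3_{3,2,0}$-vertices (the unique $3^5$-type admitting a neighbour) is needed; everything else is a direct computation.
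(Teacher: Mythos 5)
Your proof is correct and follows essentially the same route as the paper: both use the earlier lemmas to conclude that the only surviving $3^5$-type is $3_{2,2,1}$ (hence no $3^5$-vertex is adjacent to the $3^2$-vertex $x$), bound the donations of $x$ by $4$ via (R1) plus at most $2$ via (R5), deduce $ch''(x)\geq 3$, and contradict the fact that all updated charges are non-negative while their total is at most $2$. The only cosmetic slip is calling a $3_{2,2,1}$-vertex one with ``no neighbour at all'' (it has $2$-vertex neighbours; what matters, and what you clearly intend, is that it has no adjacent $3^+$-vertex), which does not affect the argument.
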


\begin{proof}
Suppose $\overrightarrow{M}$ contains a $3^2$-vertex $x$ 
adjacent to $y$. Note that, since the only type of $3^5$-vertex that can be present in $\overrightarrow{M}$ is a $3_{2,2,1}$-vertex due to Lemma~\ref{lem no 3-311} and~\ref{lem no 3-320}, $y$ must be a $3^{\leq 4}$-vertex. Hence, $x$ can have at most two chain adjacent $3^5$-vertices. Thus, $ch''(x) \geq 3$, a contradiction. 
Hence, we can conclude that $\overrightarrow{M}$ 
does not contain any $3^2$-vertex. 
\end{proof}

\bigskip

\noindent \textit{Proof of Theorem~\ref{th push 3-critical density}.} 
Note that any  $3^5$-vertex in $\overrightarrow{M}$ must be a 
$3_{2,2,1}$-vertex due to 
Lemmas~\ref{lem no 3-311} and~\ref{lem no 3-320}. 
Also, every $3^5$-vertex must be $1$-chain-adjacent to a $3^3$-vertex 
since it cannot be a $3^{\geq 4}$-vertex due to Lemma~\ref{lm 3_221-1-3>4} and it cannot be 
a $3^{\leq 2}$ vertex as we have already established non-existence of 
such vertices in $\overrightarrow{M}$.
Moreover, observe that any 
$3^4$-vertex in $\overrightarrow{M}$ now has updated charge at least $1$. That means, every $3^{\leq 4}$-vertex has updated charge at least $1$. 

A $3_{3,1,0}$-vertex (or a $3_{2,1,1}$-vertex) $x$ must have two distinct chain-adjacent $3^{\leq 4}$-vertices $y$ and $z$ at distance at most $2$. This will imply 
$ch''(x) + ch''(y) + ch''(z) \geq 3$, a contradiction. Therefore, $\overrightarrow{M}$ does not have any $3_{3,1,0}$-vertex or any $3_{2,1,1}$-vertex.

A $3_{3,0,0}$-vertex $x$ does not have any chain-adjacent 
$3^5$-vertex at distance at most $2$, 
and thus has $ch''(x) \geq 3$, a contradiction. 
On the other hand, a $3_{2,1,0}$-vertex $x$ has at most one chain-adjacent $3^5$-vertex, and must have one adjacent 
$3^{\leq 4}$-vertex $y$. Notice that, $ch''(x) \geq 2$ and $ch''(y) \geq 1$, a contradiction. That means, 
$\overrightarrow{M}$ does not have any $3_{3,0,0}$-vertex and any $3_{2,1,0}$-vertex.

So the only types of $3^{\leq 4}$-vertices that $\overrightarrow{M}$ can have are $3_{2,2,0}$-vertices and $3_{1,1,1}$-vertices. 
Moreover, the only type of $3^{\geq 5}$-vertices $\overrightarrow{M}$ can have are $3_{2,2,1}$-vertices.

Suppose that $x$ is a $3_{2,2,0}$-vertex. 
Notice that it must be adjacent to another $3_{2,2,0}$-vertex $y$. 
If $x$ is $2$-chain-adjacent to a  $3^{\leq 4}$-vertex $z$, then we will have 
$ch''(x) + ch''(y) + ch''(z) \geq 3$, a contradiction. 
If $x$ is $2$-chain-adjacent to a $3_{2,2,1}$-vertex, which must be $1$-chain-adjacent to a $3^3$-vertex $z$ 
(distinct from $x,y$), then also we will have 
$ch''(x) + ch''(y) + ch''(z) \geq 3$, a contradiction. Therefore, $\overrightarrow{M}$ does not have any $3^4$-vertices at all.

Suppose that $x$ is a $3_{1,1,1}$-vertex. It must have a $1$-chain-adjacent $3_{1,1,1}$-vertex $y$ since 
 \textbf{P6} is reducible by 
 Lemma~\ref{lem reducible-potential}.  If $x$ (or $y$) has another $1$-chain-adjacent $3^{\leq 4}$-neighbor $z$, then 
 we will have 
$ch''(x) + ch''(y) + ch''(z) \geq 3$, a contradiction. 
If $x$ is $2$-chain-adjacent to a $3_{2,2,1}$-vertex, which must be $1$-chain-adjacent to a $3^3$-vertex $z$ 
(distinct from $x,y$), then also we will have 
$ch''(x) + ch''(y) + ch''(z) \geq 3$, a contradiction. 
Therefore, $\overrightarrow{M}$ does not have any $3^3$-vertices at all. This also implies that the only type of vertices in $\overrightarrow{M}$ is $3_{2,2,1}$-vertex. However, this is impossible by Lemma~\ref{lm 3_221-1-3>4}. Thus, 
$\overrightarrow{M}$ cannot exist. 
 \qed

\bibliographystyle{abbrv}
\bibliography{reference.bib}

\end{document}